\newcommand*\ie{i.e.,\xspace} 
\newsavebox{\tempbox}
\tikzset{
    >=latex,
    pics/rhom/.style n args = {4}{
        code = {
            \draw[
                thick, #3, fill = #3!50, fill opacity = 0.2, rounded corners = 3pt]
                (-#1, 0) -- (0, -#2) -- (#1, 0) -- (0, #2) -- cycle;
            \node[draw, fill, circle, inner sep = 0pt, minimum size = 0.05cm, #3] (#4) at (0, 0) {};
        }
    },
    linecol/.style n args = {3}{
        postaction = {
            decorate,
            decoration = {
                markings,
                mark = between positions 0 and 0.9 step 0.2pt with {
                    \pgfmathsetmacro\myval{
                        multiply(
                            divide(
                                \pgfkeysvalueof{/pgf/decoration/mark info/distance from start},
                                \pgfdecoratedpathlength
                            ),
                            120
                        )};
                    \pgfsetfillcolor{#3!\myval!#2};
                    \pgfpathcircle{\pgfpointorigin}{#1};
                    \pgfusepath{fill};
                },
                mark = at position 1 with {\arrow[#3]{latex}},
            }
        }
    },
    pics/universe-rect/.style n args = {2}{
        code = {
            \fill[black!10, rounded corners = 2pt] (-0.07, -0.07) rectangle (#1 + 0.07, #2 + 0.07);
            \draw[fill = white, thick] (0, 0) rectangle (#1, #2);
        }
    },
    pics/comm-rect/.style n args = {4}{
        code = {
            \draw[
                thick, #3, fill = #3!50, fill opacity = 0.2, rounded corners = 3pt]
                (-#1, -#2) rectangle (#1, #2);
            \node[draw, fill, circle, inner sep = 0pt, minimum size = 0.05cm, #3] (#4) at (0, 0) {};
        }
    },
    fancy-arrow/.style = {
        draw = black,
        thick,
        single arrow,
        right color = VioletRed!50,
        left color = ForestGreen!50,
        fill opacity = 0.7,
        single arrow head extend = 0.3cm,
        single arrow tip angle = 70,
        single arrow head indent = 0.15cm,
        minimum height = 2.4cm,
        minimum width = 0.8cm,
        shading angle = -45,
        rotate = -90
    }
}
\algnewcommand\algorithmicinput{\textbf{Input: }}
\algnewcommand\Input{\item[\algorithmicinput]}
\algnewcommand\algorithmicoutput{\textbf{Output: }}
\algnewcommand\Output{\item[\algorithmicoutput]}
\algnewcommand{\OneLineIf}[2]{
  \State \algorithmicif\ #1\ \algorithmicthen\ #2}
\newcommand{\ignore}[1]{}
\declaretheorem[name=Theorem]{theorem}
\declaretheorem[sibling=theorem,name=Lemma]{lemma}
\declaretheorem[sibling=theorem,name=Proposition]{proposition}
\declaretheorem[sibling=theorem,name=Claim]{claim}
\declaretheorem[sibling=theorem,name=Conjecture]{conjecture}
\declaretheorem[sibling=theorem,name=Assumption]{assumption}
\theoremstyle{definition}
\declaretheorem[sibling=theorem,name=Definition]{definition}
\theoremstyle{remark}
\def\th@example{%
  \thm@notefont{}
  \normalfont 
}
\def\th@definition{%
  \thm@notefont{}
  \normalfont 
}
\theoremstyle{example}
\renewcommand{\setminus}{\smallsetminus}
\renewcommand{\hat}{\widehat}
\renewcommand{\H}{\operatorname{H}}
\newcommand{\Dinf}{\operatorname{D}_\infty}
\DeclareMathOperator{\Exp}{\mathbb{E}}
\renewcommand{\epsilon}{\varepsilon}
\newcommand{\IP}{\mathrm{IP}}
\newcommand{\poly}{\mathrm{poly}}
\newcommand{\newmeasure}[2]{\newcommand{#1}{{\textup{\sffamily #2}}\xspace}}
\newmeasure{\C}{C}
\newcommand{\GapMaj}{\textsc{GapMaj}}
\newcommand{\GapHD}{\textsc{GapHD}}
\newcommand{\RandCC}{\mathsf{R}}
\newcommand{\QuantCC}{\mathsf{Q}}
\newcommand{\QuantPsCC}{\mathsf{QPs}}
\newcommand{\supp}{\mathrm{supp}}
\newcommand{\im}{\mathrm{Im}}
\newcommand{\artur}[1]{\textcolor{orange}{[Artur: #1]}}
\newcommand{\mika}[1]{\textcolor{blue}{[Mika: #1]}}
\newcommand{\weiqiang}[1]{\textcolor{red}{[Weiqiang: #1]}}
\newcommand{\dima}[1]{\textcolor{green!33!black}{[Dima: #1]}}
\newcommand{\anastasia}[1]{\textcolor{magenta}{[Anastasia: #1]}}
\newcommand{\artur}[1]{}
\newcommand{\mika}[1]{}
\newcommand{\weiqiang}[1]{}
\newcommand{\dima}[1]{}
\newcommand{\ziyi}[1]{}
\newcommand{\gilbert}[1]{}
\newcommand{\yaroslav}[1]{}
\newcommand{\anastasia}[1]{}
\let\OLDthebibliography\thebibliography
\renewcommand\thebibliography[1]{
  \OLDthebibliography{#1}
  \setlength{\parskip}{0pt}
  \setlength{\itemsep}{1.3ex}
}
\newcommand{\deficit}{\mathcal{D}}
\newcommand{\zo}{\{0,1\}}
\newcommand{\define}{\coloneqq}
\newcommand{\unif}{\mathsf{unif}}
\newcommand{\minentropy}{\mathrm{H}_\infty}
\newcommand{\NP}{\ensuremath{\mathsf{NP}}}
\newcommand{\TFNP}{\ensuremath{\mathsf{TFNP}}}
\newcommand{\BPP}{\ensuremath{\mathsf{BPP}}}
\newcommand{\FBPP}{\ensuremath{\mathsf{FBPP}}}
\newcommand{\FPs}{\ensuremath{\mathsf{FPs}}}
\begin{document}
\title{Pseudodeterministic Communication Complexity}

\author{
\begin{tabular}{ccc}
\hspace{10em} & & \hspace{10em} \\[-1em]
Mika G\"o\"os&
Nathaniel Harms&
Artur Riazanov \\[-1mm]
\small\slshape EPFL &
\small\slshape University of British Columbia &
\small\slshape EPFL \\[.5em]  
Anastasia Sofronova &
Dmitry Sokolov &
Weiqiang Yuan\\[-1mm]
\small\slshape EPFL &
\small\slshape EPFL \& Universit{\'{e}} de Montr{\'{e}}al &
\small\slshape EPFL 
\end{tabular} 
}

\date{\today}

\maketitle

\begin{abstract}
    \noindent
    We exhibit an $n$-bit partial function with randomized communication complexity $O(\log n)$ but such that any completion of this function into a total one requires randomized communication complexity $n^{\Omega(1)}$. In particular, this shows an exponential separation between randomized and \emph{pseudodeterministic} communication protocols. Previously, Gavinsky (2025) showed an
    analogous separation in the weaker model of parity decision trees. We use
    lifting techniques to extend his proof idea to communication complexity.
\end{abstract}

\vspace{3em}

\setlength{\cftbeforesecskip}{0pt}
\renewcommand\cftsecfont{\mdseries}
\renewcommand{\cftsecpagefont}{\normalfont}
\renewcommand{\cftsecleader}{\cftdotfill{\cftdotsep}}
\setcounter{tocdepth}{1}
\tableofcontents

\thispagestyle{empty}
\setcounter{page}{0}
\newpage

\section{Introduction}

Here is a simple question about the behavior of
randomized algorithms. A basic statistical task is to distinguish ``few'' vs.~``many''\!,
formalized by the \textsc{Gap Majority} problem:
\begin{align*}
\GapMaj(x) \define 
\begin{cases}
   1 &\text{if}\enspace |x| \geq \frac{2}{3}n, \\
   0 &\text{if}\enspace |x| \leq \frac{1}{3}n, \\
   \ast & \text{otherwise},
\end{cases}
\end{align*}
where $|x|$ denotes the number of 1s in $x \in \zo^n$ and $*$ indicates that we
put no requirement on the output of an algorithm (that is, $\GapMaj$ is a search problem
with 0 and 1 both being acceptable outputs). This problem is difficult to solve
deterministically: It requires $\Omega(n)$ queries for decision trees and parity decision trees,
and $\Omega(n)$ bits of deterministic communication when we turn it into a two-player problem by composing it with an appropriate gadget. For example, composing with XOR yields the 
\textsc{Gap Hamming Distance} problem given by~$\GapHD(x,y) \coloneqq \GapMaj(x
\oplus y)$. On the other hand, for \emph{randomized} decision trees (and the
other models as well), the cost of $\GapMaj$ is only $1$ query because the
algorithm can sample a random coordinate and output it; when $|x| \geq \tfrac23
n$ or $|x| \leq \tfrac13 n$ this is correct with probability at least $2/3$. Our simple question is:
\begin{center}
    \emph{What is the randomized algorithm doing on the $*$ inputs?}
\end{center}
On inputs with $|x|=n/2$, the sampling algorithm will output 0 or 1
with equal probability. But \emph{must it} do this, or can we ask that the
randomized algorithm produce consistent outputs for every input $x$? That is, can we ask that the algorithm computes some \emph{completion} of $\GapMaj$ into a total function: on any input $x$ it should output either value 0 with probability at least~$2/3$,
or value 1 with probability at least $2/3$. Such an algorithm is called
\emph{pseudodeterministic}. Generally, a pseudodeterministic algorithm is a
randomized algorithm that is required to output (with probability at least
$2/3$) a single consistent output for every input; this is desirable not only
for partial boolean functions, but for any problem where there is more than 1
acceptable output for each input, that is, for any search problem or relation. The
goal is to combine the efficiency of randomized algorithms with the consistency
of deterministic algorithms. So, are pseudodeterministic algorithms nearly as
efficient as randomized ones?

We show that, for communication protocols, the answer is \emph{no}: there is a partial boolean function with an efficient randomized protocol but such that every total completion of that function has large randomized complexity.

\begin{theorem}
\label{thm:main}
There is a partial communication problem $\{0,1\}^n\times\{0,1\}^n\to\{0,1,*\}$ with randomized communication complexity~$O(\log n)$
but pseudodeterministic communication complexity $\Omega(\sqrt{n})$.
\end{theorem}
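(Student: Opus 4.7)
The plan is to lift Gavinsky's parity-decision-tree separation to the communication setting by composing his partial function with a small gadget. Write $f: \{0,1\}^m \to \{0,1,*\}$ for Gavinsky's function, with randomized parity decision tree complexity $O(\log m)$ but pseudodeterministic parity decision tree complexity $m^{\Omega(1)}$. Fix a gadget $\gamma: \{0,1\}^b \times \{0,1\}^b \to \{0,1\}$---XOR ($b = 1$) to match the parity-decision-tree structure, or inner product with $b = O(\log m)$ to unlock the standard randomized Raz--McKenzie style lifting machinery. For input length $n \approx mb$, define the partial communication problem
\[
F(x,y) \;\coloneqq\; f\bigl(\gamma(x_1,y_1),\;\ldots,\;\gamma(x_m,y_m)\bigr).
\]

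\textbf{Upper bound.} Alice and Bob simulate Gavinsky's randomized PDT query-by-query. Each composed bit $\gamma(x_i,y_i)$ can be produced with $O(b)$ communication, and each parity query over a set $S\subseteq [m]$ is assembled by XOR-ing the $|S|$ single-coordinate answers (with $O(\log m)$ bits per query for routing). Gavinsky's $O(\log m)$-query algorithm then lifts to a randomized protocol of cost $O(\log n)$.

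\textbf{Lower bound strategy.} Pseudodeterministic communication complexity equals the minimum $\Rcc(G)$ over total completions $G$ of $F$, so it suffices to show $\Rcc(G) = \Omega(\sqrt n)$ for every such $G$. The key step is a \emph{pseudodeterministic lifting theorem}: from any cost-$c$ randomized protocol $\Pi$ computing a completion $G$ of $F$, extract a randomized parity decision tree of cost $\poly(c)$ that is itself \emph{pseudodeterministic} and completes $f$. Applied to the optimal $G$, Gavinsky's lower bound forces $\poly(c) \geq m^{\Omega(1)}$, and tuning parameters gives \Cref{thm:main}.

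\textbf{Main obstacle.} The hard part is producing a pseudodeterministic, and not merely randomized, PDT from a protocol for an arbitrary completion $G$. The naive reduction---on input $z \in \{0,1\}^m$, privately sample Bob's gadget blocks $y$, set Alice's blocks $x$ to satisfy $\gamma(x_i,y_i) = z_i$, and simulate $\Pi(x,y)$---does give a randomized algorithm for $f$, but the induced completion $g_\Pi(z) \coloneqq \Pr_{y,\Pi}[\Pi(x,y) = 1]$ may be bounded away from both $0$ and $1$ on $*$-inputs of $f$, violating pseudodeterminism. To overcome this I would try one of: (i) symmetrize $G$ under the gadget's stabilizer group and argue that a cost-$c$ protocol can be converted, with only a constant-factor blowup, into one computing a symmetric completion of the form $g\circ \gamma^m$, which then induces a pseudodeterministic PDT completion of $f$ for free; or (ii) sharpen the lifting to preserve protocol consistency across preimages of the same $z$, via an information-theoretic concentration argument showing that $\Pi(x,y)$ depends essentially only on $\gamma^m(x,y)$ once one averages over the gadget's internal degrees of freedom. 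Either route is where I expect the real technical work to lie.
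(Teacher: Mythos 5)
Your choice of function and gadget matches the paper ($F = \GapMaj\circ\IP_m^n$ with $m = O(\log n)$, and the upper bound via sampling a few coordinates and evaluating their gadgets is the same), and you correctly identify the central obstacle: a protocol for an arbitrary completion $G$ need not induce a pseudodeterministic object downstairs, because $G$ can answer differently on $(x,y)$ and $(x',y')$ with $\IP_m^n(x,y)=\IP_m^n(x',y')\in f^{-1}(*)$. But your proposed resolution --- a black-box ``pseudodeterministic lifting theorem'' that extracts a pseudodeterministic PDT of cost $\poly(c)$ from any cost-$c$ protocol for a completion --- is exactly the statement the paper declares open and does \emph{not} prove. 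Neither of your repair routes works as stated: for (i), a completion $G$ need not respect the gadget structure at all on $*$-inputs, and there is no known way to convert a protocol for $G$ into one for a gadget-symmetric completion with constant overhead (majority-voting over the stabilizer orbit gives back a merely randomized, not consistent, answer on ambiguous inputs); for (ii), the claim that $\Pi(x,y)$ ``depends essentially only on $\gamma^m(x,y)$'' is false in general --- the completion can legitimately depend on, say, $x_1$ directly wherever $f$ is undefined, and no averaging recovers consistency.

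The paper instead takes a white-box route: it does not invoke Gavinsky's PDT lower bound as a black box, but re-runs his \emph{proof} (the stage-by-stage argument that iteratively fixes $\sqrt n$ coordinates to $1$ while preserving a mostly-$0$ subdomain) directly at the level of communication protocols. Subcubes become rectangles, the role of ``unqueried variables'' is played by a min-entropy deficit $\deficit(X\times Y,F)$, leaves are post-processed with density-restoring partitions to restore spreadness of the $\IP$ gadget, and the birthday-paradox ``Closeness Lemma'' for the sprinkled-$1$s distribution is reproved using the pseudorandomness of $\IP$ on spread rectangles (and only holds in ``safe'' stages where the deficit is small). This is where the real technical work lies, and it is absent from your plan. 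As a smaller point, your upper-bound description of simulating an arbitrary parity query over a set $S$ with $O(b+\log m)$ communication is wrong in general (it costs $\Theta(|S|\cdot b)$); it is saved here only because $\GapMaj$ admits a $1$-query ordinary randomized decision tree.
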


Partial $n$-bit boolean functions with $\poly\log n$ randomized communication
cost are a type of ``\BPP{} search problems'': relations $R \subseteq
\zo^n \times \zo^n \times \zo^*$ computed by an efficient randomized
protocol. By analogy to Turing machine complexity classes, the class of ``$\BPP$
search problems'' would be called $\FBPP$. If we use $\FPs$ to denote the class
of relations admitting pseudodeterministic protocols with cost $\poly\log n$,
\cref{thm:main} is the first explicit example witnessing the separation
\begin{equation}
\label{eq:separation}
    \FPs \subsetneq \FBPP ,
\end{equation}
and it does so in the most restricted setting, where the output is only a single
bit. We remark, however, that \eqref{eq:separation} has a caveat: just like the
Turing machine analogue, the communication class $\FBPP$ has several natural
definitions, which are not
equivalent~\cite{Aaronson10,Goldreich11,Ilango23,Aaronson24}. For one of these
definitions (which requires the output to be efficiently verifiable), the inclusion~\eqref{eq:separation} holds in the \emph{opposite direction} for partial boolean
functions; our \cref{thm:main} gives an explicit witness for
\eqref{eq:separation} under the remaining definitions. Weaker versions of
\eqref{eq:separation}, which allow the outputs to be large and do not require an
explicit example, can be proved by counting; see \cref{sec:easy-separation} for
a discussion of these nuances.

Several recent works \cite{Goldwasser2020,GIPS21,Gavinsky2025,FGHH25,BHHLT25}
asked for lower bounds on pseudodeterministic communication complexity for
$\BPP$ search problems. Many of them proved versions of \eqref{eq:separation} for
weaker models of computation: Goldwasser, Grossman, Mohanty, and
Woodruff \cite{Goldwasser2020} proved it for one-way communication protocols;
Goldwasser, Impagliazzo, Pitassi, and Santhanam \cite{GIPS21} proved it for
decision trees; and Gavinsky~\cite{Gavinsky2025} proved it for parity decision
trees. Our result can thus be viewed as a qualitative strengthening of these
prior works (the upper bound in \cref{thm:main} holds even for randomized
non-adaptive decision trees).

For partial functions, the only known prior separation was due to Blondal,
Hatami, Hatami, Lalov, and Tritiak~\cite{BHHLT25}. They showed an
$O(1)$-vs-$\Omega(\log\log n)$ separation for \textsc{Gap Hamming Distance}, \ie
the XOR-lift $\GapMaj(x \oplus y)$. This would be the ideal function to witness the
separation \eqref{eq:separation} for partial functions, and we suspect it
exhibits the maximum possible separation:
\begin{conjecture} \label{conj:gapHD}
The pseudodeterministic communication complexity of $\GapHD$ is $\Omega(n)$.
\end{conjecture}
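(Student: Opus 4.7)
The plan is to combine a symmetrization argument with a quantitative Chakrabarti--Regev-style lower bound for $\GapHD$. Let $\Pi$ be any pseudodeterministic protocol of cost $c$ computing some total completion $f$ of $\GapHD$. Using shared randomness, the players first apply a common uniform permutation $\pi\in S_n$ and XOR-mask $r\in\zo^n$, replacing $(x,y)$ by $(\pi(x)\oplus r,\pi(y)\oplus r)$, and then run $\Pi$. Since the diagonal $S_n\ltimes\zo^n$-orbit of $(x,y)$ is exactly the Hamming shell $\{(x',y')\colon|x'\oplus y'|=|x\oplus y|\}$, the symmetrized protocol $\Pi'$ has cost $c$ and output distribution depending only on $k\define|x\oplus y|$. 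Writing $\beta(k)$ for the fraction of shell-$k$ inputs with $f=1$, we have $\beta(k)=0$ for $k\leq n/3$ and $\beta(k)=1$ for $k\geq 2n/3$.

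The argument then splits on the behaviour of $\beta$. \textbf{Case (a):} there exist shells $k_0<k_1$ with $k_1-k_0=O(\sqrt n)$, $\beta(k_0)\leq\epsilon$, and $\beta(k_1)\geq 1-\epsilon$ for some small absolute constant $\epsilon$. Then $\Pi'$ distinguishes the uniform distributions on these two shells with constant advantage, which is the classical $\sqrt n$-gap $\GapHD$ instance; the Chakrabarti--Regev lower bound then yields $c=\Omega(n)$. \textbf{Case (b):} no such pair exists, so $\beta$ is ``gradual'': on every $\sqrt n$-wide window it changes by less than $1-2\epsilon$. In this case there must be some shell $k^*$ with $\beta(k^*)\in[\epsilon,1-\epsilon]$, and $\Pi$ is forced to distinguish the two $f$-defined halves of this single shell with constant advantage. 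The plan here is to iterate symmetrization on the stabiliser of a random subset of $\Theta(\sqrt n)$ coordinates --- intuitively, restricting to a $\Theta(\sqrt n)$-bit subproblem on which $f$ again behaves like a Hamming-weight threshold, and reapplying Chakrabarti--Regev at that scale.

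The main obstacle is Case (b): within a single shell, the $f=0$ and $f=1$ subsets are combinatorially arbitrary --- they need not be $S_n$-invariant, nor need they exhibit any Hamming-distance structure --- so the symmetry-based Chakrabarti--Regev machinery does not transfer directly. Overcoming this likely requires a corruption-style bound tailored to the ``two halves of one shell'' structure, or alternatively a query-to-communication lifting theorem specifically for pseudodeterministic protocols that preserves the rigidity of per-input commitment. A weaker bound of $\Omega(n/\polylog n)$ looks within reach of the approach above, but pushing to the full $\Omega(n)$ is precisely where a genuinely new idea is needed, which is why the conjecture remains open despite the $\Omega(\log\log n)$ bound of \cite{BHHLT25}.
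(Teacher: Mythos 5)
The statement you are addressing is \cref{conj:gapHD}, which the paper states as an \emph{open conjecture}: there is no proof of it in the paper to compare against. Indeed the authors explicitly say that their main result (\cref{thm:main}) does \emph{not} establish the separation via $\GapHD$ -- they retreat to the lift $\GapMaj\circ\IP_m^n$ precisely because they could not handle the XOR gadget -- and the best known lower bound for $\GapHD$ is the $\Omega(\log\log n)$ of \cite{BHHLT25}. Your proposal is likewise not a proof, and you say so yourself: Case (b) is left unresolved, and Case (b) is exactly the crux. Once a single shell $\{(x,y):|x\oplus y|=k^*\}$ is split by the completion $f$ into two parts of comparable size, that split can be combinatorially arbitrary, so all of the permutation-and-translation symmetry that your reduction (and every Chakrabarti--Regev-style argument) exploits is gone. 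Your suggested remedy -- restricting to the stabiliser of $\Theta(\sqrt n)$ coordinates and reapplying the gap-Hamming bound ``at that scale'' -- could at best yield $\Omega(\sqrt n)$, not $\Omega(n)$, and there is no reason the restriction of $f$ to such a subcube is again threshold-like; within a shell, $f$ need not correlate with Hamming weight at all. So the claim that $\Omega(n/\polylog n)$ is ``within reach'' is not substantiated by anything written.

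Two smaller points. First, the symmetrization step itself is sound (the diagonal $S_n\ltimes\zo^n$ action is indeed transitive on each shell, and the acceptance probability of the symmetrized protocol depends only on $k$, with $\Pr[\Pi'=1]\le 1/3+\beta(k)$ and $\ge \tfrac23\beta(k)$), and Case (a) is plausible -- but it still relies on a generalized gap-Hamming lower bound for thresholds $k_0,k_1$ possibly far from $n/2$ and for distinguishers with small constant advantage rather than error $1/3$; this is believable but would need a citation or proof. Second, it is worth knowing how the paper proposes to attack the conjecture: improve Gavinsky's $\Omega(\sqrt n)$ parity-decision-tree bound for $\GapMaj$ to $\Omega(n)$, and then prove a pseudodeterministic lifting theorem for the XOR gadget. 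The whole difficulty of pseudodeterministic lifting is that the protocol may assign different outputs to inputs $(x,y)\ne(x',y')$ with $x\oplus y=x'\oplus y'$ -- which is a reformulation of exactly the ``arbitrary split of a shell'' obstruction you ran into. So you have correctly located the hard part of the problem, but you have not overcome it, and the conjecture remains open.
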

\noindent
This was one of the principal inspirations for our present work, but our
\cref{thm:main} ultimately does not prove the separation using $\GapHD$; we
instead use a slightly more complicated lift of $\GapMaj$, as explained next.

\subsection{Our techniques}
\label{sec:techniques}

Our proof extends the work of Gavinsky~\cite{Gavinsky2025}. He proved an
$\Omega(\sqrt n)$ query lower bound for pseudodeterministic parity decision
trees computing the $\GapMaj$ partial function. To turn this function into
a communication problem, we compose it with the \textsc{Inner Product} gadget on
$m=O(\log n)$ bits defined by~$\IP_m(a, b) \define \sum_{i=1}^m a_i b_i
\bmod 2$. This produces a communication problem on~$O(n\log n)$ bits:
\[
    \GapMaj \circ \IP_m^n (x_1, \dotsc, x_n, y_1, \dotsc, y_n)
    \define \GapMaj\left( \IP_m(x_1, y_1), \dotsc, \IP_m(x_n, y_n) \right).
\]
Here $x_i, y_i \in \zo^m$, Alice has all the $x_i$ inputs, and Bob has all the
$y_i$ inputs. This problem has randomized communication complexity $O(\log n)$ because the players can select $O(1)$ random indices $i \in [n]$ and solve $\IP_m(x_i,y_i)$ deterministically.

Standard randomized lifting theorems~\cite{BPPlifting,liftinglowdisc} show that for every partial function~$f\colon\{0,1\}^n\to\{0,1,*\}$ the randomized communication complexity of $f\circ\IP^n_m$ equals roughly the randomized query complexity of~$f$. It is open to prove such a general lifting theorem for pseudodeterministic complexity. The immediate obstacle to applying a randomized lifting theorem to a pseudodeterministic protocol for $f\circ\IP^n_m$ is that the protocol may produce different consistent outputs for inputs~$(x,y)$ and~$(x',y')$ such that $\IP^n_m(x,y)=\IP^n_m(x',y')\in f^{-1}(*)$. Since we cannot invoke known lifting theorems as a black box, we employ a \emph{white-box} approach: We use tools from lifting theory to adapt Gavinsky's argument for communication protocols. This adaptation introduces new technical challenges not present in the setting of parity decision trees. We give an overview of the proof in \cref{sec:overview}, including an exposition of Gavinsky's original argument.

\subsection{History and Motivations}

Pseudodeterministic algorithms were introduced by Gat and Goldwasser~\cite{GG11} (under the name \emph{Bellagio} algorithms), and independently by Huynh and Nordstr\"om~\cite{HN12} in the context of proof complexity (they called them \emph{consistent} algorithms). 
Pseudodeterministic algorithms have applications to cryptography and distributed computing, and have been
studied in several computational models including Turing machines
\cite{GGR13,OS17,LOS21,CLORS23}, decision trees \cite{GGR13,GIPS21,CDM23}, and
parity decision trees \cite{Gavinsky2025}. They are natural and interesting in their
own right, they provide an intermediary between randomized and deterministic
complexity, and they are also related to notions of \emph{replicability} in
machine learning and the natural sciences \cite{GIPS21,ILPS22,CMY23,BHHLT25}.

Aside from intrinsic interest in pseudodeterminism, \cite{Gavinsky2025} points
out its connection to structural questions about randomized communication (which
we discuss in the open problems below), and \cite{GIPS21} argues that
understanding pseudodeterministic communication is an important step towards
understanding the communication complexity of search problems (i.e., relations).
Whereas the communication complexity of \emph{functions} benefits from
connections to well-understood query complexity measures (e.g., sensitivity and
block-sensitivity) via lifting theorems, analogues of these
measures for search problems are not well understood. Pseudodeterministic
algorithms must compute \emph{some} function, and therefore serve as an
intermediate between functions and relations. 

Pseudodeterminism was introduced in \cite{HN12} for applications in proof
complexity. That paper shows that \emph{cutting plane proofs} of CNF
unsatisfiability can be turned into pseudodeterministic communication protocols for the \emph{falsified-clause} search problem where two players search for a clause that is not satisfied by a (distributed)
assignment of variables. Their lower bound for this communication problem
allowed them to prove time--space tradeoffs for cutting planes.

\subsection{Open Problems}
\label{sec:open}

\paragraph*{Improved quantitative bounds.}
Let us start by reiterating that we conjecture \textsc{Gap Hamming Distance} to require pseudodeterministic complexity $\Omega(n)$ (\cref{conj:gapHD}). Proving this seems to require two
steps. First, improve Gavinsky's lower bound of $\Omega(\sqrt n)$ for parity
decision trees computing $\GapMaj$ to $\Omega(n)$. Second, lift this to $\GapHD$. Another strategy may be to directly improve the current lower bound
for the pseudodeterministic communication complexity of $\GapHD$, which is $\Omega(\log\log n)$, and which uses very different techniques \cite{BHHLT25}.

It is important to note that $\textsc{Gap Hamming Distance}$ is, in a precise sense, the \emph{only} partial boolean
communication problem where we can hope to prove an $O(1)$-vs-$\poly(n)$
separation between randomized and pseudodeterministic communication. This is
because \emph{all} $n$-bit partial boolean matrices with randomized cost $O(1)$
are a submatrix of $\GapHD$ on $O(n)$ bits (with some
constant $\alpha < 1/2$ in place of $1/3$ in the gap)~\cite{LS09,FGHH25}. So a
pseudodeterministic lower bound for any of these matrices implies the same lower
bound for $\GapHD$.

\paragraph*{Separation for a $\TFNP$ problem.}
The most outstanding \emph{qualitative} problem that is left open is to prove pseudodeterministic lower bounds for a \emph{total $\NP$ search problem} ($\TFNP$). A relation $R\subseteq\{0,1\}^n\times\{0,1\}^n\times\{0,1\}^*$ is in (communication analogue of) $\TFNP$ if
\begin{itemize}[label=$-$,noitemsep]
\item $R$ is \emph{total}, meaning that for all inputs $(x,y)$ there is a solution $z$ such that $(x,y,z)\in R$; and
\item $R$ is \emph{efficiently checkable}, meaning that there is a deterministic verifier protocol $V$ of cost $\poly\log n$ such that $V(xz,yz)=1$ iff $(x,y,z)\in R$ for all $(x,y,z)$.
\end{itemize}
\begin{conjecture}
\label{conj:tfnp}
There exists some $\TFNP$ communication problem with randomized communication complexity $\poly\log n$ but pseudodeterministic complexity $n^{\Omega(1)}$.
\end{conjecture}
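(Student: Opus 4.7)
The plan is to execute a two-step reduction: first prove a $\TFNP$ query-complexity separation, then lift it to communication using the white-box machinery developed for \cref{thm:main}.

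For the base query problem I would look for a $\TFNP$ search problem $Q$ on $n$ bits admitting an $O(\polylog n)$ randomized parity decision tree but requiring $n^{\Omega(1)}$ pseudodeterministic queries. A first candidate is a pigeonhole/collision-style search: the input encodes a function $f\colon [N+1]\to[N]$ (so $n=\Theta(N\log N)$), a valid solution is a pair $(i,j)$ with $i\ne j$ and $f(i)=f(j)$. Totality and local verifiability are immediate, and the randomized cost is $\polylog N$ by birthday-style sampling. The hope is that on inputs where the set of valid collisions is highly symmetric under the natural permutation action on $[N+1]$, a pseudodeterministic algorithm is forced to commit to a globally distinguished witness, and the obstruction can be traced back to a Gavinsky-style $\GapMaj$ hardness.

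With such a $Q$ in hand, the second step is to compose with the $\IP_m$ gadget for $m=O(\log N)$ and adapt the proof of \cref{thm:main}. The lifted relation remains in $\TFNP$: the communication verifier reconstructs $f(i),f(j)$ from the gadget encodings using $O(m)$ bits of deterministic communication per claimed index, and then invokes the query verifier locally. The lower bound itself should follow the same white-box template as the partial-function case: treat the pseudodeterministic transcript as a rectangle cover, use the structural properties of $\IP_m$ (high discrepancy, density-restoring partitions, etc.) to extract a pseudodeterministic parity decision tree for $Q$, and derive a contradiction with the query lower bound.

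The principal obstacle is the first step. The hardness of pseudodeterministic $\GapMaj$ crucially exploits the symmetry between only two outputs $\{0,1\}$ on middle-weight inputs, which makes Gavinsky's entropy/bias-deficiency calculation tractable; in a $\TFNP$ problem the solution space is combinatorially much richer and constrained by local verifiability, so it is not clear that the same argument extends — the ``indistinguishability adversary'' now needs to fool the protocol between many possible witnesses rather than between $0$ and $1$. A secondary challenge is that, since the lifting step is not black-box, the adaptation must track the verifier through the lifting: the parity decision tree extracted from the communication protocol must output both a query-level witness and the certificates needed for local verification, which may force working with a stronger notion of pseudodeterministic parity decision tree already in the query step.
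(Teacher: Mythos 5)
The statement you are trying to prove is \cref{conj:tfnp}, which the paper states explicitly as an \emph{open problem}; the paper contains no proof of it, and indeed describes it as ``the most outstanding qualitative problem that is left open.'' So there is nothing in the paper to compare your argument against, and your proposal---which you yourself frame as a plan with unresolved obstacles---does not constitute a proof.

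Two concrete points about the gaps. First, your ``principal obstacle'' (the query-level $\TFNP$ separation) is actually the part that is already known: the paper notes that the analogous separation for query complexity was the main result of \cite{GIPS21}, so you do not need to invent a new base problem, and in particular you do not need to establish pseudodeterministic hardness of your pigeonhole/collision candidate from scratch (which you have not done; ``the hope is that\dots'' and ``the obstruction can be traced back to\dots'' are not arguments). Second, the step you treat as routine---``the lower bound itself should follow the same white-box template''---is precisely where the open problem lives. The paper's white-box argument is tightly coupled to the structure of $\GapMaj$: the entire stage process maintains a single scalar invariant (the current rectangle is mostly $0$-valued) and derives a contradiction by forcing more than $\tfrac23 n$ gadgets to output $1$. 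For a $\TFNP$ relation with a rich witness set, there is no analogous invariant to maintain, and no general pseudodeterministic lifting theorem exists (the paper states this is open even for partial functions). You correctly sense this difficulty in your last paragraph, but sensing it is not the same as overcoming it. As written, the proposal establishes nothing beyond what the paper already discusses in its open-problems section.
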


The analogous separation for query complexity was the main result of the paper~\cite{GIPS21}. The communication analogue of $\TFNP$ has been studied explicitly in, e.g.,~\cite{Goos2019,Buss23,Goos2025}, but, implicitly, $\TFNP$ problems are ubiquitous in communication complexity: this includes all (monotone) Karchmer--Wigderson games as well as the aforementioned falsified-clause search problems, which arise when applying communication lower bounds to proof complexity. The survey~\cite{Rezende2022} explains these connections and more: $\TFNP$ problems give a unified lens to study the interconnections between communication protocols, propositional proofs, and boolean circuits.

\paragraph*{Separation for a $\BPP$-verifiable problem.}
In fact, it is still open to prove a weaker result than \cref{conj:tfnp}. If we
define $\BPP$ search problems for communication complexity by adapting the
definition of Goldreich \cite{Goldreich11} for Turing machines, we get the class
of relations $R \subseteq \zo^n \times \zo^n \times \zo^*$ such that
\begin{itemize}[label=$-$,noitemsep]
\item There is a randomized $\poly\log (n)$ cost protocol solving $R$; and
\item There is a randomized $\poly\log (n)$ cost \emph{verifier} protocol $V$:
$\Pr[ V(xz,yz) = 1 ] \geq 2/3$ if $(x,y,z) \in R$ and $\Pr[ V(xz,yz) = 1 ] \leq 1/3$
otherwise, for all $(x,y,z)$.
\end{itemize}
We can then replace the $\TFNP$ requirement in \cref{conj:tfnp} with these
$\BPP$-verifiable problems and the conjecture is still open.

Our result implies this separation for a weaker, more specialized type of
verification. The partial function $\GapMaj \circ \IP$ does \emph{not} satisfy
the above definition, because the verifier cannot exist (we could use it to
solve \emph{exact} majority efficiently, which cannot be done). However, we can
define the relation $R \subseteq \zo^{nm} \times \zo^{nm} \times [n]$ where for
each $x,y$ the valid outputs are the numbers $t \in [n]$ that are within $\pm
n/10$ of the number of 1-valued $\IP_m$ gadgets. This relation can be solved in
cost $O(m) = O(\log n)$ the same way as $\GapMaj \circ \IP_m$. It can also be
\emph{sort-of} verified: given $(x,y,\hat t)$, if the correct number of 1-valued
gadgets is $t$, there is a randomized verifier $V$ which satisfies
\begin{itemize}[label=$-$,noitemsep]
    \item If $|t-\hat t| \leq n/10$ (\ie $(x,y,\hat t) \in R$) then $\Pr[ V(x,y,\hat t) = 1 ] \geq 2/3$; and
    \item If $|t-\hat t| \geq n/9$ then $\Pr[ V(x,y,\hat t) = 1 ] \leq 1/3$.
\end{itemize}
An efficient pseudodeterministic protocol for this estimation problem could be
used as a pseudodeterministic protocol for $\GapMaj \circ \IP_m$, so
\cref{thm:main} shows that it cannot exist. However, the necessary gap between
$n/10$ and $n/9$ means that this does not satisfy the definition of
\cite{Goldreich11}.

\paragraph*{Structure of communication protocols.}
Communication complexity is closely related to the size of monochromatic
rectangles within the communication matrices: efficient deterministic protocols
imply large monochromatic rectangles inside the matrix. The question of whether the same is true for randomized protocols that compute a total boolean function was raised in~\cite{Goos2018}. A striking version of this question is obtained for constant-cost protocols: Chattopadhyay, Lovett, and Vinyals \cite{CLV19}
and Hambardzumyan, Hatami, and Hatami \cite{HHH23} conjecture that
\begin{conjecture}[\cite{CLV19,HHH23}]
\label{conj:large-monochromatic-rectangles}
There exists a function $\eta$ such that every $N \times N$ boolean matrix with
randomized communication cost $c$ has an $\eta(c) \cdot N \times \eta(c) \cdot
N$ monochromatic rectangle.
\end{conjecture}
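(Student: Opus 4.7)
The plan is to attack \cref{conj:large-monochromatic-rectangles} by extracting a genuinely monochromatic balanced submatrix from the almost-monochromatic rectangles produced by any randomized protocol. First I would unpack the randomized complexity assumption: a public-coin protocol of cost $c$ with error $1/3$, by fixing the coins appropriately, gives a deterministic protocol tree of depth $c$ which is correct on at least $\tfrac{2}{3} N^2$ entries. This tree partitions $[N] \times [N]$ into at most $2^c$ combinatorial rectangles; pigeonholing over leaves and over output labels, followed by a standard balancing step that trades rectangle area for balance between the two sides, should produce a rectangle $R_0 = A_0 \times B_0$ with $|A_0|, |B_0| \geq 2^{-O(c)}\, N$ on which a single bit $b$ appears on a $(1-\delta)$-fraction of entries, for some small absolute constant $\delta$.

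Second, I would attempt a ``purification'' step: convert $R_0$ into a fully monochromatic rectangle of size $\eta(c) N \times \eta(c) N$. The naive iterative pruning of rows and columns that contain too many $\overline{b}$-entries loses roughly a square-root factor per stage and does not suffice. A more promising approach is \emph{dependent random choice}: sample a set $S$ of $t$ random columns of $R_0$ and consider the set $A$ of rows whose restriction to $S$ is identically $b$. A standard calculation shows $|A| \geq \delta^{t}\,|A_0|$ in expectation, and a typical row in $A$ has very few $\overline{b}$-entries, allowing one to iterate on the column side and extract a biclique inside the $b$-bipartite graph associated to $M|_{R_0}$. One then needs to calibrate $t$ against the counterexamples of Chattopadhyay, Lovett, and Vinyals \cite{CLV19} in order to get a clean dependence on $c$ alone rather than on $N$.

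The main obstacle will be the purification step. Dependent random choice and its variants lose factors that depend on $1/\delta$ in a way that typically scales poly-logarithmically in $N$, whereas the conjecture insists on a dependence purely in $c$. For arbitrary boolean matrices with only a randomized cost bound, and no structural bound such as bounded VC dimension or stability, the almost-monochromatic rectangle produced in the first step can be decorated with boundary noise in a way that plausibly destroys every $\Omega_c(N) \times \Omega_c(N)$ subrectangle of a fixed label. Circumventing this seems to require a genuinely new structural tool that exploits \emph{totality} of $M$: our \cref{thm:main} shows that a pseudodeterministic-style consistency argument alone cannot carry the load, since it fails already for partial matrices, so any such tool must leverage the fact that $M$ assigns a definite $0/1$ value everywhere. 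Because of this, I expect any honest attack to require either a sharper corruption bound tailored to total matrices, or a completely new regularity decomposition for randomized-protocol-induced rectangle covers.
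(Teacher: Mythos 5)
The statement you are addressing is \cref{conj:large-monochromatic-rectangles}, which the paper states as an \emph{open conjecture} attributed to \cite{CLV19,HHH23}; the paper contains no proof of it, and your proposal does not contain one either --- by your own account, the ``purification'' step is left as an obstacle you do not overcome. So this is not a case of a correct proof taking a different route: there is a genuine gap, and it sits exactly where the entire difficulty of the conjecture lives. Your first step is fine and standard: fixing the public coins and pigeonholing over the at most $2^c$ leaves of the resulting deterministic tree does yield a rectangle $A_0 \times B_0$ with $|A_0|,|B_0| \geq 2^{-O(c)}N$ that is $(1-\delta)$-monochromatic for a constant $\delta$. But passing from ``$(1-\delta)$-monochromatic'' to ``monochromatic'' while keeping both sides of size $\eta(c)\cdot N$ is precisely what nobody knows how to do, and your dependent-random-choice sketch does not close it. Sampling $t$ columns and keeping the rows that read $b$ on all of them gives a row set $A$ of density roughly $(1-\delta)^t$ (not $\delta^t$, incidentally), but you then need a set of $\Omega_c(N)$ columns on which \emph{every} row of $A$ is $b$; a union bound over the exceptional entries of the rows of $A$ can eliminate a constant fraction of columns per row, and with $|A| = \Omega_c(N)$ rows this can leave nothing. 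Making each surviving row literally exception-free forces $t$ to grow with $N$, which is exactly the $N$-dependence the conjecture forbids. A generic $(1-\delta)$-monochromatic $M\times M$ bipartite graph need not contain any $\epsilon M\times \epsilon M$ complete bipartite subgraph for constant $\epsilon$, so some structural input beyond the error bound is unavoidable.

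Two further points. First, your appeal to \cref{thm:main} is backwards as stated: the theorem shows that a \emph{partial} function can have small randomized cost while every completion is hard, which is consistent with (and, via \cite{FF81,FGHH25}, actually a \emph{consequence} one would expect from) the conjecture being true for total matrices; it does not tell you what tool a proof must use. Second, the paper already records the state of the art on this conjecture: the only known progress is finding large monochromatic rectangles in matrices of bounded $\gamma_2$-norm \cite{Balla2025}, which uses structural inputs quite different from error amplification or dependent random choice. If you want to contribute here, the honest framing is that you have rederived the easy half (the almost-monochromatic rectangle) and identified, but not resolved, the hard half.
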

Completions of the $\GapHD$ matrix \emph{cannot} have such large monochromatic rectangles~\cite{FF81},
and therefore \cref{conj:large-monochromatic-rectangles} already implies lower
bounds on the pseudodeterministic cost of $\GapHD$, as noted in
\cite{FGHH25}. Gavinsky \cite{Gavinsky2025} points out similar implications for parity
decision trees, with monochromatic affine subspaces in place of rectangles. The best progress so far towards \cref{conj:large-monochromatic-rectangles} is to find such monochromatic rectangles in matrices of bounded $\gamma_2$-norm~\cite{Balla2025}.

\section{Warm-Up and Proof Overview}
\label{sec:overview}

Our proof starts with the lower bound for parity decision trees by
Gavinsky~\cite{Gavinsky2025} and upgrades it to a communication lower bound
using techniques from query-to-communication lifting~\cite{GLMWZ16,BPPlifting}. As a
warm-up to our main proof, we will present an exposition of Gavinsky's argument,
but simplified to the special case of decision trees.

Of course, for decision trees (rather than parity decision trees), there is a
simpler proof of a superior $\Theta(n)$ bound: in any completion $f$ of
$\GapMaj$, find the input $x \in f^{-1}(0)$ with largest weight; then, observe
that by fixing the 1-valued coordinates of $x$ and letting the 0-valued
coordinates vary, we obtain the \textsc{Or} problem on $\Omega(n)$ bits, which
implies a lower bound of $\Omega(n)$ queries. However, the more complicated
warm-up proof gives a technique that can be lifted to communication complexity.

\subsection{Warm-Up: Lower Bound for Decision Trees}
\label{section:lb-dt}

\begin{theorem}[Simplified version of \cite{Gavinsky2025}]
    \label{thm:query-main}
    Any completion $f\colon \{0, 1\}^n \rightarrow \{0, 1\}$ of  $\GapMaj$ requires randomized query complexity $\Omega(\sqrt{n})$.
\end{theorem}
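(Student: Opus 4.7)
My plan is to invoke Yao's minimax principle: for each completion $f$ of $\GapMaj$, construct a hard distribution $\mu_f$ on inputs such that every deterministic depth-$d$ decision tree with $d = o(\sqrt{n})$ errs with probability exceeding $1/3$ under $\mu_f$. The design of $\mu_f$ splits according to the ``slice profile'' $h(w) := \Pr_{|x|=w}[f(x)=1]$, which must transition from $h(n/3)=0$ to $h(2n/3)=1$ across the middle.

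In the \emph{sharp-transition} case --- when there exist weights $w_0 < w_1$ with gap $\Delta := w_1 - w_0 \lesssim \sqrt{n}$ and $h(w_0) < 1/3$, $h(w_1) > 2/3$ --- I take $\mu_f$ to be an equal mixture of the uniform distributions on the two slices, with intended answer $b$ on $\{|x|=w_b\}$. A hypergeometric comparison shows that the $d$-wise marginals of these slices differ in total variation by at most $O(d \Delta / n)$, so a depth-$d$ tree has distinguishing advantage $o(1)$ when $d \Delta \ll n$, i.e., $d \ll n/\Delta \geq \sqrt{n}$. In the \emph{smooth} case, every such pair has $\Delta \gtrsim \sqrt{n}$, which forces the existence of some slice $\{|x|=w^\star\}$ on which $h(w^\star) \in [1/3,2/3]$; here $\mu_f$ is supported on that slice (possibly on pairs of close inputs with opposite $f$-values), and the goal is to show that no depth-$d$ tree can give consistent outputs because any cube of codimension $d \ll \sqrt{n}$ intersects the slice in a large set on which $f$ cannot be too concentrated.

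The \emph{main obstacle} is the smooth case, namely proving that a completion whose slice profile is spread out cannot avoid having many cube-intersected-with-slice regions in which $f$ is balanced. This is a nontrivial combinatorial claim about Boolean functions on a Hamming slice whose average is close to $1/2$: intuitively they cannot be ``too structured,'' but quantifying this against arbitrary adversarial $f$ is where the real work lies. I expect this is exactly where Gavinsky's clever combinatorial argument sits, and it is also the step whose lifting to communication complexity will demand the most care. As a sanity check and fallback I would also try the polynomial method: writing $p(x) := \Pr[T(x)=1]$ as a degree-$d$ multilinear polynomial taking near-Boolean values at every input reduces the theorem to showing that the ``near-Boolean approximate degree'' of any completion of $\GapMaj$ is $\Omega(\sqrt{n})$. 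This cannot inherit from the approximate degree of the partial function $\GapMaj$ itself (which is only $O(1)$, witnessed by $|x|/n$), so it must exploit totality of the completion, likely via a variance polynomial $v(w) := \Exp_{|x|=w}[p(x)(1-p(x))]$ of degree $2d$ that is forced to be small at every integer $w$.
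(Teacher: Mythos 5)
Your sharp-transition case is essentially sound (modulo one fixable point: with error $1/3$ and thresholds $1/3,2/3$ the two slices only force output-$0$ probabilities of $\ge 4/9$ versus $\le 5/9$, which do not conflict; you need to first amplify the error of the algorithm, which is legitimate since a completion is a total function). The genuine gap is the smooth case, and it is not merely that the "real work" remains to be done there --- the claim you propose to prove is false. Consider the completion $f(x)=x_1$ for $n/3<|x|<2n/3$, with $f=0$ on low weights and $f=1$ on high weights. Its slice profile is $h(w)=1/2$ throughout the middle, so it falls squarely into your smooth case, yet on every middle slice $f$ is computed \emph{exactly} by a depth-$1$ decision tree: the codimension-$1$ cube $\{x_1=1\}$ intersects each slice in a huge set on which $f$ is constant. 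So no distribution supported on a single slice (or on close pairs within a slice) can be hard for this $f$, and "balanced on a slice implies balanced on cube-slice intersections" fails at depth $1$. This completion does require $\Omega(n)$ queries, but only because its restriction to $\{x_1=1\}$ is the exact threshold $\mathbbm{1}[|x|>n/3]$; the hardness lives in a sharp transition that is invisible in the global profile $h$ and only appears after restricting to a subcube. Your dichotomy on the global slice profile is therefore too coarse, and any repair must recurse into subcubes.

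That recursion is exactly what the paper's proof does, and it avoids slice profiles entirely. It maintains a subcube $X$ on which at least half the inputs are $0$-inputs of $f$, and in each of $\approx 0.9\sqrt n$ stages it uses Yao's principle on the mixture $\tfrac12(\unif(X)+\sigma)$, where $\sigma$ plants $\sqrt n$ random $1$s into the free coordinates; a birthday-paradox "Closeness Lemma" shows a depth-$o(\sqrt n)$ tree cannot tell $\sigma$ from $\unif(X)$, so one can descend into a $0$-leaf \emph{and} additionally fix $\sqrt n$ fresh coordinates to $1$ while keeping the subcube majority-$0$. After all stages, $0.8n$ coordinates are fixed to $1$, forcing $f\equiv 1$ on a subcube that is still half $0$s --- a contradiction. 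Your polynomial-method fallback (lower-bounding a "near-Boolean approximate degree" of every completion) is an interesting alternative target but is likewise only a sketch; as written, neither branch of your argument closes the proof.
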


\paragraph{Proof sketch.}
The proof has two parts, encapsulated in the \emph{Closeness Lemma}
and the \emph{Stage Lemma}. The Closeness Lemma formalizes the key
observation that a shallow decision tree cannot distinguish between the uniform
distribution over $\{0,1\}^n$ and the distribution where the bits are slightly
biased towards $1$. Specifically, suppose $\bm u \sim \{0,1\}^n$ is uniform random and $\bm x$ is
obtained from $\bm u$ by setting to $1$ a uniformly random set of $\sqrt{n}$
coordinates; that is, we ``sprinkle'' some $\sqrt{n}$ many $1$s into $\bm u$. Then the probability that a depth-$o(\sqrt{n})$ decision tree $T$ can distinguish $\bm u$ from~$\bm x$ is $o(1)$; in other words, the output distributions $T(\bm u)$ and $T(\bm x)$ are close to each other.

Using this Closeness Lemma, we break down the argument into stages, each stage
handled by the Stage Lemma. In each stage, we start with a subcube $X \subseteq
\zo^n$ where at least half the strings $x \in X$ satisfy $f(x) = 0$. A
subcube $X$ is equivalent to a partial assignment of variables, where a set $F
\subseteq [n]$ are ``free''\!, while variables $[n] \setminus F$ are ``fixed'',
i.e., all $x \in X$ agree on the coordinates $[n] \setminus F$. Our goal in each
stage is to find $\approx\sqrt{n}$ free variables to fix (\ie~to remove from $F$
to obtain a new subcube); essentially, these newly fixed variables are the ones
queried by the decision tree, along with the $\sqrt n$ ``sprinkled'' 1-bits. Our
choice should satisfy the property that, after fixing these new variables, we
maintain the invariant that at least half the remaining strings $x$ have $f(x) =
0$. If we can accomplish this goal in each stage, repeating this process over
$0.9 \sqrt{n}$ stages yields a contradiction as illustrated in
\cref{fig:process-illustration}: on one hand, we have fixed $0.9\sqrt n \cdot
\sqrt n = 0.9n$ bits to $1$, so $f$ is identically $1$ on all remaining strings.
On the other hand, we maintained the invariant that half the remaining strings
$x$ have $f(x) = 0$, forcing a contradiction.

We find the $\approx \sqrt n$ variables to fix as follows. The key trick is that, by Yao's principle,
there is a deterministic $o(\sqrt n)$-depth decision tree $T$ that errs with probability at most $\varepsilon$ on the distribution
\[
    \tfrac12 ( \unif(X) + \sigma ),
\]
where $\sigma$ is the ``sprinkled-1s'' distribution: start with a uniform $\bm u
\sim X$ and construct $\bm x$ by fixing a random set of $\sqrt n$ free variables
to 1. Crucially, $T$ has error probability $2\varepsilon$ over each component
distribution $\unif(X)$ and $\sigma$. Since $\Pr_{\bm x \sim X}[ f(\bm x) = 0 ] \geq 1/2$
and $T$ has error $2\epsilon$ over $\unif(X)$, we can easily find a 0-leaf
$\ell$ where $f(x) = 0$ for most $x$ reaching that leaf; by fixing the variables
queried by $T$ on the path to $\ell$, we obtain a subcube $L$ with $\Pr_{\bm x
\sim L}[ f(\bm x) = 0 ] \geq 1/2$. We can ``upgrade'' the last step to let us fix
an additional $\sqrt n$ variables to 1, by using the fact that $T$ also has error $2\varepsilon$
over $\sigma$, and the Closeness Lemma, which says that the distribution over leaves of $T$ is
similar for $\sigma$ as for $\unif(X)$.

\ignore{
We find the assignment as follows. Suppose without loss of generality that initially $\Pr[f(\bm u) = 0] \ge 1/2$. Let $\sigma$
be the distribution of $\bm x$. By Yao's principle, there exists a deterministic $o(\sqrt{n})$-depth
decision tree $T$ that errs with probability $\varepsilon$ on the distribution $\tfrac{1}{2}(\unif(\{0,1\}^n) + \sigma)$. Consequently $T$ errs with probability at most $2\varepsilon$ on $\unif(\{0,1\}^n)$ and on
$\sigma$. Using the Closeness Lemma ($T$ does not distinguish $\sigma$ and $\unif(\{0,1\}^n)$) we find a leaf
$\ell$ of $T$ such that $\Pr_{\bm x \sim \sigma}[f(\bm x) = 0 \mid \bm x \in \ell] \ge 1/2$. Observing
that $\sigma$ is a uniform mixture of $\sigma_I = \unif(\{x \in \{0,1\}^n \mid x_I = 1_I\})$ across $I
\in \binom{[n]}{\sqrt{n}}$, we see that there exists $I$ such that
\[
    \Pr_{\bm u \sim \{0,1\}^n}[f(\bm u) = 0 \mid \bm u \in \ell, \bm u_I = 1_I] =
    \Pr_{\bm x \sim \sigma_I}[f(\bm x) = 0 \mid \bm x \in \ell] \ge 1/2.
\]
Thus, fixing the values queried in leaf $\ell$ (so that $x \in \ell$) and
assigning 1s to $I$ achieves the goal.
}

\paragraph{Full proof (skippable on first reading).}
We now give a more formal proof that serves as a blueprint for our communication lower bound. 
The following lemma formalizes the process for each stage, that is, the act of assigning $(1+o(1))\sqrt{n}$
bits. To better match with the communication proof we view the conditioning on a partial assignment as
zooming in to a subcube of $\{0, 1\}^n$. 

\emph{Notation:} Throughout the paper, for any string $x \in \Sigma^n$ and a set
$S \subseteq [n]$, we write $x_F \in \Sigma^F$ for the substring of $x$ on
coordinates $F$. For $X \subseteq \Sigma^n$, we write $X_F \define \{ x_F \mid x
\in X \}$.

\begin{lemma}[Stage Lemma for Decision Trees]
    \label{lemma:dt-stage}
    Suppose $f \colon \zo^n \to \zo$ is a completion of $\GapMaj$ with randomized
    query complexity $o(\sqrt n)$.
    Let $X \subseteq \{0, 1\}^n$ be a subcube of the boolean cube, that is, there exists $F \subseteq [n]$ be
    such that $X_{[n] \setminus F}$ is fixed and $X_F$ is free. Suppose that $|F| \geq 0.1n$ and $
    \Pr_{\bm u \sim X}[f(\bm u) = 0] \geq 1/2$. Then there exists a subcube $X' \subseteq X$ and $F' \subseteq F$ such that
    \begin{equation*}
        \label{eq:mostly-0}
        \Pr_{\bm{u'} \sim X'}[f(\bm{u'}) = 0] \geq \frac{1}{2}
        \qquad\text{and}\qquad
        X_{F'}\enspace \text{is free}.
    \end{equation*}
    Moreover, there is a set $R \subseteq F \setminus F'$ such that: $X'_R = 1_R$ and $|R| \geq
    0.9|F \setminus F'|$.
\end{lemma}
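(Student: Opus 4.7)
The plan is to apply Yao's minimax principle to the hypothetical randomized $o(\sqrt{n})$-query algorithm for $f$, use the Closeness Lemma to argue that the resulting deterministic tree cannot distinguish $\unif(X)$ from a sprinkled-1s distribution on $X$, and then extract the desired subcube via a weighted averaging over (leaf, sprinkle set) pairs.

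Concretely, I would define the sprinkle distribution $\sigma$ on $X$ by drawing $\bm u \sim \unif(X)$ and $\bm I$ uniformly from $\binom{F}{\lceil\sqrt{|F|}\rceil}$, then setting $\bm x := \bm u$ on $[n] \setminus \bm I$ and $\bm x_{\bm I} := 1_{\bm I}$. Applying Yao against the mixture $\tfrac{1}{2}(\unif(X) + \sigma)$, with a version of the randomized algorithm amplified down to a small constant error $\varepsilon$, yields a deterministic decision tree $T$ of depth $d = o(\sqrt{n})$ that errs with probability at most $\varepsilon$ on the mixture, and hence at most $2\varepsilon$ on each component. Since $|F| \geq 0.1 n$ gives $d = o(\sqrt{|F|})$, the Closeness Lemma (applied inside the subcube $X$ rather than $\{0,1\}^n$) says that $T(\unif(X))$ and $T(\sigma)$ are $o(1)$-close in total variation.

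Combining these ingredients: from $\Pr_{\bm u \sim X}[f(\bm u)=0] \geq 1/2$ and correctness of $T$ on $\unif(X)$ we get $\Pr_{\bm u \sim X}[T(\bm u) = 0] \geq 1/2 - 2\varepsilon$; Closeness transfers this to $\Pr_{\bm x \sim \sigma}[T(\bm x) = 0] \geq 1/2 - 2\varepsilon - o(1)$; and a second application of correctness, now on $\sigma$, upgrades it to $\Pr_{\bm x \sim \sigma}[f(\bm x) = 0 \wedge T(\bm x) = 0] \geq 1/2 - 4\varepsilon - o(1)$. Writing $\sigma = \Expectation_{\bm I}[\sigma_{\bm I}]$ with $\sigma_I := \unif(\{x \in X : x_I = 1_I\})$, the left-hand side expands as $\sum_{(\ell,I)} w_{\ell,I}\cdot \Pr_{\bm u \sim X_{\ell,I}}[f(\bm u)=0]$, where $\ell$ ranges over $0$-labeled leaves of $T$, $X_{\ell,I} := \{u \in X : u \text{ reaches } \ell \text{ and } u_I = 1_I\}$, and the weights $w_{\ell,I}$ sum to at most $1$. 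Therefore some pair $(\ell,I)$ attains value $\Pr_{\bm u \sim X_{\ell,I}}[f(\bm u)=0] \geq 1/2 - o(1)$. I would then set $X' := X_{\ell,I}$, $R := I$, and $F' := F \setminus (Q_\ell \cup I)$, where $Q_\ell \subseteq F$ is the set of at most $d$ free variables queried by $T$ on the path to $\ell$. Since $d = o(\sqrt{|F|})$, we have $|R|/|F\setminus F'| \geq \sqrt{|F|}/(d + \sqrt{|F|}) \geq 0.9$ for all sufficiently large $n$.

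The main obstacle I expect is that this pipeline produces $\Pr_{\bm u \sim X'}[f(\bm u) = 0] \geq 1/2 - o(1)$ rather than exactly $\geq 1/2$. To recover the clean statement I would either strengthen the stage invariant (\eg require $\geq 1/2 + \delta$, choosing $\delta$ so that the accumulated loss across the $0.9\sqrt{n}$ stages stays within a constant budget), or tune $\varepsilon$ and the sprinkle size against an explicit slack; either fix is routine, but one must be mindful that the combined Yao-plus-Closeness error will compound over the iterated stages of the full argument.
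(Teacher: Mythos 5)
Your pipeline (Yao against the mixture $\tfrac12(\unif(X)+\sigma)$, Closeness applied inside the subcube, averaging over leaf--sprinkle pairs) is exactly the paper's, and everything up to the extraction step is fine. The gap is in that last step, and it is not the ``routine'' loose end you describe. You lower-bound the \emph{joint} probability $\Pr_{\sigma}[f=0 \wedge T=0] \ge 1/2 - 4\varepsilon - o(1)$ and then pick the best term of an unnormalized sum, which can only give $\Pr_{\bm u \sim X'}[f(\bm u)=0] \ge 1/2 - 4\varepsilon - o(1)$. The paper instead bounds the \emph{conditional} error: since $\Pr_\sigma[T=0]\ge 1/4$ (by Closeness) and $\Pr_\sigma[f=1\wedge T=0]\le 2\varepsilon$ (by correctness on $\sigma$), one gets $\Pr_\sigma[f=1\mid T=0]\le 8\varepsilon$, and averaging this \emph{conditional} error first over $0$-leaves and then over the sprinkle sets $I$ produces a pair $(\ell, I)$ with $\Pr[f=1 \mid \bm x \in \ell,\ \bm x_I = 1_I] \le 8\varepsilon$. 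This restores the invariant not merely to $1/2$ but to $1-8\varepsilon$, i.e.\ with fresh constant slack at every stage and no accumulation across stages.

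Neither of your proposed repairs closes the gap. The per-stage loss in your additive accounting is $4\varepsilon + O(d/\sqrt{n})$, where the second term is the Closeness error when the leaf fixes up to $d$ of the $\ge 0.1n$ free variables, and there are $\Theta(\sqrt n)$ stages, so the total loss is $\Theta(\sqrt n\,\varepsilon) + O(d)$. Since $d$ is only guaranteed to be $o(\sqrt n)$, the $O(d)$ term alone is unbounded; no constant budget $\delta$ in a strengthened invariant can absorb it, and no choice of $\varepsilon$ helps. Driving $\varepsilon$ down to $o(1/\sqrt n)$ by amplification multiplies the depth by $\Theta(\log n)$ and destroys the $o(\sqrt n)$ depth bound on which both the Closeness Lemma and your ratio $|R|/|F\setminus F'| \ge 0.9$ rely. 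So you genuinely need the conditional (normalized) form of the averaging; with that one change, the rest of your argument goes through as written.
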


This lemma suffices to prove the lower bound, as follows (see
\cref{fig:process-illustration} for an illustration):

\begin{proof}[Proof of \cref{thm:query-main} given \cref{lemma:dt-stage}]
    We may assume that $|f^{-1}(0)| \ge 2^{n-1}$, as
    otherwise we can swap the roles of 0/1 output values in the upcoming argument.
    Begin with
    $X = \{0, 1\}^n$, $F = [n]$, and $S = \varnothing$. We will apply the stage lemma
    iteratively as long as $|F| \geq 0.1n$ and update $X, F, S$ in each stage,
    as follows: assuming the invariant $\Pr_{\bm u \sim X}[ f(\bm u) = 0 ] \geq 1/2$, we apply the
    stage lemma and update
    \begin{itemize}[itemsep=0pt]
        \item $X \gets X'$ with $X'$ from the stage lemma, maintaining the invariant $\Pr_{\bm u' \sim X'}[ f(\bm u') = 0 ] \geq 1/2$;
        \item $F \gets F'$ from the stage lemma;
        \item $S \gets S \cup R$ from the stage lemma, maintaining the invariant that all $x \in X'$
        have value 1 on coordinates $S$.
    \end{itemize}
    On termination, we have preserved the invariant $\Pr_{\bm u \sim X}[f(\bm u)
    = 0] \geq 1/2$, but on the other hand, since $|S| \geq 0.9|[n] \setminus F|
    \geq 0.8 n$ and all $x \in X$ have coordinates in $S$ fixed to 1, we have
    $f(x) = 1$ for all $x \in X$, a contradiction.
\end{proof}

\begin{figure}[h]
    \centering
    \subfloat[Evolution of subcube $X$; it ends up contained in $f^{-1}(1)$.]{\begin{tikzpicture}
    \def\a{3}
    \def\b{4}
    \def\rat{\b / \a}

    \def\colorlist{{
        "VioletRed", "ForestGreen", "Periwinkle", "Peach"
    }}

    \fill[black!10, rounded corners = 2pt] (-0.1 - \a, 0) -- (0, {-\b - 0.1 * \rat}) --
        (\a + 0.1, 0) -- (0, {\b + 0.1 * \rat}) -- cycle;
    \draw[very thick, fill = white, name path = fr] (-\a, 0) -- (0, -\b) -- (\a, 0) -- (0, \b) -- 
        cycle;
    \draw[dashed, thin] (-\a, 0) -- ++(2 * \a, 0);
    \node[below = 3pt] at (0.08, -\b) {$0^n$};
    \node[above = 3pt] at (0.08, \b) {$1^n$};

    \foreach \i in {1, 2}{
        \path[name path = val-\i] (-\a, {(0.35 * (2 * \i - 3) * \b)}) -- ++(2 * \a, 0);
        \path[name intersections = {of = fr and val-\i}];
        \draw[thick] (intersection-1) -- (intersection-2);
    }

    \pic {rhom = {{(\a - 0.07)}}{{(\b - 0.07)}}{VioletRed}{a1}};
    \pic[shift = {(0.5, 0.5)}] {rhom = {2}{\rat * 2}{ForestGreen}{a2}};
    \pic[shift = {(1, 1.3)}] {rhom = {0.85}{\rat * 0.85}{Periwinkle}{a3}};
    \pic[shift = {(1, 1.7)}] {rhom = {0.5}{\rat * 0.5}{Peach!70!black}{a4}};
    
    \draw[->, linecol = {0.3pt}{VioletRed}{ForestGreen}] (a1) -- (a2);
    \draw[->, linecol = {0.3pt}{ForestGreen}{Periwinkle}] (a2) -- (a3);
    \draw[->, linecol = {0.3pt}{Periwinkle}{Peach!70!black}] (a3) -- (a4);

    \node at (0, 0.75 * \b) {\Large $1$};
    \node at (0, -0.75 * \b) {\Large $0$};

    \node at (0, 4.7) {};
    \node at (0, -4.7) {};
\end{tikzpicture}}
    \hspace{1cm}
    \subfloat[Evolution of free variables $F$; each stage fixes at least $\sqrt n$ new 1s.]{\begin{tikzpicture}
    \def\a{6}
    \def\b{2}
    \def\rat{\b / \a}

    \def\colorlist{{
        "VioletRed", "ForestGreen", "Periwinkle", "Peach"
    }}
    \def\vallist{{
        "0", "1", "1", "1", "1", "0", "1", "0", "1"
    }}

    \tikzmath{
        int \im;
        \hm = 0;
        \size = \b;
        for \im in {0, 1, ..., 3}{
            {
                \begin{scope}[shift = {(0, \hm)}]
                    \pgfmathparse{\colorlist[\im]}
                    \edef\col{\pgfmathresult}
                    \pic[\col] {universe-rect = {\a}{\size}};
                    \node[left, \col] at (0, \size / 2) {$X$};
                    \draw[\col, step = 0.2 * \b, thin] (0, 0) grid (0.6 * \b * \im, \size);
                    \pgfmathtruncatemacro{\t}{3 * \im - 1}
                    \ifthenelse{
                        \t > 0
                    }{
                        \foreach \i in {0, 1, ..., \t}{
                            \pgfmathparse{\vallist[\i]}
                            \edef\val{\pgfmathresult}
                            \pgfmathtruncatemacro{\temp}{4 - \im}
                            \foreach \j in {0, 1, ..., \temp}{
                                \node[\col] at ({(0.5 + \i) * 0.2 * \b}, {(0.5 + \j) * 0.2 * \b})
                                {$\val$};
                            }
                        }
                    }{};
                    \fill[
                        pattern = {Lines[angle = 45, distance = 3pt, line width = 0.1pt]},
                        pattern color = blue]
                        (0.6 * \b * \im, 0) rectangle (\a, \size);
                    \draw[\col,
                        decoration = {brace, mirror, raise = 2pt, amplitude = 3pt},
                        decorate] (0.6 * \b * \im, 0) -- (\a, 0) node[midway, below = 5pt] {\small $F$};

                \end{scope}
            };
            \size = \size - 0.2 * \b;
            \hm = \hm - \size - 0.8;
        };
    }

    \draw[pattern = {Lines[angle = 45, distance = 3pt, line width = 0.1pt]}, pattern color = blue]
        (0, 0) rectangle (\a, \b);
    \node[above] at (\a / 2, \b) {$[n]$};

    \node at (0, 2.7) {};
    \node at (0, -6.7) {};
\end{tikzpicture}}
    \caption{Decision tree stages}
    \label{fig:process-illustration}
\end{figure}
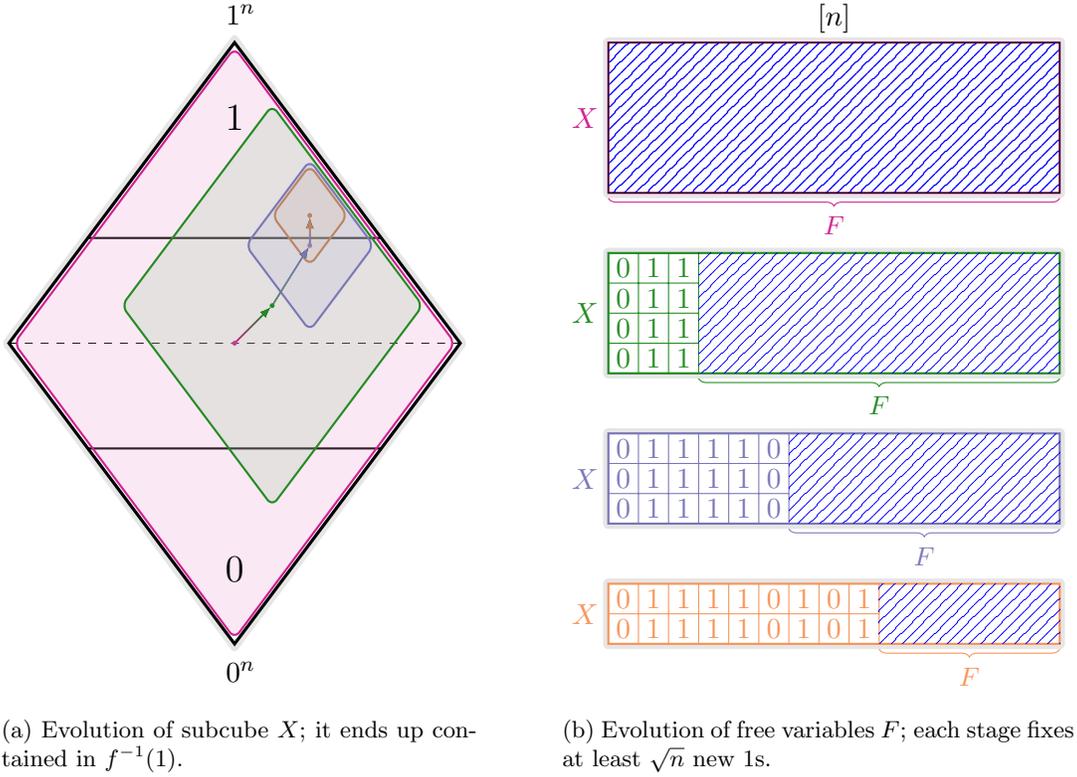

To prove the stage lemma, we define the following distribution over inputs.
Given any subcube $X \subseteq \zo^n$ with free variables $F \subseteq [n]$, we
define $\sigma$ as the distribution over $X$ obtained by choosing a uniformly
random $\bm u \sim X$ and a uniformly random set $\bm I \sim \binom{F}{\sqrt n}$
of $\sqrt n$ free variables. Then we take $\bm x \in X$ to be the string
obtained from $\bm u$ by setting all coordinates $i \in \bm I$ to 1.

This distribution has two important properties: first, it has $\sqrt n$ ``planted'' 1-valued
coordinates, so that it is more biased towards 1 than a uniformly random string. Second,
it is indistinguishable from the uniform distribution by $o(\sqrt n)$-query decision trees,
as established in the \emph{Closeness Lemma}:

\begin{lemma}[Closeness Lemma for Decision Trees]
    \label{lm:closeness-decision-trees}
    Let $X' \subseteq X$ be a subcube of $X$ and $F' \subseteq F$ the set of its free
    variables. Suppose $|F \setminus F'| \leq o(\sqrt{n})$ and $|F| \geq 0.1n$. Then
    \[
        \Pr_{\bm x \sim \sigma}[\bm x \in X'] \geq (1 - o(1))\Pr_{\bm u \sim X}[\bm u \in X'] .
    \]
\end{lemma}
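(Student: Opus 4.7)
The plan is to reduce the statement to an elementary hypergeometric calculation. Write $T \define F \setminus F'$ for the newly fixed coordinates; since $X' \subseteq X$ is a subcube, it is determined by a fixed value $v \in \{0,1\}^T$ that all $x' \in X'$ take on $T$ (while agreeing with $X$ on $[n]\setminus F$). Split $T = T_0 \sqcup T_1$ according to whether $v_i = 0$ or $v_i = 1$. Because $\bm u$ is uniform on $X$, its restriction to the free coordinates $F$ is uniform in $\{0,1\}^F$, so
\[
    \Pr_{\bm u \sim X}[\bm u \in X'] \;=\; 2^{-|T|}.
\]

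Next, I would expand $\Pr_{\bm x \sim \sigma}[\bm x \in X']$ by cases on how the random index set $\bm I \sim \binom{F}{\sqrt n}$ interacts with $T$. Since $\bm x_i = 1$ for $i \in \bm I$ and $\bm x_i = \bm u_i$ for $i \notin \bm I$, the event $\bm x \in X'$ happens exactly when $\bm I \cap T_0 = \varnothing$ (otherwise some coordinate of $v$ equal to $0$ gets overwritten by $1$) and $\bm u_i = v_i$ for every $i \in T \setminus \bm I$. Conditioning on $\bm I$, the latter event has probability $2^{-|T \setminus \bm I|}$, so
\[
    \Pr_{\bm x \sim \sigma}[\bm x \in X']
    \;=\; \Expectation_{\bm I}\!\left[\mathbbm{1}[\bm I \cap T_0 = \varnothing]\cdot 2^{-|T \setminus \bm I|}\right]
    \;\geq\; 2^{-|T|}\cdot \Pr[\bm I \cap T_0 = \varnothing],
\]
using $|T \setminus \bm I| \leq |T|$ in the conditioned-on event.

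Finally, I would estimate $\Pr[\bm I \cap T_0 = \varnothing]$ using the assumption $|T| = o(\sqrt n)$ and $|F| \geq 0.1n$. A direct product formula gives
\[
    \Pr[\bm I \cap T_0 = \varnothing]
    \;=\; \prod_{j=0}^{\sqrt n - 1} \frac{|F|-|T_0|-j}{|F|-j}
    \;\geq\; \left(1 - \frac{|T_0|}{|F|-\sqrt n}\right)^{\!\sqrt n}
    \;\geq\; 1 - \frac{|T_0|\sqrt n}{|F|-\sqrt n}
    \;=\; 1 - o(1),
\]
since the numerator is $o(\sqrt n)\cdot\sqrt n = o(n)$ and the denominator is $\Omega(n)$. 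Combining with the two previous displays yields $\Pr_{\bm x \sim \sigma}[\bm x \in X'] \geq (1-o(1))\cdot 2^{-|T|} = (1-o(1))\Pr_{\bm u \sim X}[\bm u \in X']$, as required.

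There is no real obstacle here: once one recognizes that the worst case is $T = T_0$ (planting $1$s strictly helps when $v_i = 1$), the argument is just the observation that a uniformly random $\sqrt n$-subset of $F$ is unlikely to meet a given $o(\sqrt n)$-sized set when $|F| = \Omega(n)$. The only mild care needed is to drop the contribution from the $2^{|\bm I \cap T_1|}$ factor in $2^{-|T\setminus\bm I|} = 2^{-|T|}\cdot 2^{|\bm I \cap T_1|}$, which can only increase the probability, so the inequality direction required by the lemma is preserved.
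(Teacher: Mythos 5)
Your proof is correct and follows essentially the same route as the paper's: both arguments reduce to showing that the random sprinkling set $\bm I$ avoids the newly fixed coordinates $F\setminus F'$ with probability $1-o(1)$ (the paper uses a union bound where you use the hypergeometric product, and it lower-bounds by the joint event $\{\bm u\in X'\}\cap\{\bm I\cap(F\setminus F')=\varnothing\}$ where you condition on $\bm I$ and split $T$ into $T_0\sqcup T_1$). Your refinement that only $T_0$ matters is a slight sharpening but does not change the substance of the argument.
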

\begin{proof}
Let $\bm I \sim \binom{F}{\sqrt n}$ be as in the definition of $\sigma$. Then
$\bm x \sim \sigma$ is obtained by choosing $\bm u \sim X$ and then fixing
variables in $\bm I$ to 1. If $\bm u \in X'$ and the set $\bm I \subseteq F$ does not
intersect the set $F \setminus F'$ of variables which are fixed by $X'$, then
it must be the case that $\bm x \in X'$ as well. Therefore
\[
    \Pr_{\bm x \sim \sigma}[ \bm x \in X' ]
        \geq \Pr_{\bm u \sim X}[ \bm u \in X' \wedge (\bm I \cap (F' \setminus F) = \varnothing) ]
        = \Pr_{\bm u \sim X}[ \bm u \in X' ] \cdot \Pr[ \bm I \cap (F' \setminus F) = \varnothing ] .
\]
By the union bound,
\[
    \Pr[ \bm I \cap (F' \setminus F) \neq \varnothing ]
    \leq \sqrt n \cdot \frac{|F' \setminus F|}{|F|} \leq \sqrt n \cdot \frac{o(\sqrt n)}{0.1 n} = o(1) .
    \qedhere
\]
\end{proof}

\ignore{
    \begin{proof}
        Let $\bm I \sim \binom{F}{\sqrt{n}}$ be as in the definition of $\sigma$. For the event $[\bm x
        \in X']$ to happen, $\bm x$ has to agree with the fixed coordinates $F \setminus F'$ of $X'$. The
        proof then is just a case analysis on how $\bm I$ intersects the set $F \setminus F'$: if the
        intersection is empty, then the projection $\bm x_{F \setminus F'}$ conditioned on $\bm I \cap (F
        \setminus F')$ is distributed uniformly, so $\Pr_{\bm x \sim \sigma}[\bm x \in X' \mid \bm I \cap
        (F \setminus F') = \emptyset] = \Pr_{\bm u \sim X}[\bm u \in X']$. With $\bm I = \{\bm i_1,
        \dots, \bm i_{\sqrt{n}}\}$ by the union bound we have
        \[
            \Pr[\bm I \cap (F \setminus F') \neq \emptyset] \le \sum_{j \in [\sqrt{n}]} \Pr[\bm
            i_j \in F \setminus F'] = \sqrt{n} \cdot o(1/\sqrt{n}) = o(1).
        \]
        This immediately yields the lower bound:
        \[
            \Pr_{\bm x \sim \sigma}[\bm x \in X'] \ge 2^{-|F \setminus F'|} \Pr[\bm I \cap (F \setminus F') = \emptyset] = (1-o(1)) \Pr_{\bm u \sim X}[\bm u \in X'].
        \]
        For the upper bound observe on the one hand that for $k \ge 1$ we have $\Pr[|(F \setminus F') \cap \bm I| = k] \le \binom{\sqrt{n}}{k} o(1/\sqrt{n})^k = o(1)^k$. 
        On the other hand, given $|(F \setminus F') \cap \bm I| = k$ the probability that $\bm x \in X'$ is at most $2^{-|F \setminus F'| + k}$ where the maximum is attained if $X'$ fixes all coordinates in $\bm I \cap (F \setminus F')$ to one. Then
        \[
            \Pr_{\bm x \sim \sigma}[\bm x \in X'] \le  \sum_{k \ge 0} 2^{-|F \setminus F'| + k} (o(1))^k = (1 + o(1)) \Pr_{\bm u \sim X}[\bm u \in X'].\qedhere
        \]
    \end{proof}
}

We now conclude the lower bound by proving the stage lemma:

\begin{proof}[Proof of \cref{lemma:dt-stage}.]
    Suppose that $f$ has randomized decision tree complexity $o(\sqrt{n})$; we may assume that
    the randomized decision tree has error probability $\epsilon$. Let $X$ be any
    subcube with free variables $F$ satisfying $|F| \geq 0.1 n$. Recall the distribution $\sigma$
    over $X$ where $\bm x \sim \sigma$ is sampled by choosing $\bm I \sim \binom{F}{\sqrt n}$
    and $\bm x \sim \unif(\{x \in X \mid x_{\bm I} = 1_{\bm I}\})$.
    
    By Yao's principle, there exists a deterministic decision tree $T$
    of depth $o(\sqrt{n})$, computing $f$ with error $\varepsilon$ over the mixture distribution
    \[
        \tfrac{1}{2}( \unif(X) + \sigma ) .
    \]
    Observe that tree $T$ has error at most $2\varepsilon$ over each individual distribution
    $\sigma$ and $\unif(X)$.
    
    Note that each leaf $\ell$ of $T$ corresponds to a subcube $L \subseteq X$ obtained by fixing the $o(\sqrt n)$ bits queried by $T$ on the path to $\ell$; therefore the free variables $F_L$ of $L$
    satisfy $|F \setminus F_L| = o(\sqrt n)$. By applying the closeness lemma,
    \cref{lm:closeness-decision-trees}, to each subcube $L$ corresponding to the 0-leaves of $T$,
    we obtain
    \[
        \Pr_{\bm x \sim \sigma}[ T(\bm x) = 0 ] = \sum_{0-\text{leaf } L} \Pr_{\bm x \sim \sigma}[ \bm x \in L ] \geq (1-o(1)) \Pr_{\bm u \sim X}[ T(\bm u) = 0 ]
        \geq (1-o(1)) \left( \tfrac{1}{2} - 2\epsilon \right) \geq 1/4 ,
    \]
    where the penultimate inequality uses the assumption $\Pr_{\bm u \sim X}[ f(\bm u) = 0 ] \geq 1/2$.
    Now, since $T$ errs with probability at most $2\epsilon$ on $\sigma$, we have
    \[
        \Pr_{\bm{x} \sim \sigma}[f(\bm{x}) = 1 \mid T(\bm{x}) = 0] =
        \frac{\Pr_{\bm{x} \sim \sigma}[f(\bm{x}) = 1 \land T(\bm{x}) = 0] }{ \Pr_{\bm x \sim \sigma}[T(\bm x) = 0] }
        \le 8\varepsilon.
    \]
    Hence, according to the total probability law, there exists a $0$-leaf $\ell$ in $T$ such that $\Pr_{\bm{x} \sim \sigma}[f(\bm{x}) = 1 \mid \bm{x} \in L] \le 8\varepsilon\le 1/2$. 
    Again by total probability law, there exists $I \in \binom{F}{\sqrt{n}}$ such that $\Pr_{\bm x \sim \sigma}[f(\bm x) = 1 \mid \bm x \in L, \bm I = I] \le 1/2$. We then observe that for $\bm u \sim X$ the distributions of $(\bm u \mid \bm u_I = 1)$ and $(\bm x \mid \bm I = I)$ coincide, therefore 
      $\Pr_{\bm u \sim X}[f(\bm u) = 1 \mid \bm u \in \ell, \bm u_I = 1] \le 1/2$.
    We then conclude by defining $X'\coloneqq\{x \in L \mid x_I = 1_I\}$ with $R \coloneqq I$
    being the variables forced to 1, and the new set of free variables being $F' \coloneqq F_L \setminus R$ (the free variables remaining in the leaf $L$, minus those forced to 1), which satisfies
    $|R| = \sqrt n \geq 0.9 |F \setminus F'| = 0.9 (\sqrt n + o(\sqrt n))$.
\end{proof}

\subsection{Proof Overview}
\label{sec:summary}

Let us describe how to lift the argument for decision trees in
\cref{section:lb-dt} to prove a lower bound for the communication complexity of
$\GapMaj \circ \IP_m^n$. For the sake of brevity, we will write $\Sigma \define
\zo^m$ for the alphabet of inputs to the $\IP_m$ gadget, so that both inputs to
$\GapMaj \circ \IP_m^n$ are in domain~$\Sigma^n$. For the remainder of the
paper, $m = \log |\Sigma|$ indicates the bit-size of the $\IP_m$ gadget. Below, we use $\RandCC(\cdot)$ to denote randomized communication complexity (to within error $1/3$).
\begin{theorem}
\label{thm:main-detailed}
Any completion $f \colon \Sigma^n \times \Sigma^n \to \zo$ of
$\GapMaj \circ \IP_m^n$, with $m \coloneqq 100 \log n$, has
\[
    \RandCC(f) = \Omega(\sqrt n\log n).
\]
\end{theorem}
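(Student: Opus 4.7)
The plan is to lift Gavinsky's two-part argument from \cref{section:lb-dt}---the Stage Lemma and the Closeness Lemma---to communication, with ``structured rectangles'' (rectangles whose marginals have near-full blockwise min-entropy on many coordinates) playing the role of subcubes, and with $\IP_m$ gadgets being forced to $1$ playing the role of input bits being fixed to $1$. The outer loop is unchanged from the warm-up: begin with $R_0 = \Sigma^n \times \Sigma^n$, where we may assume $\Pr_{R_0}[f = 0] \geq 1/2$; repeatedly apply a communication Stage Lemma that produces a sub-rectangle $R' \subseteq R$ in which the invariant $\Pr_{R'}[f = 0] \geq 1/2$ is preserved and which forces $\IP_m(x_i, y_i) = 1$ on a set of $(1 + o(1))\sqrt n$ new coordinates; after $0.9\sqrt n$ stages, at least $0.8 n$ gadgets are pinned to $1$, so $f \equiv 1$ on the surviving rectangle, contradicting the maintained invariant.

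The first new ingredient is a communication analog of ``free variable.'' Following randomized lifting technology (\eg~\cite{BPPlifting,liftinglowdisc}), I would declare a coordinate $i \in [n]$ to be \emph{free} for a rectangle $R = X \times Y$ if $\unif(X)$ and $\unif(Y)$ each have almost full min-entropy on the block indexed by $i$. On free coordinates, $\IP_m(\bm x_i, \bm y_i)$ is close to uniform over $\zo$, and conditioning on $\IP_m = 1$ across a small set of free coordinates preserves the remaining entropy budget on the rectangle up to restriction to a slightly smaller sub-rectangle. The \emph{sprinkled} distribution is defined accordingly: $(\bm x, \bm y) \sim \sigma$ is obtained by sampling $(\bm x, \bm y) \sim \unif(R)$ and $\bm I \sim \binom{F}{\sqrt n}$, and then resampling $(\bm x_{\bm I}, \bm y_{\bm I})$ conditioned on $\IP_m(\bm x_i, \bm y_i) = 1$ for all $i \in \bm I$.

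With these notions in place, the Stage Lemma proceeds in parallel with \cref{lemma:dt-stage}. Yao's principle yields a deterministic protocol $\Pi$ of cost $o(\sqrt n \cdot \log n)$ that errs with probability at most $\varepsilon$ on $\tfrac{1}{2}(\unif(R) + \sigma)$, hence at most $2\varepsilon$ on each of $\unif(R)$ and $\sigma$ individually. Leaves of $\Pi$ are rectangles $L$, and the role of \cref{lm:closeness-decision-trees} is played by a Closeness Lemma of the form $\Pr_{\sigma}[(\bm x, \bm y) \in L] \geq (1 - o(1)) \Pr_{\unif(R)}[(\bm x, \bm y) \in L]$ for every leaf $L$. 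The intuition is that a cheap protocol can reduce the blockwise min-entropy of only $o(\sqrt n)$ free coordinates by a constant amount, so a uniformly random $\bm I \in \binom{F}{\sqrt n}$ misses these ``touched'' coordinates with probability $1 - o(1)$, and conditional on this avoidance the sprinkling does not alter the event $(\bm x, \bm y) \in L$. As in the warm-up, picking a $0$-leaf $L$ on which $\Pr_\sigma[f = 0 \mid (\bm x, \bm y) \in L] \geq 1/2$, and then an $I \in \binom{F}{\sqrt n}$ achieving the same conditional bound, produces the desired $R' \subseteq L$ that forces $\IP_m = 1$ on $I$.

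The main obstacle is executing the min-entropy bookkeeping cleanly enough to iterate: unlike subcubes, communication leaves do \emph{not} automatically retain the ``structured rectangle'' property required for the next stage, which is why standard randomized lifting cannot be invoked as a black box and a white-box adaptation is needed. After each stage I must further refine $R'$ into a dense sub-rectangle with an updated free set $F'$, and show that the small ``junk'' part that falls out during this refinement carries negligible mass under both $\unif(R)$ and $\sigma$; this is the analog, inside the Closeness Lemma proof, of controlling the low-probability event $\bm I \cap (F \setminus F') \neq \varnothing$. The $\log n$ factor in the final bound arises because each of the $\Omega(\sqrt n)$ effective ``queries'' in the lifted stage argument costs $\Theta(m) = \Theta(\log n)$ bits of communication, yielding $\RandCC(f) = \Omega(\sqrt n \log n)$.
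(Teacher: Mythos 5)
Your high-level plan coincides with the paper's, but the step on which everything hinges---the Closeness Lemma---is justified by an argument that the paper explicitly identifies as the one that \emph{breaks} when passing from decision trees to protocols. You argue that a protocol of cost $o(\sqrt n\, m)$ can ``touch'' only $o(\sqrt n)$ free coordinates, so a random $\bm I\in\binom{F}{\sqrt n}$ misses them and sprinkling is invisible. This is false: a protocol with $o(\sqrt n\, m)$ bits can spread its information thinly, reducing the entropy of far more than $\sqrt n$ blocks by a little each (e.g.\ one bit about each of $\sqrt{n}\log n/\omega(1)$ gadgets), so the birthday computation $\sqrt n\cdot|T|/|F|$ does not go to zero and $\bm I$ \emph{will} hit touched coordinates. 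The paper's Closeness Lemma (\cref{lemma:closeness}) is instead proved from $0.9$-spreadness of \emph{both} the ambient rectangle and the leaf, using the pseudorandomness of $\IP_m$ (\cref{lemma:ip-gadget-pseudorandom}) and the bound $\Exp[2^{|\bm I\cap T|}]\le 1+1/20$ of \cref{claim:birthday}; and it only applies when the number of \emph{fully fixed} coordinates at the leaf is at most $\sqrt n/1000$, which is not guaranteed at every stage. The paper therefore splits stages into ``safe'' and ``unsafe'' via the median test \eqref{eq:safe-or-unsafe}, sprinkles only in safe stages, and shows unsafe stages are harmless because they sharply decrease a potential function. Your proposal has no mechanism for stages where the protocol pins too many coordinates.

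The second, related omission is that potential function itself: the deficit $\deficit(X\times Y,F)=2|F|m-\minentropy(\bm x_F)-\minentropy(\bm y_F)$, maintained \emph{across} stages. You correctly note that after each stage one must refine the leaf back into a spread (``structured'') rectangle, and you claim the discarded part has negligible mass; but mass is not the relevant quantity. The refinement (density-restoring partition, \cref{lemma:density-restoring-partition}) fixes some set of coordinates in an \emph{uncontrolled} way, and over $\Theta(\sqrt n)$ stages these could accumulate to a constant fraction of $n$, destroying the final count $|T|\ge \tfrac23 n$ of gadgets pinned to $1$. The paper controls this by showing each uncontrolled fixing decreases the deficit by $\Omega(m)$ while each stage increases it by only $o(\sqrt n\, m)$; non-negativity of the deficit then forces the total number of uncontrolled coordinates to be $o(n)$, and also bounds the number of unsafe stages. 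Without this global bookkeeping (and the accompanying Protocol Transformation Lemma that restores spreadness at every leaf \emph{before} a leaf is selected, so that the Closeness Lemma's hypotheses are met), the iteration does not close.
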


Our proof proceeds analogously to the proof for query complexity in
\cref{section:lb-dt}. At a high level, we have the following analogies:
\begin{itemize}[itemsep=0pt,leftmargin=1em]
    \item For decision trees we tracked a subcube $X \subseteq \zo^n$, because leaves of a decision tree correspond to subcubes. Now we track a rectangle $X \times Y \subseteq \Sigma^n \times \Sigma^n$
    because leaves of a communication protocol correspond to rectangles. (Gavinsky's argument
    for parity decision trees tracks an affine subspace $X \subseteq \mathbb{F}_2^n$.)
    \item We still have a set $F$ of ``free'' variables, but these take on a different meaning,
    because the rectangle $X \times Y$ no longer
    corresponds exactly to a set of inputs obtained by fixing variables $[n] \setminus F$ and leaving those in $F$ completely unrestricted; this is because communication protocols may exchange information about variables without entirely ``fixing'' them. Our ``free variables'' $F$ instead satisfy the condition that ``only a small amount of information about them is known''\!.
    \item We still have a \emph{\nameref{lemma:stage-lemma}}, which suffices to prove the theorem in almost the same
    way: starting with a rectangle $X \times Y$ where $f(x,y) = 0$ for most $(x,y) \in X \times Y$,
    we iteratively restrict to a smaller rectangle $X' \times Y'$ by ``fixing'' some free variables,
    including $\sqrt n$ variables fixed to 1. Unlike the decision tree proof, we can no longer
    fix $\sqrt n$ variables to 1 in \emph{every} stage, but only \emph{most} stages (which we call ``safe stages'').
    \item We still have a \emph{\nameref{lemma:closeness}}, which we use to find $\sqrt n$ variables to fix to 1;
    however, the simple birthday paradox argument in \cref{lm:closeness-decision-trees} no longer
    works, because a protocol does not simply query $o(\sqrt n)$ bits, so $\sqrt n$ random coordinates
    will not simply ``miss'' the queried coordinates with high probability.
    Indeed, our new Closeness Lemma will only apply in the ``safe stages''\!.
\end{itemize}
\noindent
Let's give a little more detail on how the Stage Lemma and Closeness Lemma differ from their decision
tree analogues. We also provide a diagram of the Stage Lemma in \cref{fig:stage-lemma-diagram}.

\paragraph{Stage Lemma:}
Suppose for the sake of contradiction that there exists a
randomized communication protocol with error~$\varepsilon$ and communication cost
$o(\sqrt n m)$. We proceed as follows: 
\begin{enumerate}
    \item Start each stage with a rectangle $X \times Y$ (originally $\Sigma^n
    \times \Sigma^n$) which contains mostly 0-valued inputs to $f$, and a set $F
    \subseteq [n]$ of ``free variables''. We keep track of the ``deficit''
    \[
        \deficit(X \times Y, F) \define 2|F|m - \minentropy(\bm{x}_F) - \minentropy(\bm{y}_F),
    \]
    where $(\bm x,\bm y)\sim X\times Y$ and $\minentropy(\bm x)\coloneqq \min_x\log(1/\Pr[\bm x=x])$ denotes min-entropy. Intuitively, the deficit quantifies how many bits of information Alice and Bob know about the free variables~$F$. For intuition, note that if Alice and Bob were
    merely simulating a decision tree (solving individual gadgets one at a time) then
    the deficit would always be 0.
    \item By Yao's principle we obtain a deterministic protocol $\Pi$ with cost $|\Pi| = o(\sqrt{n} m)$
    and error $\varepsilon$ over the distribution
    \[
        \frac{1}{2} \left( \unif(X \times Y) + \sigma(X \times Y, F) \right) \,,
    \]
    where $\sigma(X \times Y, F)$ is a \emph{sprinkled-1s} distribution (\cref{def:sprinkled-1s}). In the sprinkled-1s distribution, an input $(\bm x, \bm y)$
    is chosen uniformly from $X \times Y$, and then a random set of $\sqrt n$ gadgets
    in $F$ are fixed to have value 1, \ie we force the input distribution to
    increase the number of 1-valued gadgets by at least $\sqrt n$.

    \item We use the fact that $\Pi$ has error $2\varepsilon$ on each of $\unif(X \times Y)$
    and $\sigma(X \times Y, F)$, together with the Closeness Lemma (which says that distribution
    over leaves induced by $\sigma(X \times Y, F)$ and $\unif(X \times Y)$ is similar),
    to find a smaller subrectangle $X' \times Y' \subseteq X \times Y$ and smaller set of free variables $F' \subseteq F$, where $\sqrt n$ new $\IP$ gadgets are fixed to 1. However, the Closeness Lemma
    will now apply only in ``safe stages'' where the deficit is low; in the unsafe stages, we
    skip the step of fixing $\sqrt n$ gadgets to 1. This is OK, because most stages will be safe.
\end{enumerate}

\paragraph{Closeness Lemma:}
    The Closeness Lemma should show that the distribution over leaves of the protocol
    is similar under the sprinkled-1s distribution $\sigma(X \times Y, F)$ and $\unif(X \times Y)$,
    so that we can safely sprinkle $\sqrt n$ 1-values ``without the protocol noticing''\!. However, to prove this claim, there are two caveats, which constitute the main technical novelty of the proof,
    and do not have an analogue in Gavinsky's parity decision tree proof:
    \begin{enumerate}
        \item In each stage, we must first transform $\Pi$ into a more structured protocol $\Pi'$ using
        techniques from lifting and properties of the $\IP_m$ gadget.
        \item We can prove the claim only when the deficit $\deficit(X \times Y, F)$ is \emph{small}
        (\ie~there is little information known about the free variables $F$, so that intuitively they behave similarly to the completely free variables in the decision tree argument). 
    \end{enumerate}

\noindent
In the next section, we formalize the main claims required for this argument.

\section{Proof of the Main Theorem}
\label{section:main-proof}

For the sake of contradiction, we make the following assumption throughout the proof:

\begin{assumption}
\label{assumption}
    Assume that there is a completion of $\GapMaj \circ \IP_m^n$, denoted $f \colon \Sigma^n \times \Sigma^n \to \zo$,
    such that the randomized communication cost of $f$ is $o(\sqrt n m)$ and
    $\Pr_{(\bm x, \bm y)}[ f(\bm x, \bm y) = 0 ] \geq 1/2$.
\end{assumption}

As in the decision tree proof, the main theorem will follow from a Stage Lemma. To state the communication
version of the Stage Lemma, we need a few definitions.
First, we will use the \emph{deficit} as a potential function throughout the stage procedure:
\begin{definition}[Deficit]
\label{def:deficit}
For a set $F \subseteq[n]$ and a random variable $\bm x$ over $\Sigma^n$, we write
\[
    \Dinf(\bm x_F) \define |F| m - \H_\infty(\bm x_F).
\]
For a rectangle $X \times Y \in
\Sigma^n \times \Sigma^n$ and $F \subseteq [n]$ we define the \emph{deficit}:
    \[
        \deficit(X \times Y, F) \define \Dinf(\bm x_F) + \Dinf(\bm y_F)
            = 2|F| m
            - \minentropy(\bm{x}_F)
            - \minentropy(\bm{y}_F) ,
    \]
where $(\bm x, \bm y) \sim \unif(X \times Y)$.
Observe that the deficit is non-negative.
\end{definition}
Next, we will maintain the property that, for the current rectangle $X \times Y$
in the process, the sets $X$ and $Y$ are pseudorandom in sense of being
\emph{$\gamma$-spread}: 
\begin{definition}[Spread Variables]
        A random variable $\bm{x} \in \Sigma^J$ is $\gamma$-spread if $\forall I \subseteq J$ it holds that $\H_\infty(\bm{x}_I) \geq \gamma |I| m$. A set $X \subseteq \Sigma^J$ is $\gamma$-spread
        if $\bm x \sim \unif(X)$ is $\gamma$-spread.
\end{definition}

The reason we make this definition and maintain the spreadness property
throughout the stage process is that it lets us take advantage of
near-uniformity of the \textsc{Inner Product} gadget outputs on the free
variables. Let us state this lemma now to make the motivation of this definition
clear:

\begin{lemma}[{\cite[Lemma 13]{GLMWZ16}}]
\label{lemma:ip-gadget-pseudorandom}
Suppose  $m \geq 100\log n$. Let $X \times Y \subseteq \Sigma^n \times \Sigma^n$
and $F \subseteq [n]$ be such that the random variables $\bm{x}_F$ and
$\bm{y}_F$ are $0.9$-spread, for $(\bm x, \bm y) \sim \unif(X \times Y)$.
Then
\[
\forall  z \in \zo^F\colon
\qquad
    \Pr\left[ (\IP_m(\bm x_i, \bm y_i))_{i \in F} = z \right]
    \in 2^{-|F|} \cdot (1 \pm 2^{-m/20}) .
\]
\end{lemma}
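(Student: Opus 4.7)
The plan is to apply Fourier analysis over the group $\mathbb{F}_2^F$, combined with the standard two-source extractor property of the inner product function, which itself follows from a bilinear-form / spectral bound. Write $P(z)\coloneqq\Pr\!\left[(\IP_m(\bm x_i, \bm y_i))_{i\in F} = z\right]$ for $z\in\{0,1\}^F$. Fourier inversion gives
\[
P(z) - 2^{-|F|} \;=\; 2^{-|F|} \sum_{\emptyset\neq S\subseteq F}(-1)^{\langle S,z\rangle}\,\Exp\!\left[(-1)^{\bigoplus_{i\in S}\IP_m(\bm x_i,\bm y_i)}\right],
\]
so it suffices to bound the non-trivial Fourier coefficients and sum them up.

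First I would make the key observation that the XOR of gadget outputs over $S$ is itself an inner product on a longer string: $\bigoplus_{i\in S}\IP_m(\bm x_i,\bm y_i) = \IP_{|S|m}(\bm x_S,\bm y_S)$, where $\bm x_S,\bm y_S$ denote the concatenations of the blocks indexed by $S$. From the $0.9$-spread hypothesis applied to $S$, we obtain $\minentropy(\bm x_S),\minentropy(\bm y_S)\geq 0.9|S|m$. I would then invoke the standard extractor bound for inner product (essentially Lindsey's/Chor--Goldreich lemma, provable via a short rank/spectral argument on the $\pm 1$ matrix with entries $(-1)^{\IP_m(a,b)}$, whose second-largest singular value is $2^{-m/2}$): if $\bm u,\bm v$ on $\{0,1\}^k$ are independent with $\minentropy(\bm u)+\minentropy(\bm v)\geq k+2t$, then $\bigl|\Exp[(-1)^{\IP_k(\bm u,\bm v)}]\bigr|\leq 2^{-t}$. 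Plugging in $k=|S|m$, $t=0.4|S|m$ yields a Fourier coefficient of magnitude at most $2^{-0.4|S|m}$.

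Combining with the Fourier expansion and bounding $\binom{|F|}{k}\leq n^k$:
\[
|P(z) - 2^{-|F|}| \;\leq\; 2^{-|F|}\sum_{k=1}^{|F|}\binom{|F|}{k}\,2^{-0.4km} \;\leq\; 2^{-|F|}\sum_{k\geq 1}\bigl(n\cdot 2^{-0.4m}\bigr)^{k}.
\]
For $m\geq 100\log n$ we have $n\cdot 2^{-0.4m}\leq n^{-39}$, so the geometric sum is dwarfed by $2^{-m/20}=n^{-5}$, finishing the proof.

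The only nontrivial ingredient is the inner-product extractor bound (the bilinear-form / Lindsey estimate); everything else is mechanical Fourier analysis and a union-bound-style geometric sum. In particular, note that it is essential to apply the spreadness hypothesis to each $S$ separately (not just to $F$), in order to get the $2^{-0.4|S|m}$ decay uniformly in $|S|$; applying it only to $F$ would not suffice, which is precisely why the definition of spread quantifies over \emph{all} subsets $I\subseteq J$ rather than only the full set.
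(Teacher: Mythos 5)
Your proof is correct. The paper does not prove this lemma at all—it imports it verbatim as \cite[Lemma 13]{GLMWZ16}—and your argument is precisely the standard one from that source: Fourier-expand the output distribution over $\{0,1\}^F$, observe that $\bigoplus_{i\in S}\IP_m(\bm x_i,\bm y_i)=\IP_{|S|m}(\bm x_S,\bm y_S)$, apply the Chor--Goldreich/Lindsey extractor bound with the spreadness hypothesis instantiated on each $S$ (which, as you correctly emphasize, is exactly why spreadness is defined to quantify over all subsets), and close with a geometric sum; the final comparison $n\cdot 2^{-0.4m}\le 2^{-m/20}$ indeed holds for every $m\ge 100\log n$, not just at equality.
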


\noindent
We may now state the communication version of the Stage Lemma.
\begin{lemma}[Stage Lemma]
    \label{lemma:stage-lemma}
    Assume \cref{assumption}. Let $X \times Y \subseteq \Sigma^n \times
    \Sigma^n$ be a rectangle and $F
    \subseteq [n]$, such that: $|F| \geq 0.1n$,
    \[
        \Pr_{\bm{x}, \bm{y} \sim X \times Y}[f(\bm{x}, \bm{y}) = 0] \geq 1/2,
        \text{ \qquad and \qquad $X_F$ and $Y_F$ are $0.9$-spread}.
    \]
    Then there exist $X'\times Y'\subseteq X \times Y$ and $F' \subsetneq F$, such that:
    $|F \setminus F'| \geq \Omega(\sqrt n)$,
    \[
        \Pr_{\bm{x}, \bm{y} \sim X' \times Y'}[f(\bm{x}, \bm{y}) = 0] \geq 1/2,
        \text{\qquad and \qquad $X'_{F'}$ and $Y'_{F'}$ are $0.9$-spread.}
    \]
    Moreover, there exists a set $R \subseteq F \setminus F'$ such that either
    $|R|= \sqrt n$ (``safe stage'') or $R = \varnothing$ (``unsafe stage''),
    and
    \begin{enumerate}
        \item For every $x,y \in X' \times Y'$, and every $i \in R$, $\IP_m(x_i, y_i) = 1$ (\ie the
            output of every gadget inside $R$ is fixed to 1);
            \label{item:fixing-ones}
        \item $\deficit(X' \times Y', F') \leq \deficit(X \times Y, F) + o(\sqrt{n}m)  - \Omega(|F \setminus
            (F' \cup R)| \cdot m)$. 
    \end{enumerate}
    Here, the constants under $\Omega(\cdot)$ are universal.
\end{lemma}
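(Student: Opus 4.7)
The plan is to mimic the decision-tree argument of \cref{section:lb-dt}, translating every use of ``decision tree leaf = subcube'' into ``communication protocol leaf = $0.9$-spread rectangle''. The adaptation has three ingredients: a Yao-to-deterministic step, a lifting-style spreadness restoration inside each leaf, and a case split between \emph{safe} and \emph{unsafe} stages that governs whether we can invoke the Closeness Lemma.

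\textbf{Step 1 (Yao and leaves).} By \cref{assumption} and Yao's principle, extract a deterministic protocol $\Pi$ of cost $o(\sqrt n m)$ that errs with probability at most $\varepsilon$ on the mixture $\tfrac{1}{2}(\unif(X\times Y) + \sigma(X\times Y, F))$ and hence at most $2\varepsilon$ on each summand. Every leaf of $\Pi$ is a rectangle $L = X_L\times Y_L\subseteq X\times Y$, and because $|\Pi| = o(\sqrt n m)$, the min-entropy of either marginal under $\unif(L)$ is within $o(\sqrt n m)$ of that under $\unif(X\times Y)$. \textbf{Step 2 (Restore spreadness).} Inside each leaf $L$ I would run the standard spreadness-restoration procedure from communication lifting: iteratively find a block $B\subseteq F$ of free coordinates whose Alice- or Bob-side marginal min-entropy falls below $0.9|B|m$, condition on its most likely value, and delete $B$ from the free set. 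Each pass pays off $\Omega(|B|\cdot m)$ of deficit for the $|B|$ removed coordinates, yielding after finitely many iterations a structured subrectangle $L^\star\subseteq L$ together with $F_L\subseteq F$ such that $X_{L^\star,F_L}$ and $Y_{L^\star,F_L}$ are $0.9$-spread and
\[
 \deficit(L^\star, F_L) \;\leq\; \deficit(X\times Y, F) + o(\sqrt n m) - \Omega\bigl(|F\setminus F_L|\cdot m\bigr).
\]

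\textbf{Step 3 (Safe vs.\ unsafe).} Call the stage \emph{safe} when $\deficit(X\times Y, F)$ lies below the threshold at which the (forthcoming) Closeness Lemma still guarantees, for every structured leaf, that $\Pr_{\sigma}[(\bm x,\bm y)\in L^\star] = (1\pm o(1))\Pr_{\unif(X\times Y)}[(\bm x,\bm y)\in L^\star]$, and \emph{unsafe} otherwise. In a safe stage, I would run the leaf-selection step of \cref{lemma:dt-stage}: since $\Pi$ errs at most $2\varepsilon$ on each of $\unif(X\times Y)$ and $\sigma(X\times Y, F)$, and the two distributions induce matching weights on structured leaves, pigeonholing over $0$-labelled leaves produces an $L^\star$ where at least a $1/2-O(\varepsilon)$ fraction of $\sigma$-inputs satisfy $f=0$. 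Total probability then yields $I\in\binom{F_L}{\sqrt n}$ such that conditioning on $\IP_m(x_i, y_i) = 1$ for all $i\in I$ preserves the $f=0$ majority; setting $R\coloneqq I$, $F'\coloneqq F_L\setminus R$, and taking $X'\times Y'$ to be the subrectangle of $L^\star$ realising this gadget fixing (nonempty by \cref{lemma:ip-gadget-pseudorandom}), a final short restoration pass costing $o(\sqrt n m)$ more in deficit ensures $0.9$-spreadness on $F'$. In an unsafe stage I would instead set $R=\varnothing$, pick a $0$-leaf $L^\star$ using only the $\unif(X\times Y)$-side error bound, and take $X'\times Y' = L^\star$, $F'=F_L$; the requirement $|F\setminus F'|\geq\Omega(\sqrt n)$ is forced because otherwise the deficit after Step 2 would fall below the safe threshold, contradicting the classification.

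\textbf{Main obstacle.} The crux is Step 3's Closeness Lemma together with the calibration of the safe/unsafe threshold. The lemma must be established for the structured subleaves emerging from Step 2, and the threshold has to be chosen so that (i) it applies in every safe stage and (ii) the deficit accounting of Step 2 survives the $\Theta(\sqrt n)$ outer iterations needed for \cref{thm:main-detailed}. Any looseness in how Step 2 trades deficit for removed coordinates, or in the deficit tolerance of the Closeness Lemma, breaks the contradiction; the rest is a rectangle-valued rewriting of the decision-tree proof.
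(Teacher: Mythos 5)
Your overall architecture matches the paper's: Yao's principle, a density-restoring/spreadness-restoration pass inside each leaf (the paper's Protocol Transformation Lemma, \cref{lemma:protocol-transformation}), a safe/unsafe case split, and the Closeness Lemma to justify sprinkling $\sqrt n$ ones in safe stages. However, your safe/unsafe criterion is the wrong one, and this breaks the unsafe case. You classify a stage as unsafe when $\deficit(X\times Y,F)$ exceeds some threshold. But the Stage Lemma must deliver $|F\setminus F'|\geq\Omega(\sqrt n)$ in \emph{every} stage (this drives both the $O(\sqrt n)$ bound on the number of stages and the deficit telescoping in the proof of \cref{thm:main-detailed}), and in an unsafe stage with $R=\varnothing$ the only source of removed variables is the leaf itself. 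A high starting deficit gives you no lower bound on $|F\setminus F_L|$: the spreadness-restoration step removes coordinates only where spreadness actually fails, and the deficit recursion $\deficit(L,F_L)\leq\deficit(X\times Y,F)+O(|\Pi|)-\Omega(|F\setminus F_L|m)$ is an upper bound that is perfectly consistent with $F_L=F$. Your claim that ``$|F\setminus F'|\geq\Omega(\sqrt n)$ is forced because otherwise the deficit after Step 2 would fall below the safe threshold'' has the implication backwards: the deficit \emph{falls} when \emph{many} variables are removed, so removing few variables keeps the deficit high and produces no contradiction with your classification. Under your scheme an unsafe stage can remove zero variables, fix zero gadgets to one, and leave the deficit above threshold, so the process stalls.

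The paper avoids this by classifying the stage \emph{after} the protocol transformation, using the median of $|F|-|F_{\bm L_0}|$ over a random $0$-leaf $\bm L_0$ (\cref{eq:safe-or-unsafe}): if the median is at most $\sqrt n/1000$, at least half the $0$-leaf mass sits on leaves to which the Closeness Lemma applies (it is the per-leaf count of newly fixed coordinates, not the deficit, that the birthday-paradox argument needs); if the median exceeds $\sqrt n/1000$, at least half the $0$-leaf mass sits on leaves that already remove $>\sqrt n/1000$ variables, and a union bound over the error, the median event, and the deficit event of \cref{lemma:protocol-transformation} finds one such leaf that is mostly $0$-valued and satisfies the deficit decrease. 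You would need to replace your deficit-threshold criterion with this leaf-statistic criterion (or something equivalent) for the unsafe branch to go through. A secondary point: after choosing $R$, the set $\{(x,y):\IP_m(x_i,y_i)=1\ \forall i\in R\}$ is not a rectangle, so one must fix concrete values $x^0_R,y^0_R$ via the min-entropy chain rule and Markov's inequality (the paper's \cref{prop:safe-stage-clean}); you gesture at ``the subrectangle realising this gadget fixing'' but this selection is where the $+2|R|$ deficit cost is actually paid and it needs an argument.
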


\begin{figure}%
    \centering
    \subfloat[Evolution of $X \times Y$]{\begin{tikzpicture}
    \def\a{3}
    \def\b{4}
    \def\rat{\b / \a}

    \def\colorlist{{
        "VioletRed", "ForestGreen", "Periwinkle", "Peach"
    }}

    \pic[shift = {(-\a, -\a)}] {universe-rect = {2 * \a}{2 * \a}};
    \node[above] at (0, \a) {$\Sigma^n$};
    \node[right] at (\a, 0) {$\Sigma^n$};

    \pic {comm-rect = {{(\a - 0.06)}}{{(\a - 0.06)}}{VioletRed}{r1}};
    \pic[shift = {(0.5, 0.5)}] {comm-rect = {2}{2}{ForestGreen}{r2}};
    \pic[shift = {(1.6, 1.3)}] {comm-rect = {0.85}{\rat * 0.85}{Periwinkle}{r3}};
    \pic[shift = {(1.3, 1.4)}] {comm-rect = {0.5}{\rat * 0.5}{Peach!70!black}{r4}};

    \draw[->, linecol = {0.3pt}{VioletRed}{ForestGreen}] (r1) -- (r2);
    \draw[->, decorate,
        decoration = {snake, segment length = 1mm, amplitude = 0.3mm, post length = 1mm}] (r2) -- (r3);
    \draw[->, linecol = {0.3pt}{Periwinkle}{Peach!70!black}] (r3) -- (r4);

    \node[fancy-arrow, scale = 0.5] (ar) at (0, -\a - 1.4) {};
    \node[right = 5pt] at (ar) {$\IP^n$}; 
    
    \begin{scope}[shift = {(0, -2 * \a - 4)}]
        \fill[black!10, rounded corners = 2pt] (-0.1 - \a, 0) -- (0, {-\b - 0.1 * \rat}) --
            (\a + 0.1, 0) -- (0, {\b + 0.1 * \rat}) -- cycle;
        \draw[very thick, fill = white, name path = fr] (-\a, 0) -- (0, -\b) -- (\a, 0) -- (0, \b) --
            cycle;
        \draw[dashed, thin] (-\a, 0) -- ++(2 * \a, 0);
        \node[below = 3pt] at (0.08, -\b) {$0^n$};
        \node[above = 3pt] at (0.08, \b) {$1^n$};

        \foreach \i in {1, 2}{
            \path[name path = val-\i] (-\a, {(0.35 * (2 * \i - 3) * \b)}) -- ++(2 * \a, 0);
            \path[name intersections = {of = fr and val-\i}];
            \draw[thick] (intersection-1) -- (intersection-2);
        }

        \pic {rhom = {{(\a - 0.07)}}{{(\b - 0.07)}}{VioletRed}{a1}};
        \pic[shift = {(0.4, 1.2)}] {rhom = {1.5}{\rat * 1.5}{ForestGreen}{a2}};
        \pic[shift = {(0.9, 1.2)}] {rhom = {0.85}{\rat * 0.85}{Periwinkle}{a3}};
        \pic[shift = {(0.9, 1.6)}] {rhom = {0.5}{\rat * 0.5}{Peach!70!black}{a4}};

        \draw[->, linecol = {0.3pt}{VioletRed}{ForestGreen}] (a1) -- (a2);
        \draw[->, decorate,
            decoration = {snake, segment length = 1mm, amplitude = 0.3mm, post length = 1mm}] (a2) -- (a3);
        \draw[->, linecol = {0.3pt}{Periwinkle}{Peach!70!black}] (a3) -- (a4);

        \node at (0, 0.8 * \b) {\Large $1$};
        \node at (0, -0.8 * \b) {\Large $0$};

        \node at (0, 4.7) {};
        \node at (0, -4.7) {};        
    \end{scope}

\end{tikzpicture}}
    \hspace{0.1cm}
    \subfloat[Evolution of $F$]{\begin{tikzpicture}
    \def\a{6}
    \def\b{2}
    \def\rat{\b / \a}
    
    \def\colorlist{{
        "VioletRed", "ForestGreen", "Periwinkle", "Peach"
    }}
    \def\vallist{{
        "1", "1", "0", "0", "1", "1", "1", "0", "1"
    }}
    \def\shi{{
        "2", "3", "2", "2"
    }}

    \tikzmath{
        int \im;
        \gs = 0;
        \hm = 0;
        \size = \b;
        for \im in {0, 1, ..., 3}{
            {
            \begin{scope}[shift = {(0, -\a + \hm - 0.25)}]
                \pgfmathparse{\colorlist[\im]}
                \edef\col{\pgfmathresult}
                \foreach \j in {0, 1}{
                    \begin{scope}[shift = {(0, {\j * (\size / 3 + 0.1)})}]
                        \pic[\col] {universe-rect = {\a}{\size / 4}};
                        \draw[\col, step = 0.2 * \b, thin] (0.2 * \b * \gs, 0) -- ++(0, \size / 4);

                        \fill[
                            pattern = {Lines[angle = 45, distance = 3pt, line width = 0.1pt]},
                            pattern color = blue] (0.2 * \b * \gs, 0) rectangle (\a, \size / 4);
                    \end{scope}
                }
                \node[left, \col] at (0, \size / 8) {$X$};
                \node[left, \col] at (0, \size / 3 + 0.1 + \size / 8) {$Y$};

                \draw[\col,
                    decoration = {brace, mirror, raise = 2pt, amplitude = 3pt},
                    decorate] (0.2 * \b * \gs, 0) -- (\a, 0) node[midway, below = 3pt] {\small spread};
                \ifthenelse{
                    \im > 0
                }{
                    \draw[\col,
                        decoration = {brace, mirror, raise = 2pt, amplitude = 3pt},
                        decorate] (0, 0) -- (0.2 * \b * \gs, 0) node[midway, below = 3pt] {\small fixed};
                }{};
                \end{scope}
            };
            \gs = \gs + \shi[\im];
            \size = \size - 0.2 * \b;
            \hm = \hm - \size * 0.65 - 0.8;
        };
    }
    \node[above] at (\a / 2, -\a + 1) {$[n]$};

    \node[fancy-arrow, scale = 0.5] (ar) at (\a / 2, -2 * \a - 0.4) {};
    \node[right = 5pt] at (ar) {$\IP^n$}; 
    
    \begin{scope}[shift = {(0, -2 * \a - 4)}]
        \tikzmath{
            int \im;
            \gs = 0;
            \hm = 0;
            \size = \b;
            for \im in {0, 1, ..., 3}{
                {
                \begin{scope}[shift = {(0, \hm)}]
                    \pgfmathparse{\colorlist[\im]}
                    \edef\col{\pgfmathresult}
                    \pic[\col] {universe-rect = {\a}{\size}};
                    \node[left, \col] at (0, \size / 2) {$\IP(X \times Y)$};
                    \draw[\col, step = 0.2 * \b, thin] (0, 0) grid (0.2 * \b * \gs, \size);
                    \pgfmathtruncatemacro{\t}{\gs - 1}
                    \ifthenelse{
                        \t > 0
                    }{
                        \foreach \i in {0, 1, ..., \t}{
                            \pgfmathparse{\vallist[\i]}
                            \edef\val{\pgfmathresult}
                            \pgfmathtruncatemacro{\temp}{4 - \im}
                            \foreach \j in {0, 1, ..., \temp}{
                                \node[\col] at ({(0.5 + \i) * 0.2 * \b}, {(0.5 + \j) * 0.2 * \b})
                                {$\val$};
                            }
                        }
                    }{};
                    \fill[
                        pattern = {Lines[angle = 45, distance = 3pt, line width = 0.1pt]},
                        pattern color = blue] (0.2 * \b * \gs, 0) rectangle (\a, \size);
                    \draw[\col,
                        decoration = {brace, mirror, raise = 2pt, amplitude = 3pt},
                        decorate] (0.2 * \b * \gs, 0) -- (\a, 0) node[midway, below = 5pt] {\small $F$};
                    \ifthenelse{
                        \im = 2
                    }{
                        \draw[decoration = {brace, mirror, raise = 2pt, amplitude = 3pt},
                            decorate] (0.2 * \b * 2, 0) -- ++(0.2 * \b * 3, 0)
                            node[midway, below = 5pt] {\small unsafe};
                    }{};

                \end{scope}
                };
                \gs = \gs + \shi[\im];
                \size = \size - 0.2 * \b;
                \hm = \hm - \size - 0.8;
            };
        }

        \draw[pattern = {Lines[angle = 45, distance = 3pt, line width = 0.1pt]}, pattern color = blue]
            (0, 0) rectangle (\a, \b);
        \node[above] at (\a / 2, \b) {$[n]$};
        \node at (0, 2.7) {};
        \node at (0, -6.7) {};
    \end{scope}

\end{tikzpicture}}
    \caption{Communication stages. The second stage is \emph{unsafe}, we assign too many coordinates}
    \label{fig:comm-process-illustration}
\end{figure}
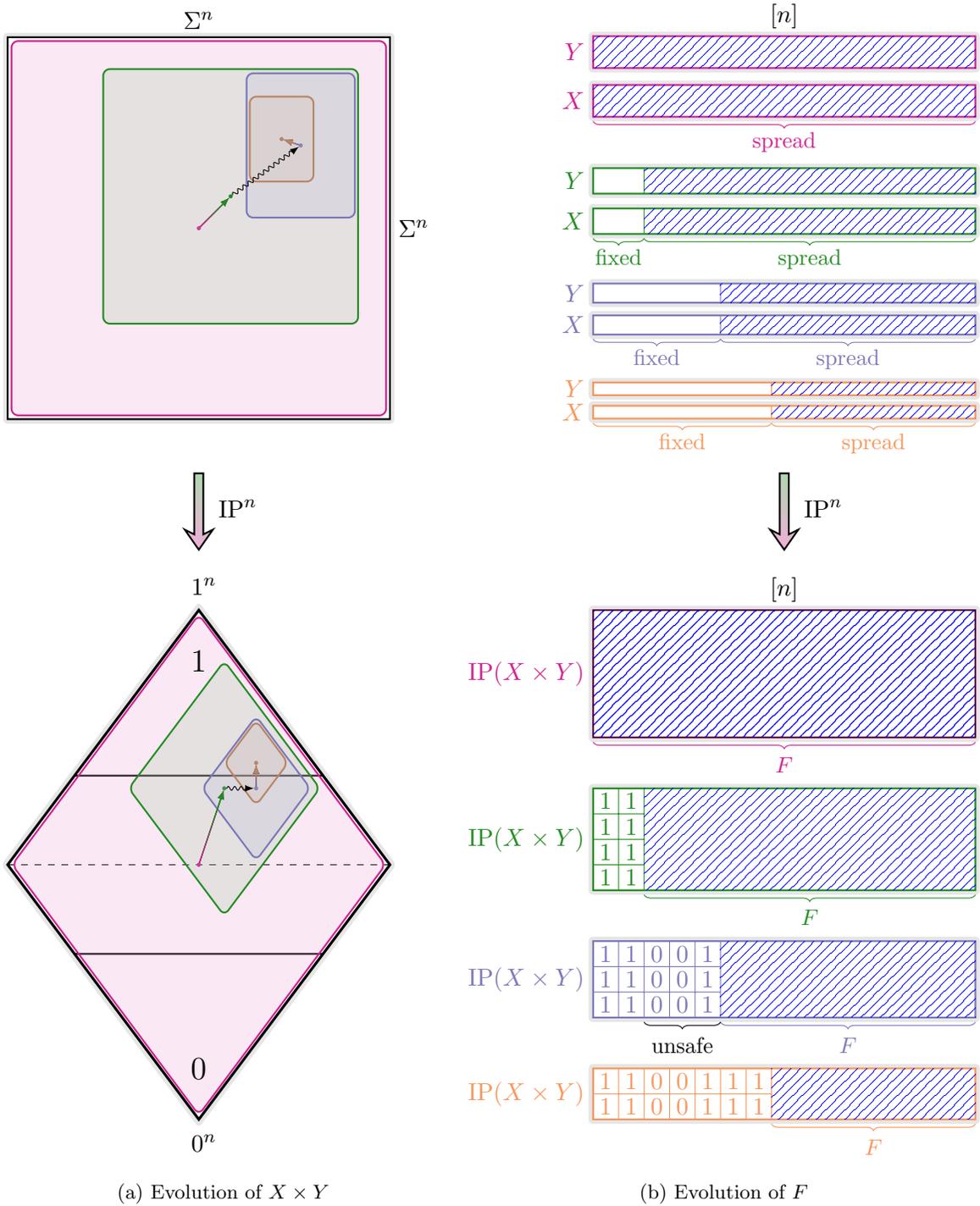

\noindent
To give some intuition behind the deficit inequality, it comes from 
\[
    \deficit(X' \times Y', F') \leq \deficit(X \times Y, F)
        + O(|\Pi| + |R|) - \Omega(|F \setminus (F' \cup R)| \cdot m) ,
\]
where $\Pi$ refers to the protocol obtained from Yao's principle, and the
deficit increases in each stage by $O(|\Pi| + |R|)$ because the entropy is
reduced by $O(1)$ for each bit of communication, as well as for each gadget in
$R$ that we fix to 1 (we lose $O(1)$ bits of entropy instead of $O(m)$ because
we leave the $2m$ input bits random); meanwhile, the process will also fix the
$2m$ input bits for each gadget in $F \setminus (F' \cup R)$ in an uncontrolled
way, to ``restore'' spreadness and decreases the deficit. 

The remainder of this paper is dedicated to proving the Stage Lemma. Assuming
this lemma, here is the proof of the main theorem:

\begin{proof}[Proof of \cref{thm:main-detailed} given \cref{lemma:stage-lemma}]
As in the proof for decision trees, 
we may assume without loss of generality that $\Pr_{\bm x, \bm y \sim \Sigma^n \times \Sigma^n}[f(\bm x,
\bm y) = 0] \geq \frac{1}{2}$, and we will
apply the Stage Lemma iteratively over many stages,

Observe that the initial value $X \times Y = \Sigma^n \times \Sigma^n$ and $F =
[n]$ satisfies the condition of \cref{lemma:stage-lemma}, and that $X' \times
Y'$ and $F'$ obtained from \cref{lemma:stage-lemma} maintain the conditions to
reapply the lemma, as long as $|F'| \geq 0.1n$. We may therefore perform the
following procedure.

Initialize
\begin{itemize}[noitemsep]
    \item $X \times Y = \Sigma^n \times \Sigma^n$ (the current rectangle);
    \item $F = [n]$ (the current ``free variables'');
    \item $T = \emptyset$ (the current set of gadgets fixed to 1);
    \item $U = \emptyset$ (the current set of ``unfree variables'' other than $T$).
\end{itemize}

While $|F| \geq 0.1n$, apply \cref{lemma:stage-lemma} to obtain $X' \times Y'
\subseteq X \times Y$, $F' \subsetneq F$, and $R \subseteq F \setminus F'$, and
update
\[
    X \times Y \gets X' \times Y',
    \qquad
    U \gets U \cup (F \setminus (F' \cup R)),
    \qquad
    F \gets F',
    \qquad
    T \gets T \cup R.
\]
Since $F$ shrinks in each iteration, the process halts. Upon halting,
the final rectangle $X \times Y$ satisfies
\begin{equation}
\label{eq:main-proof-final-rectangle-0}
    \Pr[ f(\bm x, \bm y) = 0 ] \geq 1/2
\end{equation}
for uniformly random $(\bm x, \bm y) \sim \unif(X \times Y)$.

Note that the total number of stages is $O(\sqrt{n})$, since $|F \setminus F'| = \Omega(\sqrt{n})$, and so the number of ``free variables'' decreases by $\Omega(\sqrt{n})$ on every stage.
\ignore{
\begin{claim}
    The total number of stages is $O(\sqrt{n})$.
\end{claim}
\begin{proof}
    Let us denote the total number of stages as $s$.
    Then, there are no more than $\frac{1}{2}s$ unsafe stages, which follows from the fact that a safe stage contributes at most $t = o(\sqrt{n}m)$ into the deficit, and an unsafe stage detracts at least $\Omega(\sqrt{n} m) \gg 2t$.
    It follows that at least $\frac{1}{2}s$ stages are safe, and therefore $s \leq 2
    \sqrt{n}$, since every safe stage assigns at least $\sqrt{n}$ gadgets.    
\end{proof}
}

Now consider the deficit $\deficit(X \times Y, F)$ of the final rectangle with respect to the
final set of free variables. By property (2) of \cref{lemma:stage-lemma}, this
satisfies
\[
    0 \leq \deficit(X \times Y, F) \leq o(\sqrt{n}m) \cdot O(\sqrt{n}) - \Omega(|U|m) \leq  o(n m)  - \Omega(|U| m),
\]
implying $|U| \leq o(n)$ for sufficiently
large $n$.

When the process halts, the number of free variables is at most $|F| < 0.1 n$,
meaning that $|T|+|U| = n-|F| \geq 0.9 n$, so
\[
    |T| \geq 0.9 n - |U| \geq 0.8 n .
\]
For each $i \in T$, we have ensured in property (1) of \cref{lemma:stage-lemma}
that $\IP_m(x_i, y_i) = 1$ for all $(x,y) \in X \times Y$. Since $|T| > \tfrac{2}{3} n$,
this means that $f(x,y) = 1$ for all $(x,y) \in X \times Y$, contradicting
\cref{eq:main-proof-final-rectangle-0}. This concludes the proof.
\end{proof}

\subsection{Proving the \texorpdfstring{\nameref{lemma:stage-lemma}}{}}

The structure of the proof of the Stage Lemma is shown in
\cref{fig:stage-lemma-diagram}, which explains how the 4 main ingredients fit
together: the Protocol Transformation lemma
(\cref{lemma:protocol-transformation}), the communication version of the
Closeness Lemma (\cref{lemma:closeness}), the Safe Stage Lemma
(\cref{lemma:safe-stage-lemma}), and the Unsafe Stage Lemma
(\cref{lemma:unsafe-stage-lemma}). We state these lemmas formally below. Let us
begin with the definition of the sprinkled-1s distribution, similar to the one
we used for decision trees:

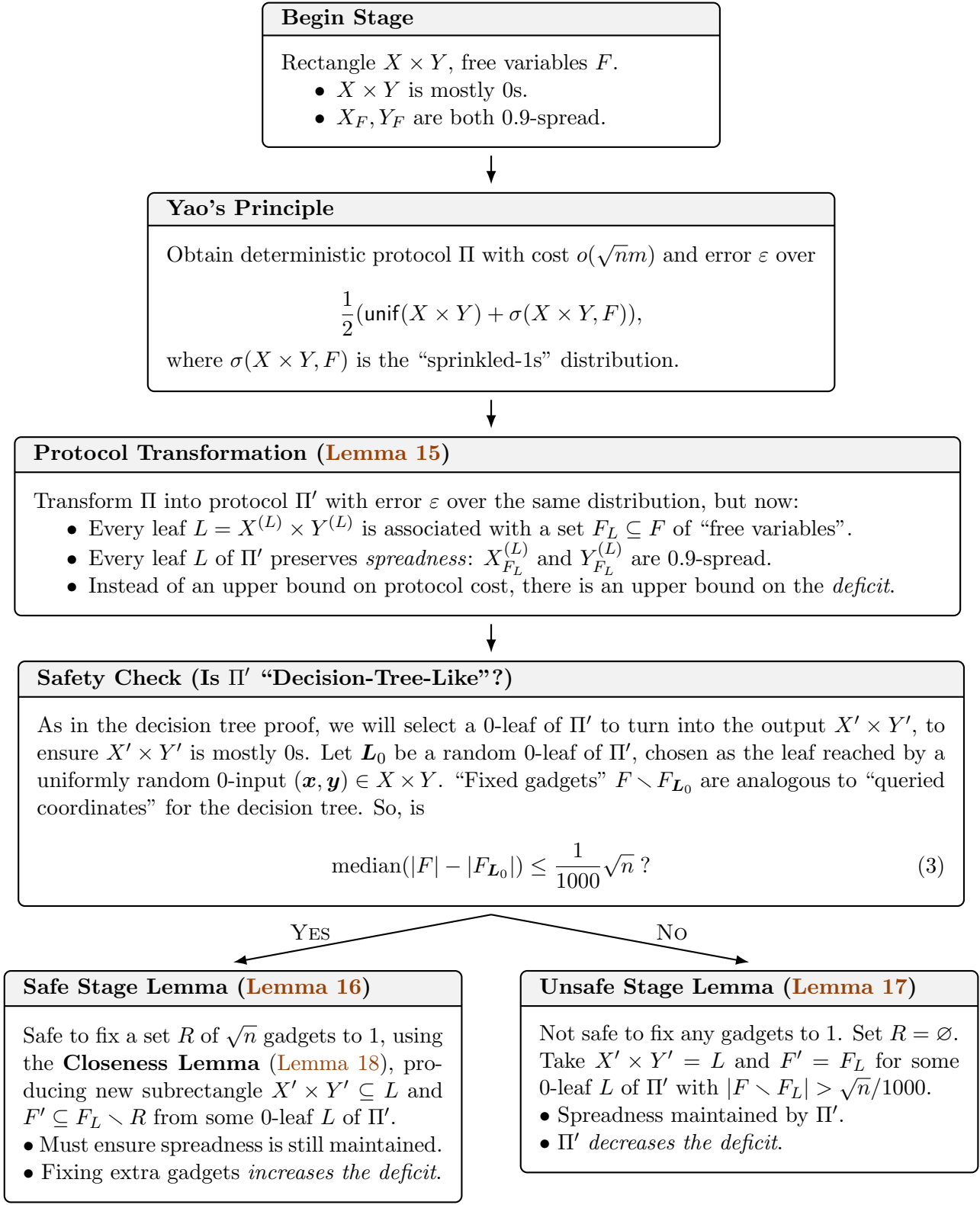
\begin{figure}
\tikzset{
  line/.style={-Latex, thick},
  block/.style={minimum width=1cm, minimum height=1.2cm, align=center, rounded corners=2pt}
}
\centering
\begin{tikzpicture}[node distance=5mm and 7mm]
  \node (start)   [block] {\begin{tcolorbox}[flowblock,width=8cm,title=Begin Stage]
      Rectangle $X \times Y$, free variables $F$.
      \begin{itemize}[noitemsep,topsep=0pt]
        \item $X \times Y$ is mostly 0s.
        \item $X_F, Y_F$ are both $0.9$-spread.
    \end{itemize}
  \end{tcolorbox}};
  \node (yao)     [block, below=of start] {%
    \begin{tcolorbox}[flowblock,width=12cm,title=Yao's Principle]
    Obtain deterministic protocol
    $\Pi$ with cost $o(\sqrt{n}m)$ and error $\epsilon$ over\\
    \[
        \frac{1}{2}(\unif(X \times Y) + \sigma(X \times Y, F)) ,
    \]
    where $\sigma(X \times Y, F)$ is the ``sprinkled-1s'' distribution.
    \end{tcolorbox}};
  \node (trans)   [block, below=of yao] {%
  \begin{tcolorbox}[flowblock,title=Protocol Transformation (\cref{lemma:protocol-transformation})]
      Transform $\Pi$ into protocol $\Pi'$ with error $\epsilon$ over the same distribution, but now:
      \begin{itemize}[noitemsep,topsep=0pt]
        \item Every leaf $L = X^{(L)} \times Y^{(L)}$ is associated with a set $F_L \subseteq F$ of ``free variables''.
        \item Every leaf $L$ of $\Pi'$ preserves \emph{spreadness}:
            $X^{(L)}_{F_L}$ and $Y^{(L)}_{F_L}$ are $0.9$-spread.
        \item Instead of an upper bound on protocol cost, there is an upper bound on the \emph{deficit}.
      \end{itemize}
  \end{tcolorbox}
  };
  \node (median)  [block, below=of trans] {%
    \begin{tcolorbox}[flowblock,title=Safety Check (Is $\Pi'$ ``Decision-Tree-Like''?)]
        As in the decision tree proof, we will select a 0-leaf of $\Pi'$ to turn
        into the output $X' \times Y'$, to ensure $X' \times Y'$ is mostly 0s.
        Let $\bm{L}_0$ be a random 0-leaf of $\Pi'$, chosen as the leaf reached by a
        uniformly random 0-input $(\bm x, \bm y) \in X \times Y$. ``Fixed gadgets''
        $F \setminus F_{\bm{L}_0}$ are analogous to ``queried coordinates'' for the
        decision tree. So, is
        \begin{equation}
        \label{eq:safe-or-unsafe}
        \mathrm{median}( |F| - |F_{\bm L_0}| ) \leq \frac{1}{1000} \sqrt n \;?
        \end{equation}
    \end{tcolorbox}};

  \node (safe)   [block] at ($ (median.south) + (-4.5cm,-3.0cm) $) {%
  \begin{tcolorbox}[flowblock,width=8cm,title=Safe Stage Lemma (\cref{lemma:safe-stage-lemma})]
  Safe to fix a set $R$ of $\sqrt n$ gadgets to 1, using the
  \textbf{Closeness Lemma} (\cref{lemma:closeness}),
  producing new subrectangle $X' \times Y' \subseteq L$ and $F' \subseteq F_L \setminus R$
  from some 0-leaf $L$ of $\Pi'$.\\
  $\bullet$ Must ensure spreadness is still maintained.\\
  $\bullet$ Fixing extra gadgets \emph{increases the deficit}.
  \end{tcolorbox}};
  \node (unsafe) [block] at ($ (median.south) + ( 4.5cm,-2.74cm) $) {%
    \begin{tcolorbox}[flowblock,width=8cm,title=Unsafe Stage Lemma (\cref{lemma:unsafe-stage-lemma})]
    Not safe to fix any gadgets to 1. Set $R = \varnothing$. Take $X' \times Y' = L$
    and $F' = F_L$ for some 0-leaf $L$ of $\Pi'$ with $|F\setminus F_L| > \sqrt n /1000$.\\
    $\bullet$ Spreadness maintained by $\Pi'$.\\
    $\bullet$ $\Pi'$ \emph{decreases the deficit}.
    \end{tcolorbox}};

  \draw[line] (start) -- (yao);
  \draw[line] (yao) -- (trans);
  \draw[line] (trans) -- (median);

  \draw[line] (median.south) -- node[pos=0.7, above]{\textsc{Yes}} (safe.north);
  \draw[line] (median.south) -- node[pos=0.7, above]{\textsc{No}}  (unsafe.north);
\end{tikzpicture}
\caption{Outline of the proof of the Stage Lemma.}
\label{fig:stage-lemma-diagram}
\end{figure}

\ignore{
We now prove the main lemma, \cref{lemma:stage-lemma}.
We start with a rectangle $X \times Y \subseteq \Sigma^n \times \Sigma^n$ and a set $F \subseteq [n]$
of free variables. We need to find a new rectangle $X' \times Y' \subseteq X \times Y$ and ``fix'' some variables
in $F$ to obtain a new set of free variables $F' \subseteq F$, where there is $R \subseteq F \setminus F'$
of newly fixed variables such that the $\IP$ gadgets in $R$ are all fixed to 1. The procedure for accomplishing this
is \cref{alg:stage}.

To construct the set $R$ of gadgets which will be fixed to 1, we will ``sprinkle'' some fixed 1s.
}

\ignore{
\begin{algorithm}[h!]
    \begin{algorithmic}[1]
        \Input{$X \times Y \subseteq \Sigma^n \times \Sigma^n$; $F \subseteq [n]$ satisfying the
            preconditions of \cref{lemma:stage-lemma}.}
        \Output{$X' \times Y'$; $R$; $F'$ satisfying the conclusion of \cref{lemma:stage-lemma}.}
        \vspace{0.5em}
        \State Let $\Pi$ be a protocol for $f$ with cost $|\Pi| = o(\sqrt{n}m)$ and error $\varepsilon$ over
        distribution $\tfrac{1}{2}\left(\unif(X \times Y) + \sigma(X \times Y, F)\right)$.
        \State Let $\Pi'$ be a transformed protocol which ensures spreadness at the leaves, obtained
        from \cref{lemma:protocol-transformation}.
        \State Let $\bm{L_0} = (L(\bm x, \bm y) \mid \Pi'(\bm x, \bm y) = 0)$ be a random 0-valued leaf
        of $\Pi'$.
        \If{$\mathrm{median}(|F| - |F_{\bm{L_0}}|) \le \sqrt{n} / 1000$}
            \Comment{In this case, we call the stage \textbf{safe}.}
            \State Let $\bm I \sim \binom{F}{\sqrt n}$ be a random set of ``free variables'' where we
            sprinkle 1s.
            \State Find a fixed setting $\bm I = R$ with new rectangle $X' \times Y'$, and free variables
            $F' \subseteq F \setminus R$ using the  \nameref{lemma:safe-stage-lemma}.
        \Else
            \Comment{In this case, we call the stage \textbf{unsafe}.}
            \State Let $R = \emptyset$ (fix no new gadgets to 1).
            \State Obtain new rectangle $X' \times Y'$, free variables $F'$ from the
            \nameref{lemma:unsafe-stage-lemma}.
        \EndIf
    \end{algorithmic}
    \caption{Procedure within each stage.}
    \label{alg:stage}
\end{algorithm}
}

    \begin{definition}[Sprinkled $1$s Distribution]
    \label{def:sprinkled-1s}
    For $X \times Y \subseteq \Sigma^n \times \Sigma^n$ and $F \subseteq [n]$, let
    $\sigma(X \times Y, F)$ be the \emph{sprinkled-1s distribution} obtained by
    \begin{enumerate}[noitemsep]
      \item Choosing a uniformly random subset $\bm I \sim \binom{F}{\sqrt n}$
      of free variables, and
      \item Returning $(\bm x, \bm y) \sim \unif(X \times Y)$ conditional on the event
      $\IP_m(\bm x_i, \bm y_i) = 1$ for all $i \in \bm I$.
    \end{enumerate}
    \end{definition}

\ignore{
    To find the rectangle $X' \times Y'$ on which to recurse, we start with finding a ``good leaf'' from a well-structured (deterministic) protocol $\Pi'$ computing $f$ on domain $X \times Y$. To define this protocol, start with a (deterministic) protocol $\Pi$ with cost $o(\sqrt n m)$ and error $\varepsilon$ over the distribution
    \[
        \tfrac{1}{2}\left( \unif(X \times Y) + \sigma(X \times Y, F) \right) ,
    \]
    where $\sigma(X \times Y, F)$ is the sprinkled-1s distribution in \cref{def:sprinkled-1s}. Such protocol exists by the assumption we made in the beginning of the proof. Note that the same protocol makes error no more than $2\varepsilon$ on both $\unif(X \times Y)$ and $\sigma(X \times Y, F)$.
    
    We modify the protocol $\Pi$ to obtain a protocol $\Pi'$, such that small error and small depth are preserved (at least for a typical leaf), and moreover, the leaves of $\Pi'$ are spread.
    }
    \noindent
    We prove the following Protocol Transformation lemma in  \cref{section:protocol-transformation}.
    \begin{lemma}[Protocol Transformation Lemma]
    \label{lemma:protocol-transformation}
    Let $X \times Y \subseteq \Sigma^n \times \Sigma^n$, let $F \subseteq [n]$, and
    let $\Pi$ be any deterministic protocol with cost $|\Pi| = o(\sqrt n m)$ and error $\varepsilon$ over
    an arbitrary distribution $\mu$.
    Suppose that $X_F$ and $Y_F$ are both $0.9$-spread.
    Then there exists a protocol $\Pi'$ with error at most $\varepsilon$ over the same
    distribution, such that each leaf $L = X^{(L)} \times Y^{(L)}$ of $\Pi'$ is
    associated with a subset $F_L \subseteq F$ of free variables, where:
    \begin{enumerate}[noitemsep]
        \item For every leaf $L$, the random variable $(\bm{x}^{(L)}, \bm{y}^{(L)}) \sim \unif(X^{(L)} \times Y^{(L)})$
        has $\bm{x}^{(L)}_{F_L}$ and $\bm{y}^{(L)}_{F_L}$ both $0.9$-spread; and
        \item For a random leaf $\bm{L} \define L(\bm x, \bm y)$ defined as the leaf of a random
        $(\bm x, \bm y) \sim \unif(X \times Y)$,  with probability at least $1-4\varepsilon$,
        \[
            \deficit(\bm L, F_{\bm L}) \leq \deficit(X \times Y, F) + O(|\Pi|) - \Omega(|F \setminus F_{\bm L} |  \cdot m),
        \]
        where the constants under $O(\cdot)$ and $\Omega(\cdot)$ are universal.
    \end{enumerate}
    \end{lemma}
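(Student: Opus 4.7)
The plan is to follow the density restoration paradigm from BPP-style lifting theorems~\cite{GLMWZ16,BPPlifting}: we append a ``spreadness cleanup'' phase to the end of $\Pi$ that partitions each leaf $L$ of $\Pi$ into sub-rectangles (the leaves of $\Pi'$), each associated with a subset $F_L \subseteq F$ on which the sub-rectangle is $0.9$-spread. Since the cleanup only partitions inputs and emits no output, the error of $\Pi'$ over $\mu$ equals that of $\Pi$. A pre-cleanup bound on the leaf deficit is immediate: if Alice and Bob send $c_A(L)$ and $c_B(L)$ bits to reach $L$, then $|X^{(L)}| \geq |X|/2^{c_A(L)}$ combined with the definition of min-entropy gives $\H_\infty(\bm x^{(L)}_F) \geq \H_\infty(\bm x_F) - c_A(L)$, and symmetrically for Bob, yielding $\deficit(L, F) \leq \deficit(X \times Y, F) + |\Pi|$ for every leaf $L$.

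The cleanup itself is iterative. While $X^{(L)}_F$ fails to be $0.9$-spread, pick a witnessing $I \subseteq F$ with $\H_\infty(\bm x^{(L)}_I) < 0.9 |I| m$ and partition $L$ into the sub-rectangles $L_z = \{x \in X^{(L)} : x_I = z\} \times Y^{(L)}$, one for each $z \in \Sigma^I$; replace $F$ by $F \setminus I$ and recurse, alternating with the symmetric step on Bob's side. Monotonicity of $0.9$-spreadness under restriction to subsets of $F$ ensures that Alice's restored spreadness is not broken by Bob's subsequent iterations. A direct min-entropy chain-rule computation yields
\[
    \deficit(L_z, F \setminus I) \;\leq\; \deficit(L, F) - |I|m + \log(1/p_z),
\]
where $p_z = \Pr_{\bm x \sim X^{(L)}}[\bm x_I = z]$, obtained by combining $\H_\infty(\bm x''_{F \setminus I}) \geq \H_\infty(\bm x^{(L)}_F) - \log(1/p_z)$ for the restricted Alice-side with the trivial bound $\H_\infty(\bm y_{F \setminus I}) \geq \H_\infty(\bm y_F) - |I|m$ on Bob's side. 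Dominant branches with $p_z \geq 2^{-0.9|I|m}$ thus shave off at least $0.1 |I| m$ from the deficit, while atypical branches with much smaller $p_z$ carry proportionally small probability mass.

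Combining this iterative reduction with a ``typical-set''-style counting argument over the iterations, I would conclude that except for an aggregate fraction of at most $4\varepsilon$ of inputs $(\bm x, \bm y) \sim \unif(X\times Y)$, the final leaf $\bm L$ satisfies $\deficit(\bm L, F_{\bm L}) \leq \deficit(L, F) - \Omega(|F \setminus F_{\bm L}|\cdot m)$, which together with the pre-cleanup bound establishes property~(2). Property~(1) is immediate from the termination condition of the cleanup. The main obstacle is the min-entropy bookkeeping across many iterations: one must amortize the ``overshoot'' $\log(1/p_z) - 0.9|I|m$ contributed by atypical branches against the guaranteed deficit reduction on dominant branches, using the fact that any branch with a large overshoot carries correspondingly small probability mass so that the total exceptional mass aggregates to at most $4\varepsilon$ over all iterations, rather than blowing up linearly in their number.
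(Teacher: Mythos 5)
Your architecture matches the paper's: run $\Pi$, append a density-restoring cleanup at each leaf, observe that the output value is unchanged so the error over $\mu$ is preserved, and combine a bound on the deficit increase caused by $\Pi$ with the deficit decrease from the cleanup. But your first step has a genuine error: it is \emph{not} true that $|X^{(L)}| \geq |X|/2^{c_A(L)}$. A deterministic protocol partitions $X$ into parts of arbitrarily unequal sizes (Alice's first bit could be the indicator of $x=x_0$, leaving a leaf with $|X^{(L)}|=1$), so there is no per-leaf lower bound on $\H_\infty(\bm x^{(L)}_F)$ in terms of the communication cost, and $\deficit(L,F) \leq \deficit(X\times Y,F)+|\Pi|$ can fail badly for individual leaves. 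The correct statement --- the paper's \cref{prop:protocol-deficit}, proved via the min-entropy chain rule (\cref{lem:minent-chainrule}) --- is that $\deficit(\bm L, F) \leq \deficit(X\times Y, F) + 2|\Pi| + 2\log(1/\varepsilon)$ holds only with probability $1-2\varepsilon$ over a random leaf $\bm L$ weighted by input mass. Since the lemma's conclusion is itself a with-probability-$1-4\varepsilon$ statement this is repairable, but you must spend $2\varepsilon$ of the failure budget here rather than claim a universal bound.

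The second gap is in the amortization of the overshoots $\log(1/p_z)$. You assert that branches with large overshoot carry correspondingly small probability mass and therefore "aggregate to at most $4\varepsilon$," but a partition can have exponentially many parts each of tiny mass, and summing small masses over all of them gives nothing: even for a single iteration, $\Pr_{\bm x}[\log(1/p_{\bm z}) > t] \leq N\cdot 2^{-t}$ where $N$ is the number of parts, which is useless when $N$ is exponential. The device that closes this --- and which the paper invokes as a black box via \cref{lemma:density-restoring-partition} (Lemma 3.5 of \cite{BPPlifting}) --- is to build the partition greedily in a fixed order and charge part $j$ not $\log(1/\Pr[X_j])$ but the log-inverse \emph{suffix} mass $\delta_j = \log\bigl(1/\Pr[\bm x \in \cup_{k\geq j}X_k]\bigr)$; then $\Pr[\delta_{\bm j} > \log(1/\varepsilon)] < \varepsilon$ independently of the number of parts, because the bad event is contained in a single tail of the ordering. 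Without this (or an equivalent) mechanism, your "typical-set-style counting argument" does not go through.
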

    \ignore{
    We want to choose a leaf $L$ where $\Pi'$ outputs $0$ (to maintain the invariant that most inputs in the rectangle have value 0). After constructing $\Pi'$, we may end up
    in a bad situation where most 0-valued leaves $L_0$ reduce the number of free
    variables by a lot, say $|F|-|F_{L_0}| > \frac{1}{1000} \sqrt n$. This is how we determine whether we are in a good stage or a bad stage:
    \begin{equation}
        \mathrm{median}\left( |F| - |F_{\bm{L_0}}| \right) \leq \frac{1}{1000} \sqrt n ,
    \end{equation}
    where $F_{\bm{L_0}}$ is the set of free variables associated with the random
    0-valued leaf $\bm{L_0} \define ( \bm L \;|\; \bm L \text{ outputs 0})$. 
    }

    \begin{lemma}[Safe Stage Lemma]
    \label{lemma:safe-stage-lemma}
    Let $X \times Y \subseteq \Sigma^n \times \Sigma^n$ and $F \subseteq [n]$ satisfy the
    conditions of \cref{lemma:stage-lemma}, and assume the current stage is \emph{safe}, \ie \cref{eq:safe-or-unsafe}
    holds. Then the conclusion of \cref{lemma:stage-lemma} holds with $|R| = \sqrt n$.
    \end{lemma}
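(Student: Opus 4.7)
The plan is to mimic the decision tree Stage Lemma of \cref{lemma:dt-stage}, replacing the birthday-paradox closeness argument with the communication Closeness Lemma (\cref{lemma:closeness}) and the depth bound with the Protocol Transformation Lemma (\cref{lemma:protocol-transformation}). First, by Yao's principle together with \cref{assumption}, I would obtain a deterministic protocol $\Pi$ of cost $o(\sqrt n\, m)$ and error at most $\varepsilon$ over the mixture $\tfrac{1}{2}(\unif(X \times Y) + \sigma(X \times Y, F))$; consequently $\Pi$ has error at most $2\varepsilon$ on each of $\unif(X \times Y)$ and $\sigma(X \times Y, F)$ individually. Applying \cref{lemma:protocol-transformation} to $\Pi$ produces a protocol $\Pi'$ with the same error bound, whose every leaf $L = X^{(L)} \times Y^{(L)}$ is $0.9$-spread on its associated set $F_L$, and for which the deficit inequality of \cref{lemma:protocol-transformation}(2) holds except on a set of $\unif(X \times Y)$-mass at most $4\varepsilon$.

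Let $\mathcal{G}$ denote the set of $0$-leaves $L$ of $\Pi'$ that simultaneously satisfy the deficit inequality and $|F \setminus F_L| \leq \sqrt n / 1000$. Combining $\Pr_\unif[f = 0] \geq 1/2$, $\Pr_\unif[\Pi' \neq f] \leq 2\varepsilon$, the safety hypothesis \cref{eq:safe-or-unsafe} (which asserts that at least half of the $\Pi' = 0$ mass lies in leaves with $|F \setminus F_L| \leq \sqrt n / 1000$), and the $4\varepsilon$-mass exception for the deficit bound, I conclude $\Pr_\unif[(\bm x, \bm y) \in \bigcup \mathcal{G}] = \Omega(1)$ once $\varepsilon$ is a sufficiently small constant. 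Applying \cref{lemma:closeness} leaf-by-leaf to each $L \in \mathcal{G}$ (valid since $|F \setminus F_L|$ is small on those leaves) then gives $\Pr_\sigma[(\bm x, \bm y) \in \bigcup \mathcal{G}] \geq (1 - o(1)) \Pr_\unif[(\bm x, \bm y) \in \bigcup \mathcal{G}] = \Omega(1)$. Since $\Pi'$ outputs $0$ on every $L \in \mathcal{G}$ and has $\sigma$-error at most $2\varepsilon$, an averaging argument produces a single $L \in \mathcal{G}$ with $\Pr_\sigma[f = 0 \mid (\bm x, \bm y) \in L] \geq 3/4$, and further averaging over $\bm I \sim \binom{F}{\sqrt n}$ picks out a specific $I$ of size exactly $\sqrt n$ with $\Pr_\sigma[f = 0 \mid (\bm x, \bm y) \in L,\, \bm I = I] \geq 1/2$.

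The main obstacle is the final step: the set $L \cap \{(x, y) : \IP_m(x_i, y_i) = 1 \text{ for all } i \in I\}$, while having a majority of $f = 0$ inputs, is not itself a rectangle. I would extract a rectangle by partitioning according to $(x_I, y_I) = (a, b) \in \Sigma^I \times \Sigma^I$, restricted to pairs with $\IP_m(a_i, b_i) = 1$ for all $i \in I$; this produces sub-rectangles $X^{(a)} \times Y^{(b)} \subseteq L$ all of which fix the $\IP$ gadgets in $I$ to $1$. Averaging over such $(a, b)$ yields one with $\Pr[f = 0] \geq 1/2$ on $X^{(a)} \times Y^{(b)}$. Using the $0.9$-spreadness of $X^{(L)}_{F_L}$ together with a density-restoration union bound (feasible since $|I| = \sqrt n$ is tiny compared to $|F_L| \geq 0.1 n - o(\sqrt n)$), all but an $o(1)$-fraction of the $a$'s have $X^{(a)}_{F_L \setminus I}$ still $0.9$-spread, and likewise for $b$. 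A union bound over the bad $(a, b)$ for each requirement exhibits a single $(a, b)$ satisfying all three conditions; set $X' \coloneqq X^{(a)},\, Y' \coloneqq Y^{(b)},\, R \coloneqq I,\, F' \coloneqq F_L \setminus I$. The deficit inequality in \cref{lemma:stage-lemma}(2) follows by combining the leaf deficit bound from \cref{lemma:protocol-transformation} with the observation that fixing $x_I = a$ and $y_I = b$ can only decrease the deficit measured on the smaller set $F' = F_L \setminus I$ (the $O(|I| m)$ entropy loss on the $x$- and $y$-sides is absorbed by the $2|I|m$ shrinkage of the reference set $F_L \to F'$), up to an $O(\log n) = o(\sqrt n\, m)$ slack arising from the typicality of the chosen $(a, b)$.
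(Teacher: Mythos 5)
Your outline tracks the paper's proof closely for the first two steps: Yao plus the Protocol Transformation Lemma, the Closeness Lemma applied to the good $0$-leaves to find a leaf $L$ that is mostly $0$ under $\sigma$, and then averaging over $\bm I$ and over the values $(x_I,y_I)$ to extract a genuine subrectangle in which all gadgets of $I$ evaluate to $1$. The deficit accounting you sketch for that step (the $O(|I|m)$ entropy loss being absorbed by shrinking the reference set from $F_L$ to $F_L\setminus I$) is also essentially the paper's computation in the ``Fix 1s'' proposition.

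The genuine gap is your claim that ``all but an $o(1)$-fraction of the $a$'s have $X^{(a)}_{F_L\setminus I}$ still $0.9$-spread.'' This is false, and the smallness of $|I|$ relative to $|F_L|$ is irrelevant: spreadness is a min-entropy condition on \emph{every} subset, including singletons. For instance, $X^{(L)}$ can be $0.9$-spread on $F_L$ while some coordinate $x_j$, $j\in F_L\setminus I$, is a deterministic function of $x_I$ (the subset $I\cup\{j\}$ then has min-entropy $|I|m\ge 0.9(|I|+1)m$ as soon as $|I|\ge 9$, so $0.9$-spreadness is not violated); conditioning on $\bm x_I=a$ then fixes $x_j$ entirely for \emph{every} $a$, destroying spreadness on $F_L\setminus I$. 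The density-restoring partition (\cref{lemma:density-restoring-partition}) does not rescue this: it restores spreadness only after fixing an \emph{additional}, part-dependent set of coordinates $I_j$, which may be nonempty for every part. This is exactly why the paper's proof has a third step, \cref{prop:spreadify}, which reapplies density restoration to the rectangle obtained after fixing the $R$ gadgets, further removes the variables $F'\setminus F''$ from the free set, and pays for them with the $-\Omega(|F'\setminus F''|m)$ decrease in the deficit. It is also why the conclusion of \cref{lemma:stage-lemma} permits $F\setminus(F'\cup R)\neq\varnothing$ and carries the term $-\Omega(|F\setminus(F'\cup R)|\cdot m)$; under your choice $F'=F_L\setminus I$ that slack is never used, which is a sign the spreadness requirement cannot be met without it. Adding this restoration step (and threading its deficit credit through your final inequality) closes the gap and recovers the paper's argument.
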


    \begin{lemma}[Unsafe Stage Lemma]
    \label{lemma:unsafe-stage-lemma}
    Let $X \times Y \subseteq \Sigma^n \times \Sigma^n$ and $F \subseteq [n]$ satisfy the
    conditions of \cref{lemma:stage-lemma}, and assume the current stage is \emph{unsafe}, \ie \cref{eq:safe-or-unsafe}
    does not hold. Then the conclusion of \cref{lemma:stage-lemma} holds with $R = \emptyset$.
    \end{lemma}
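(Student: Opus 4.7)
The plan is to invoke Yao's principle followed by the Protocol Transformation Lemma (\cref{lemma:protocol-transformation}), and then pick a single leaf of the resulting protocol $\Pi'$ to serve as $X'\times Y'$, with $R := \varnothing$ and $F' := F_L$. Fix $\varepsilon > 0$ a sufficiently small constant; amplifying the randomized protocol from \cref{assumption} to error $\varepsilon$ keeps its cost $o(\sqrt n m)$, so Yao's principle yields a deterministic $\Pi$ of cost $o(\sqrt n m)$ and error $\leq \varepsilon$ over the mixture $\frac{1}{2}(\unif(X\times Y) + \sigma(X\times Y, F))$. Applying \cref{lemma:protocol-transformation} gives a protocol $\Pi'$ with the same error bound (hence $\leq 2\varepsilon$ over $\unif(X\times Y)$), such that every leaf $L = X^{(L)} \times Y^{(L)}$ carries a set $F_L \subseteq F$ for which $X^{(L)}_{F_L}, Y^{(L)}_{F_L}$ are $0.9$-spread, and such that with probability $\geq 1-4\varepsilon$ over a random leaf $\bm L := L(\bm x, \bm y)$ with $(\bm x, \bm y)\sim\unif(X\times Y)$,
\[
\deficit(\bm L, F_{\bm L}) \leq \deficit(X\times Y, F) + O(|\Pi|) - \Omega(|F\setminus F_{\bm L}|\cdot m).
\]

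I will find a leaf $L$ satisfying three conditions simultaneously: (a) $L$ is a leaf of $\Pi'$ with output $0$ and $a_L := \Pr_{(\bm x, \bm y)\sim\unif(L)}[f(\bm x,\bm y) = 0] \geq 1/2$; (b) $|F\setminus F_L| > \sqrt n/1000$; (c) the deficit bound above holds at $L$. Given such a leaf, setting $X'\times Y' := L$, $F' := F_L$, $R := \varnothing$ meets every requirement of \cref{lemma:stage-lemma}: (a) provides the mostly-$0$ invariant; (b) gives $|F\setminus F'| = \Omega(\sqrt n)$ and $F' \subsetneq F$; (c) yields the deficit inequality since $|F\setminus (F'\cup R)| = |F\setminus F'|$ and $O(|\Pi|) = o(\sqrt n m)$; spreadness is automatic.

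To show such a leaf exists, I bound each event's failure probability \emph{conditional on} $f(\bm x, \bm y) = 0$, where $(\bm x,\bm y)\sim\unif(X\times Y)$, and then union bound. Event (b) holds with probability $\geq 1/2$ conditional on $f = 0$ directly by the unsafe hypothesis (since $\bm L_0$ is defined as the leaf of a random $0$-input). Event (c) fails with probability $\leq 4\varepsilon$ unconditionally, hence with probability $\leq 8\varepsilon$ conditional on $f = 0$, using $\Pr[f = 0] \geq 1/2$. For event (a), the complement splits in two: if $\Pi'(\bm L) = 1$, its joint mass with $\{f = 0\}$ is at most the error of $\Pi'$ on $\unif(X\times Y)$, namely $2\varepsilon$, giving conditional probability $\leq 4\varepsilon$; if $\Pi'(\bm L) = 0$ with $a_{\bm L} < 1/2$, then on each such leaf $L$ the local error of $\Pi'$ is $1-a_L > a_L$, so, writing $p_L := |L|/|X\times Y|$,
\[
\sum_{L:\,\Pi'(L) = 0,\, a_L < 1/2} p_L\, a_L \leq \sum_{L:\,\Pi'(L) = 0} p_L(1 - a_L) \leq 2\varepsilon,
\]
which again yields conditional probability $\leq 4\varepsilon$. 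A union bound gives $\Pr[(a)\wedge(b)\wedge(c)\mid f = 0] \geq 1/2 - 16\varepsilon > 0$ for small $\varepsilon$, so a leaf meeting all three conditions exists.

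The one mildly delicate step is bounding the mass of zero-output leaves of $\Pi'$ with $a_L < 1/2$ in event (a): the key observation is that on such a leaf the local error rate of $\Pi'$ exceeds $1/2$, so the total $p_L$-mass of these ``bad'' leaves is controlled by the overall error of $\Pi'$. Beyond that observation, everything reduces to routine bookkeeping with Yao's principle, the Protocol Transformation Lemma, and the direct use of the unsafe hypothesis; no analogue of the Closeness Lemma is needed because $R = \varnothing$ means we never actually ``sprinkle'' any $1$s in the unsafe case.
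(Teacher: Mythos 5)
Your proof is correct and follows essentially the same route as the paper: both arguments fix the transformed protocol $\Pi'$, union-bound over three bad events to find a single $0$-leaf that simultaneously lies beyond the median of $|F|-|F_{\bm L_0}|$, is mostly $f$-value $0$, and satisfies the deficit bound, and then take $X'\times Y'$ to be that leaf with $R=\varnothing$ and $F'=F_L$. The differences are cosmetic: the paper takes the union bound over the random $0$-\emph{output} leaf $\bm L_0'$ and controls the ``mostly $0$'' event via Markov's inequality on the expected leaf error, whereas you condition on $f(\bm x,\bm y)=0$ and control it by observing that a $0$-leaf with $a_L<1/2$ has local error above $1/2$. One small caveat: the paper's $\bm L_0$ in \cref{eq:safe-or-unsafe} is, in its formal proofs, the leaf conditioned on $\Pi'$ \emph{printing} $0$ rather than on $f=0$, so your claim that event (b) holds with probability at least $1/2$ conditional on $f=0$ ``directly'' from the unsafe hypothesis requires a one-line translation between the two conditionings (costing $O(\varepsilon)$ in measure and degrading the $1/2$ to $1/4-O(\varepsilon)$); your union bound still closes with room to spare.
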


    The Stage Lemma (\cref{lemma:stage-lemma}) follows immediately from
    \cref{lemma:safe-stage-lemma,lemma:unsafe-stage-lemma}.

\subsection{Proof of the \nameref{lemma:safe-stage-lemma}}
\label{section:safe-stage-lemma}

In this section we state the main ingredients required for the
\nameref{lemma:safe-stage-lemma}, including the communication version of the
Closeness Lemma, and we prove the lemma using those ingredients. We defer the
proofs of each ingredient to \cref{section:completing-safe-stage}.

We start with the rectangle $X \times Y$ and free variables $F \subseteq [n]$.
Recall that $\Pi'$ is the transformed protocol from the
\nameref{lemma:protocol-transformation}, which has error $\epsilon$ over the
distribution
\[
    \frac{1}{2} ( \unif(X \times Y) + \sigma(X \times Y, F) ) .
\]
Suppose that the current stage is safe, \ie \cref{eq:safe-or-unsafe} holds.  We
want to find a subset $R \subseteq F$ of the free variables where we can fix all
of the $\IP$ gadget values to 1, and produce a new rectangle $X' \times Y'
\subseteq X \times Y$ and new free variables $F' \subseteq F \setminus R$
satisfying the conditions to repeat the process in \cref{lemma:stage-lemma}.

The crucial step is the ``birthday paradox'' step, where we show that we can
sprinkle $\sqrt n$ random 1-valued gadgets without the protocol $\Pi'$ noticing.
This is formalized in the next lemma, which uses spreadness of $(\bm x, \bm y) \sim \unif(X \times Y)$
on the free variables $F$, together with the pseudorandomness of the
\textsc{Inner Product} gadget under the spreadness condition
(\cref{lemma:ip-gadget-pseudorandom}).
We prove the \nameref{lemma:closeness} in \cref{sec:closeness-proof}.

\begin{restatable}[Closeness Lemma]{lemma}{lemmacloseness}
\label{lemma:closeness}
    Assume $m \geq 100\log n$.
    Let $X \times Y \subseteq \Sigma^n \times \Sigma^n$ and let $F \subseteq [n]$ be such that the random variables
    $\bm{x}_F$ and $\bm{y}_F$ are both $0.9$-spread, where $(\bm x, \bm y) \sim \unif(X \times Y)$. Now let $X' \times Y' \subseteq X \times Y$
    and $F' \subseteq F$ also be such that $\bm{x'}_{F'}, \bm{y'}_{F'}$ are both $0.9$-spread, for $(\bm x', \bm y') \sim \unif(X' \times Y')$,
    and suppose $|F| \geq 0.1n$ and $|F|-|F'| \leq \sqrt n / 1000$. Then
    \[
        \Pr_{(\bm x, \bm y) \sim \sigma(X \times Y, F)}[ (\bm x, \bm y) \in X' \times Y' ]
        \in (1 \pm 0.1) \Pr_{(\bm x, \bm y) \in \unif(X \times Y)}[ (\bm x, \bm y) \in X' \times Y' ] .
    \]
    Moreover,
    \[
    \Pr_{(\bm x, \bm y) \sim \sigma(X \times Y, F)}[ (\bm x, \bm y) \in X' \times Y' \mid \bm I\subseteq F' ]
        \in (1 \pm 0.1) \Pr_{(\bm x, \bm y) \in \unif(X \times Y)}[ (\bm x, \bm y) \in X' \times Y' ],
    \]
    where $\bm I \sim \binom{F}{\sqrt n}$ denotes the random set of coordinates used to generate $(\bm x, \bm y)$ in the distribution $\sigma(X\times Y, F)$.
\end{restatable}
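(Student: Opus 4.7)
The plan is to follow the blueprint of \cref{lm:closeness-decision-trees}, but to carry out the birthday-paradox calculation on the $\IP_m$ \emph{gadget outputs} rather than on the input coordinates themselves; this is where the pseudorandomness lemma (\cref{lemma:ip-gadget-pseudorandom}) enters crucially, in fact twice. Write $A$ for the event $(\bm x, \bm y) \in X' \times Y'$, $B_I$ for the event $\IP_m(\bm x_i, \bm y_i) = 1$ for all $i \in I$, and set $D \define F \setminus F'$, so that $|D| \leq \sqrt n / 1000$. By the definition of $\sigma(X \times Y, F)$ and Bayes' rule,
\[
    \Pr_\sigma[A] \;=\; \Exp_{\bm I \sim \binom{F}{\sqrt n}} \frac{\Pr_\unif[A] \cdot \Pr_\unif[B_{\bm I} \mid A]}{\Pr_\unif[B_{\bm I}]},
\]
where $\Pr_\unif$ denotes probability under $\unif(X \times Y)$.

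The first application of \cref{lemma:ip-gadget-pseudorandom} would be to $X \times Y$ itself (whose $F$-marginals are $0.9$-spread by hypothesis): summing the pointwise bound over $z \in \zo^F$ with $z_I = 1_I$ yields $\Pr_\unif[B_I] \in 2^{-|I|}(1 \pm n^{-5})$ for every $I \subseteq F$, using $m \geq 100 \log n$. The second application would be to $X' \times Y'$: conditional on $A$, the distribution of $(\bm x, \bm y)$ is simply $\unif(X' \times Y')$, which is $0.9$-spread on $F'$, so the same argument gives $\Pr_\unif[B_{I'} \mid A] \in 2^{-|I'|}(1 \pm n^{-5})$ for every $I' \subseteq F'$. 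For a general $I \subseteq F$, the containment $B_I \subseteq B_{I \cap F'}$ then gives $\Pr_\unif[B_I \mid A] \leq 2^{-|I \cap F'|}(1 + n^{-5})$. Plugging these bounds into the identity above,
\[
    \Pr_\unif[A \mid B_I] \;\leq\; \Pr_\unif[A] \cdot 2^{|I \cap D|} \cdot (1 + O(n^{-5})),
\]
with the matching reverse inequality (up to $1 \pm O(n^{-5})$) whenever $I \subseteq F'$.

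What remains is a hypergeometric moment estimate over $\bm I \sim \binom{F}{\sqrt n}$. The union bound gives $\Pr[\bm I \cap D \neq \varnothing] \leq |\bm I|\cdot |D|/|F| \leq 1/100$, and the standard hypergeometric tail estimate $\Pr[|\bm I \cap D| = k] \leq \binom{|D|}{k} \bigl(|\bm I|/(|F|-|\bm I|)\bigr)^k \leq (1/50)^k/k!$ yields $\Exp[2^{|\bm I \cap D|}] \leq e^{1/25} < 1.05$. Combining with the bounds from the previous paragraph,
\[
    \Pr_\sigma[A] \;\leq\; 1.05 \cdot (1 + O(n^{-5})) \cdot \Pr_\unif[A] \;\leq\; 1.1\, \Pr_\unif[A],
\]
and, dropping all terms with $\bm I \not\subseteq F'$ on the lower side, $\Pr_\sigma[A] \geq 0.99 \cdot (1 - O(n^{-5})) \Pr_\unif[A] \geq 0.9\, \Pr_\unif[A]$, as required.

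The main obstacle I anticipate is the case $\bm I \cap D \neq \varnothing$. In the decision-tree warm-up of \cref{lm:closeness-decision-trees}, ``fixed'' coordinates of $X'$ have a literally fixed value, so an intersection between $\bm I$ and the fixed set is merely a birthday collision that can be excluded outright. Here, by contrast, the gadget outputs on $D$ are \emph{not} fixed but merely \emph{uncontrolled} by the $F'$-spreadness hypothesis, so conditioning on $B_{\bm I}$ can bias $\Pr[A]$ by a full multiplicative factor of $2^{|\bm I \cap D|}$. This is precisely what forces the hypothesis $|F|-|F'| \leq \sqrt n / 1000$ and the asymmetric shape of the argument: a genuine hypergeometric moment-generating-function bound on the upper side, versus a simple disjointness bound on the lower side.
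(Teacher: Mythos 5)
Your proposal is correct and is essentially the paper's own proof, just phrased via Bayes' rule ($\Pr_\sigma[A]=\Exp_{\bm I}\Pr_\unif[A\mid B_{\bm I}]$) rather than the paper's equivalent support-counting formulation with the measures $\psi_I$: both apply \cref{lemma:ip-gadget-pseudorandom} twice (to $X\times Y$ on $F$ for the denominator and to $X'\times Y'$ on $F'$ for the numerator, using $B_I\subseteq B_{I\cap F'}$ for general $I$), and both close with the same birthday-paradox bounds $\Pr[\bm I\cap(F\setminus F')\neq\varnothing]\le 1/100$ and $\Exp[2^{|\bm I\cap(F\setminus F')|}]\le 1.05$. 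Your closing remark correctly identifies the reason the argument is asymmetric, which matches the paper's \cref{claim:birthday}.
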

\noindent
Informally, this lemma claims that if we start from a big enough spread
rectangle in $\Sigma^n \times \Sigma^n$, a typical leaf of a shallow protocol
will not be able to distinguish between uniform distribution and ``sprinkled
1s'' distribution. 

With the Closeness Lemma, finding the target rectangle $X' \times Y'$ together with
its free variables $F' \subseteq F$ and set $R \subseteq F \setminus F'$ of the $1$-valued gadgets
is done in three steps, given by the next three propositions. The first proposition claims that it is possible to find a $0$-leaf in the protocol, such that it is mostly correct under the sprinkled distribution, and the deficiency would not grow too much.
\begin{restatable}[Sprinkle 1s]{proposition}{propsprinkle}
\label{prop:safe-stage-leaf}
Let $\varepsilon > 0$ be a sufficiently small constant.
Let $X \times Y \subseteq \Sigma^n \times \Sigma^n$ and $F \subseteq [n]$
satisfy the conditions of \cref{lemma:stage-lemma}. If the current stage is
safe, then there exists a leaf $L$ of $\Pi'$ such that:
\begin{enumerate}[noitemsep]
    \item $L$ contains mostly 0 inputs under the sprinkled-1s distribution:
    \[
        \Pr_{(\bm x, \bm y) \sim \sigma(X \times Y, F)}[ f(\bm x, \bm y) = 0 \mid
            (\bm x, \bm y) \in L ] \geq 1-10\varepsilon,
    \]
    \item $|F|-|F_L| \leq  \sqrt n / 1000$.
    \item $\deficit(L, F_L) \leq \deficit(X \times Y, F) + O(|\Pi|)$.
\end{enumerate}
\end{restatable}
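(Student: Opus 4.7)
The plan is to single out a ``good'' $0$-leaf of $\Pi'$---one that automatically satisfies conditions (2) and (3) of the proposition---and then show that among all such leaves, on average under the sprinkled distribution $\sigma(X\times Y,F)$, the event $f=1$ occurs with only $O(\varepsilon)$ probability, so at least one good leaf must additionally satisfy condition (1). To set up, I would apply Yao's principle together with \cref{lemma:protocol-transformation} to obtain a deterministic protocol $\Pi'$ with cost $o(\sqrt{n}m)$ and error at most $\varepsilon$ under the mixture $\tfrac{1}{2}(\unif(X\times Y)+\sigma(X\times Y,F))$; in particular $\Pi'$ has error at most $2\varepsilon$ on each component, and each leaf $L$ has its restriction $X^{(L)}_{F_L}, Y^{(L)}_{F_L}$ both $0.9$-spread.

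Next I would lower bound the $\unif$-mass of good $0$-leaves. Since $\Pr_{\unif}[f=0]\ge 1/2$ and $\Pi'$ has error at most $2\varepsilon$ under $\unif(X\times Y)$, $\Pr_{\unif}[\Pi'=0]\ge 1/2-2\varepsilon$. The safety hypothesis \cref{eq:safe-or-unsafe} says that the median of $|F|-|F_{\bm L_0}|$ over a uniformly $\unif$-weighted $0$-leaf $\bm L_0$ is at most $\sqrt n/1000$, so at least half of this $0$-leaf mass lies on leaves satisfying condition (2). Simultaneously, the deficit statement of \cref{lemma:protocol-transformation} fails only on a set of leaves of total $\unif$-mass at most $4\varepsilon$. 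A union bound then gives that the $\unif$-mass of good $0$-leaves is at least $\tfrac{1}{2}(1/2-2\varepsilon)-4\varepsilon=\Omega(1)$ for sufficiently small $\varepsilon$.

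Each good $0$-leaf $L$ is a subrectangle of $X\times Y$ with $|F|-|F_L|\le \sqrt n/1000$ and $X^{(L)}_{F_L}, Y^{(L)}_{F_L}$ both $0.9$-spread, which are precisely the hypotheses of the \nameref{lemma:closeness}. Applying that lemma leaf-by-leaf and summing over the (disjoint) good $0$-leaves yields
\[
    \Pr_{\sigma(X\times Y,F)}[\,(\bm x,\bm y)\text{ reaches a good }0\text{-leaf}\,]\;\in\;(1\pm 0.1)\cdot\Omega(1)\;=\;\Omega(1).
\]
On the other hand, $\Pr_{\sigma}[\Pi'(\bm x,\bm y)=0\text{ and }f(\bm x,\bm y)=1]\le 2\varepsilon$ because this is a subset of the error event of $\Pi'$ under $\sigma$. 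An averaging argument over the good $0$-leaves (weighted by $\sigma$-mass) then produces a single good $0$-leaf $L$ with $\Pr_{\sigma}[f(\bm x,\bm y)=1\mid (\bm x,\bm y)\in L]=O(\varepsilon)$, establishing condition (1).

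The main obstacle is purely bookkeeping: tracking the four distinct sources of $O(\varepsilon)$-loss (the $\Pi'$ error under each of $\unif$ and $\sigma$, the $4\varepsilon$ failure probability of the deficit guarantee in \cref{lemma:protocol-transformation}, and the final averaging) and absorbing the constant factor inflation from dividing by the $\Omega(1)$ denominator, so that the resulting bound matches the stated $10\varepsilon$ rather than some larger $C\varepsilon$. This can be arranged by choosing the constants in the definitions of the ``good'' events sharply enough and by taking $\varepsilon$ to be a sufficiently small universal constant; no new conceptual ingredient beyond the Closeness Lemma and the Protocol Transformation Lemma is needed.
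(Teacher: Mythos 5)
Your proposal is correct and follows essentially the same route as the paper's proof: define good $0$-leaves by conditions (2) and (3), lower-bound their $\unif$-mass by $1/4-5\varepsilon$ using the safety hypothesis and the Protocol Transformation Lemma, transfer to $\sigma$ via the Closeness Lemma, and average the conditional error over good leaves. The only cosmetic difference is that you average under $\sigma$-mass while the paper averages under $\unif$-mass (translating via the Closeness Lemma in the other direction); the arithmetic yields the same $\le 10\varepsilon$ bound either way.
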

\noindent
The next proposition states that we can fix a particular outcome for sprinkling it $1$s; again, without increasing the deficiency too much.

\begin{restatable}[Fix 1s]{proposition}{propfixones}
\label{prop:safe-stage-clean}
Under the conditions of \cref{prop:safe-stage-leaf}, let $L$ be a leaf of $\Pi'$ obtained from
that proposition, and let $F_L$ be its associated set of free variables (as
defined in \cref{lemma:protocol-transformation}). Then there exists $X' \times
Y' \subseteq L$, $R \subseteq F_L$ with $|R|=\sqrt n$, and $F' \define F_L \setminus R$ such that
    \begin{enumerate}[noitemsep]
        \item $\Pr[ f(\bm x, \bm y) = 0 ] \geq 0.6$ for $(\bm x, \bm y) \sim \unif(X' \times Y')$; 
        \item  For all $j \in R$ and all $x,y \in X' \times Y'$,
        $\IP_m(x_j, y_j) = 1$, and
        \item $\deficit(X' \times Y', F') \leq \deficit(L, F_L) + 2|R|$.
    \end{enumerate}
\end{restatable}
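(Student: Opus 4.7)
The plan is to combine an averaging argument over the random set $\bm{I}$ inside the sprinkled-1s distribution with a rectangular decomposition of $L \cap \{\IP_R = 1_R\}$, then verify the deficit inequality using the min-entropy subtraction lemma together with (a slight variant of) \cref{lemma:ip-gadget-pseudorandom}. First I would extract a specific $R \subseteq F_L$ of size $\sqrt{n}$ from the conclusion of \cref{prop:safe-stage-leaf}. Writing $\sigma(X \times Y, F) = \Exp_{\bm{I} \sim \binom{F}{\sqrt{n}}} \nu_{\bm{I}}$, where $\nu_R \define \unif(X \times Y) \mid \IP_R = 1_R$, a birthday-style union bound $\Pr[\bm{I} \cap (F \setminus F_L) \neq \varnothing] \leq \sqrt{n} \cdot \frac{|F \setminus F_L|}{|F|} \leq \sqrt{n} \cdot \frac{\sqrt{n}/1000}{0.1n} = \frac{1}{100}$ shows that a $0.99$ fraction of the sprinkled sets already lie inside $F_L$. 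Combining this with $\Pr_\sigma[f = 1 \mid L] \leq 10\varepsilon$ from \cref{prop:safe-stage-leaf} by Markov averaging restricted to such sets yields a fixed $R \in \binom{F_L}{\sqrt{n}}$ with $\Pr_{\nu_R}[f = 0 \mid L] \geq 0.8$.

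Next I would realize the new rectangle by partitioning $L \cap \{\IP_R = 1_R\}$ into combinatorial rectangles. For each $\bm{a} \in \Sigma^R$, set $X_{\bm{a}} \define \{x \in X^{(L)} : x_R = \bm{a}\}$ and $Y_{\bm{a}} \define \{y \in Y^{(L)} : \IP_m(\bm{a}_i, y_i) = 1 \text{ for all } i \in R\}$; note that $Y_{\bm{a}}$ is empty whenever some $\bm{a}_i = 0$. I would then seek $\bm{a}$ satisfying simultaneously (a) $\Pr_{\unif(X_{\bm{a}} \times Y_{\bm{a}})}[f = 0] \geq 0.6$; (b) every $\bm{a}_i$ is nonzero; and (c) $\Pr[\bm{x}^{(L)}_R = \bm{a}] \geq 2^{-|R|m - |R|}$. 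Condition (a) holds for a constant fraction of $\bm{a}$ under the $|X_{\bm{a}}||Y_{\bm{a}}|$-weighting by Markov on the partition average; condition (b) is automatic once $Y_{\bm{a}} \neq \varnothing$; and condition (c) fails on a set of $\bm{a}$ whose $|X_{\bm{a}}||Y_{\bm{a}}|$-weight is $O(2^{-|R|})$ fraction of the total, using that $|Y_{\bm{a}}| \leq (1 + o(1)) \cdot 2^{-|R|} |Y^{(L)}|$ for $\bm{a}$ with all nonzero coordinates via a variant of \cref{lemma:ip-gadget-pseudorandom} applied with the first gadget input fixed to $\bm{a}$ and the second spread. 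A union bound produces the desired $\bm{a}$.

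The deficit inequality is the most delicate step and the main obstacle. Set $X' \define X_{\bm{a}}$, $Y' \define Y_{\bm{a}}$, and $F' \define F_L \setminus R$. For the $x$-side, $\bm{x}' \sim \unif(X^{(L)}) \mid (\bm{x}^{(L)}_R = \bm{a})$ and by (c) the conditioning event has probability at least $2^{-|R|m - |R|}$, so the min-entropy subtraction lemma gives $\H_\infty(\bm{x}'_{F_L}) \geq \H_\infty(\bm{x}^{(L)}_{F_L}) - |R|m - |R|$; since $\bm{x}'_R = \bm{a}$ is deterministic, $\H_\infty(\bm{x}'_{F'}) = \H_\infty(\bm{x}'_{F_L})$, yielding $\Dinf(\bm{x}'_{F'}) \leq \Dinf(\bm{x}^{(L)}_{F_L}) + |R|$. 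For the $y$-side, the fixed-$\bm{a}$ variant of \cref{lemma:ip-gadget-pseudorandom} combined with (b) gives $\Pr_{\bm{y}^{(L)}}[\bm{y}^{(L)} \in Y_{\bm{a}}] \geq 2^{-|R|}(1 - o(1))$; subtracting via min-entropy, $\H_\infty(\bm{y}'_{F_L}) \geq \H_\infty(\bm{y}^{(L)}_{F_L}) - |R| - o(1)$, and the elementary estimate $\H_\infty(\bm{y}'_{F_L}) \leq \H_\infty(\bm{y}'_{F'}) + |R|m$ (from $\H_\infty(U, V) \leq \H_\infty(U) + \H_0(V)$) produces $\Dinf(\bm{y}'_{F'}) \leq \Dinf(\bm{y}^{(L)}_{F_L}) + |R| + o(1)$. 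Summing the two sides and absorbing the negligible $o(1)$ slack yields $\deficit(X' \times Y', F') \leq \deficit(L, F_L) + 2|R|$, as required.
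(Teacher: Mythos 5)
Your overall architecture (average over $\bm I$ to fix $R$, decompose $L\cap\{\IP_R=1_R\}$ into genuine rectangles, then track min-entropy losses) is reasonable, and your rectangle decomposition $X_{\bm a}\times Y_{\bm a}$ with only the $x$-side fixed is a genuinely different route from the paper, which instead conditions the \emph{joint} distribution on the non-rectangular event $\{\IP_R=1_R\}$ (losing only $|R|+1$ bits of joint min-entropy by \cref{lemma:ip-gadget-pseudorandom}) and then fixes \emph{both} $x^0_R$ and $y^0_R$ simultaneously via the min-entropy chain rule (\cref{lem:minent-chainrule}). However, your route has a genuine gap: the ``fixed-$\bm a$ variant of \cref{lemma:ip-gadget-pseudorandom}'' you invoke is false. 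Inner product is a \emph{two-source} extractor; its output is not close to uniform when one argument is a fixed string and the other is merely $0.9$-spread. Concretely, for a fixed nonzero $a_1$ the set $\{y_1:\langle a_1,y_1\rangle=0\}$ has $2^{m-1}$ elements, so a set $Y^{(L)}$ supported entirely on it loses only one bit of min-entropy per coordinate out of $m=100\log n$ and remains comfortably $0.9$-spread, yet $Y_{\bm a}=\varnothing$. Symmetrically, $Y^{(L)}$ can concentrate on $\{y:\langle a_i,y_i\rangle=1\ \forall i\in R\}$, making $|Y_{\bm a}|=|Y^{(L)}|$ and destroying your upper bound $|Y_{\bm a}|\le(1+o(1))2^{-|R|}|Y^{(L)}|$. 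Both directions matter to you: the lower bound on $|Y_{\bm a}|$ is what controls $\Dinf(\bm y'_{F_L})$ (without it the $y$-side deficit can jump by $\Theta(|R|m)$ rather than $|R|$), and the upper bound is what makes your condition~(c) weighting argument and the translation from the $|X_{\bm a}|$-measure to the $|X_{\bm a}||Y_{\bm a}|$-measure go through.

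One could try to salvage your approach by choosing $\bm a$ according to the $\bm x^{(L)}_R$-marginal and arguing that \emph{most} such $\bm a$ have $|Y_{\bm a}|\approx 2^{-|R|}|Y^{(L)}|$, but the expectation $\Exp_{\bm a}[|Y_{\bm a}|]\approx 2^{-|R|}|Y^{(L)}|$ (which does follow from the two-source lemma) gives only an upper tail via Markov, not the lower bound you need for a constant fraction of $\bm a$; that would require second-moment control, i.e., a three-source-type estimate not available here. This is precisely the difficulty the paper's proof is engineered to avoid: by applying the chain rule to the joint variable $(\bm u'_R,\bm v'_R)$ of the already-conditioned distribution, it pays the worst-case $2|R|m+\log(1/\delta)$ for fixing both blocks but never needs any per-$\bm a$ density estimate. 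Separately and much more minor: your final bound is $\deficit(L,F_L)+2|R|+o(1)$ rather than $+2|R|$; this slack is harmless downstream (the paper itself absorbs a $+|R|+3$ term using $|R|\ge 3$), but as written it does not literally match the claimed inequality.
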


\noindent
While the leaf $L$ of $\Pi'$ is spread (due to the protocol transformation), in
the the previous step we moved to a subrectangle $X' \times Y' \subseteq L$ by
fixing the $R$ gadgets to 1, which may not have preserved spreadness. We regain
spreadness by the same technique as in the \nameref{lemma:protocol-transformation}.

  \begin{restatable}[Spreadify]{proposition}{propspreadify}
  \label{prop:spreadify}
      Let $A \times B \subseteq \Sigma^n \times \Sigma^n$  and $H \subseteq [n]$ such that $\Pr_{\bm a, \bm b \sim A \times B}[f(\bm a, \bm b) = 0] \ge 0.6$. Then there exists $H' \subseteq H$ and a rectangle $A' \times B' \subseteq A \times B$ such that
      \begin{enumerate}[noitemsep]
          \item $\Pr_{\bm a, \bm b \sim A' \times B'}[f(\bm a, \bm b) = 0] \ge 1/2$.
          \item $A'_{H'}$ and $B'_{H'}$ are $0.9$-spread.
          \item $\deficit(A' \times B', H') \le \deficit(A \times B, H) - \Omega(|H \setminus H'| m)$.
      \end{enumerate}
  \end{restatable}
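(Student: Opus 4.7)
The plan is to perform a two-step spreadification of the rectangle: first restrict $A$ so that its projection to $H$ becomes $0.9$-spread, then do the same for $B$ relative to the already-restricted $A'$. The slack between the $0.6$ hypothesis and the $1/2$ conclusion is precisely the budget that lets us first discard rows (and columns) that are not ``$0$-heavy enough'', so that the subsequent structural spreadification cannot destroy the guarantee on $\Pr[f=0]$.

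First I would set $A_0 \define \{a\in A : \Pr_{\bm{b}\sim B}[f(a,\bm{b})=0]\geq 0.55\}$; averaging against $\Pr_{(\bm{a},\bm{b})\sim A\times B}[f(\bm{a},\bm{b})=0]\geq 0.6$ yields $|A_0|/|A|\geq 0.1$. Then I would apply the standard density-restoration argument (analogous to the spreadification used inside the \nameref{lemma:protocol-transformation}): while $\bm{a}_0\sim\unif(A_0)$ is not $0.9$-spread on its current free set, pick a witness $I$ and $z\in\Sigma^I$ with $\Pr[(\bm{a}_0)_I=z]>2^{-0.9|I|m}$, restrict to $\{a:a_I=z\}$, and remove $I$ from the free set. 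The loop terminates with $A'\subseteq A_0$ and a fixed index set $F_A\subseteq H$ with value $z_A\in\Sigma^{F_A}$ such that $A'$ is $0.9$-spread on $H\setminus F_A$ and $|A'|/|A_0|\geq 2^{-0.9|F_A|m}$. Now every $a\in A'$ has $\Pr_{\bm{b}}[f(a,\bm{b})=0]\geq 0.55$, so $\Pr_{A'\times B}[f=0]\geq 0.55$. Define $B_0\define\{b\in B:\Pr_{\bm{a}\sim A'}[f(\bm{a},b)=0]\geq 0.5\}$; averaging again gives $|B_0|/|B|\geq 0.1$. Spreadify $B_0$ symmetrically to obtain $B'\subseteq B_0$, $F_B\subseteq H$, $z_B\in\Sigma^{F_B}$, with $B'$ being $0.9$-spread on $H\setminus F_B$ and $|B'|/|B_0|\geq 2^{-0.9|F_B|m}$. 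Finally, take $H'\define H\setminus(F_A\cup F_B)$.

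It remains to check the three conclusions. Spreadness is inherited by subsets of the index set, so $A'$ is $0.9$-spread on $H'\subseteq H\setminus F_A$ and $B'$ is $0.9$-spread on $H'\subseteq H\setminus F_B$. For the $0$-fraction, every $b\in B'\subseteq B_0$ satisfies $\Pr_{\bm{a}\sim A'}[f(\bm{a},b)=0]\geq 1/2$, hence $\Pr_{A'\times B'}[f=0]\geq 1/2$. For the deficit, the chain $A\supseteq A_0\supseteq A'$ drops Alice's min-entropy on $H$ by at most $\log 10 + 0.9|F_A|m$, and the subsequent projection from $H$ down to $H'$ costs at most an additional $|F_B\setminus F_A|m$; a direct computation then gives $\Dinf(\bm{a}'_{H'})\leq \Dinf(\bm{a}_H) - 0.1|F_A|m + O(1)$, with a symmetric bound on Bob's side. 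Summing, $\deficit(A'\times B', H')\leq \deficit(A\times B, H) - 0.1|H\setminus H'|m + O(1)$, and because $m\geq 100\log n$, the $O(1)$ additive term is absorbed whenever $|H\setminus H'|\geq 1$, yielding the required $\Omega(|H\setminus H'|m)$ decrease (the case $|H\setminus H'|=0$ is trivial with $A'=A$, $B'=B$).

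The only nontrivial ingredient is the density-restoration loop itself, which is a standard piece of the lifting toolkit and is also used internally in the \nameref{lemma:protocol-transformation}. Beyond that, the main obstacle is bookkeeping: combining the asymmetric restrictions of $A$ and $B$ into a single common free index set $H'$ while carefully tracking min-entropy losses on each side, and making sure the $0.6\to 0.55\to 0.5$ cascade leaves enough slack to guarantee the final $\tfrac12$-fraction while allowing the density-restoration step the freedom to potentially shrink $A_0, B_0$ by a factor of $2^{-0.9|F_A|m}$ and $2^{-0.9|F_B|m}$ respectively.
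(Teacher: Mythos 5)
Your proof is correct in substance but takes a genuinely different route from the paper. The paper invokes the Density Restoring Partition for Rectangles (\cref{lemma:density-restoration-rectangle}) on all of $A\times B$ at once: every cell $A_i\times B_j$ of the resulting partition is spread on its free set, a $1-2\varepsilon$ fraction of inputs lands in a cell with the required deficit drop, and Markov's inequality applied to $\Exp[p_{\bm i,\bm j}]\le 0.4$ plus a union bound (with $\varepsilon=1/20$) shows some cell simultaneously satisfies all three conclusions. You instead do a deterministic two-step pruning: restrict to the $0$-heavy rows $A_0$, greedily spreadify, then restrict to the columns that are $0$-heavy against $A'$, and greedily spreadify again; the $0.6\to 0.55\to 0.5$ cascade converts the hypothesis slack directly into conclusion (1) without any probabilistic selection. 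The paper's route buys brevity by reusing the partition machinery already set up for the \nameref{lemma:protocol-transformation}; yours is more self-contained and makes the role of the $0.6$-vs-$\tfrac12$ slack transparent. Both proofs incur the same additive $O(1)$ slop in conclusion (3) (the paper's bound carries a $+2\log(1/\varepsilon)$ term), which is harmless where the proposition is used.

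Two small slips worth fixing. First, the projection from $H\setminus F_A$ down to $H'$ does not ``cost an additional $|F_B\setminus F_A|m$'' in deficit — by \cref{prop:monotonicity-deficit} it costs nothing (if it really cost $|F_B\setminus F_A|m$ your final bound would not follow); the inequality $\Dinf(\bm a'_{H'})\le \Dinf(\bm a_H)-0.1|F_A|m+O(1)$ you state is nonetheless the right one. Second, your fallback ``$A'=A$, $B'=B$'' for the case $|H\setminus H'|=0$ does not work, since the hypothesis of the proposition does not assume $A_H,B_H$ are spread; but no fallback is needed — in that case your construction already outputs $A'=A_0$, $B'=B_0$, $H'=H$, which are spread because both greedy loops terminated immediately.
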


\noindent
We prove these propositions in \cref{section:safe-stage-propositions}. With
these tools, we can complete the proof of the safe stage lemma.

  \begin{proof}[Proof of the \nameref{lemma:safe-stage-lemma}]
  Start with $X \times Y$ and $F$ satisfying the conditions of
  \cref{lemma:stage-lemma}. Let $L$ be the leaf of $\Pi'$ with free variables
  $F_L$ obtained from \cref{prop:safe-stage-leaf}. Let $X' \times Y'$, $F'$, and
  $R = F \setminus F'$ be from \cref{prop:safe-stage-clean} and let $X'' \times
  Y'' \subseteq X' \times Y'$ and $F'' \subseteq F'$ be obtained from applying
  \cref{prop:spreadify} to $X' \times Y'$ and $F'$.

  From \cref{prop:safe-stage-clean} we guarantee that $\IP_m(x_i, y_i) = 1$ for
  every $(x,y) \in X'' \times Y''$ and every $i \in R$. From \cref{prop:spreadify,prop:safe-stage-clean} we get
  \begin{align*}
    \deficit(X'' \times Y'', F'')
        &\leq \deficit(X' \times Y', F') - \Omega(|F' \setminus F''| m) \tag{\cref{prop:spreadify}}\\
        &\leq \deficit(L, F_L) + 2|R| - \Omega(|F' \setminus F''| m)
            \tag{\cref{prop:safe-stage-clean}} \\
        &\leq \deficit(X \times Y, F) + O(|\Pi|) + 2|R| - \Omega(|F' \setminus F''| m)
            \tag{\cref{prop:safe-stage-leaf}} \\
        &\leq \deficit(X \times Y, F) + o(\sqrt{n}m) - \Omega(|F' \setminus F''| m) \tag{$|\Pi|, |R| = o(\sqrt{n}m)$} \\
        &\leq \deficit(X \times Y, F) + o(\sqrt{n}m) - \Omega(|F \setminus (F'' \cup R)|m)
  \end{align*}
  This concludes the proof of \cref{lemma:safe-stage-lemma}.
  \end{proof}

\subsection{Proof of the \nameref{lemma:unsafe-stage-lemma}}
\label{section:unsafe-stage-lemma}
A stage is \emph{unsafe} if \cref{eq:safe-or-unsafe} is false. In this case, we
choose any 0-valued leaf $L_0$ with $|F|-|F_{L_0}|$ at least the median in
\cref{eq:safe-or-unsafe}, and which also contains mostly 0-valued inputs of $f$;
we must show this exists.

Recall that $\bm{L}_0 = (\bm L \mid \bm L \text{ outputs 0})$
is a random leaf drawn by taking the unique 0-leaf containing $(\bm x, \bm y) \sim \unif(X \times Y)$ conditional on $(\bm x, \bm y)$ ending up in a 0-valued leaf.
Suppose \cref{eq:safe-or-unsafe} is false. Let $R = \emptyset$. Then, to conclude the
lemma, we want to show that
there exists a 0-valued leaf $L_0$ of $\Pi'$
with $|F|-|F_{L_0}| \geq \mathrm{median}\left(|F|-|F_{\bm{L_0}}|\right) > \sqrt{n} / 1000$, such that
\begin{itemize}
    \item $\Pr_{(\bm x, \bm y) \sim \unif(L_0)}[ f(\bm x, \bm y) = 0 ] \geq 1/2$, and
    \item $\deficit(L_0, F_{L_0}) \leq \deficit(X \times Y, F) + O(|\Pi|) - \Omega(|F \setminus F_{L_0} |  \cdot m)$.
\end{itemize}
Let $\bm{L}'_0$ be an independent copy of $\bm{L}_0$.
We upper bound the probabilities for the following events:
\begin{enumerate}[noitemsep, label=(E\arabic*)]
    \item $|F|-|F_{\bm L'_0}| < \mathrm{median}\left(|F|-|F_{\bm{L_0}}|\right)$; \label{first-event}
    \item $\textsc{error}_{\bm L'_0} > 1/2$, where $\textsc{error}_{\bm L'_0} \define \Pr_{(\bm x, \bm y) \sim \unif(\bm L'_0)}[ f(\bm x, \bm y) = 1 ]$. \label{second-event}
    \item $\deficit(\bm L'_0, F_{\bm L'_0}) > \deficit(X \times Y, F) + O( |\Pi|) - \Omega(|F \setminus F_{\bm L'_0} |  \cdot m)$. \label{third-event}
\end{enumerate}

For \ref{first-event}, by definition, $\Pr_{\bm L'_0}[|F|-|F_{\bm L'_0}| < \mathrm{median}\left(|F|-|F_{\bm{L_0}}|\right)] < 1/2$.

For \ref{second-event}, since $\Pi'$ has error $\varepsilon$ over the mixture
$\tfrac{1}{2}\left(\unif(X \times Y) + \sigma(X \times Y, F)\right)$, it has
error at most $2\varepsilon$ over $\unif(X \times Y)$. Therefore $\Exp_{\bm
L'_0}[\textsc{error}_{\bm L'_0}] \leq 4\varepsilon$,
so it follows by Markov
inequality that $\Pr_{\bm L'_0}[\textsc{error}_{\bm L'_0} \geq 20 \varepsilon]
\leq \frac{1}{5}$. For a small enough $\varepsilon$, we can conclude that:
\[
    \Pr_{\bm L'_0}\left[\textsc{error}_{\bm L'_0} > \frac{1}{2}\right] = \Pr_{\bm
L'_0}\left[\Pr_{(\bm x, \bm y) \sim \unif(\bm L'_0)}[ f(\bm x, \bm y) = 0 ] <
1/2\right] \leq \frac{1}{5} .
\]

For \ref{third-event}, we are guaranteed from \cref{lemma:protocol-transformation} that
\begin{equation}
\label{eq:unsafe-stage-deficit}
    \deficit(\bm L, F_{\bm L}) \leq \deficit(X \times Y, F) + O(|\Pi|) - \Omega(|F \setminus F_{\bm L}|m)
\end{equation}
with probability at least $1-4\varepsilon$ over a random leaf $\bm L$ chosen as the
leaf containing $(\bm x, \bm y) \sim \unif(X \times Y)$. We want this to hold
for the random 0-leaf $\bm L'_0$. Since $\Pr[ f(\bm x, \bm y) = 0 ] \geq 1/2$
and the protocol has error at most $2\varepsilon$ over $(\bm x, \bm y)$, $\bm L$ is
a 0-leaf with probability at least $1/2-2\varepsilon$. Then
\cref{eq:unsafe-stage-deficit} must hold with probability at least $1 -
10\varepsilon$ for $\bm L'_0$ instead of $\bm L$, when $\varepsilon$ is sufficiently
small.

Applying the union bound, we can conclude that with positive probability, none
of the three events occur. Then there exists a $0$-leaf $L_0$ that satisfies the
properties stated in the claim.

\section{Density Restoring Partitions and Protocol Transformation}
\label{section:protocol-transformation}

\subsection{Density Restoring Partitions}

The following is \cite[Lemma 3.5]{BPPlifting} (with straightforward adaptation
to allow for $\bm x$ non-uniform, see also \cite{liftinglowdisc}):

\begin{lemma}[Density Restoring Partition]
\label{lemma:density-restoring-partition}
    Let $\bm x \in \Sigma^n$ be a random variable with support $X$, let $F \subseteq [n]$, and let $\gamma \in (0,1)$.
    Then there exists a partition $X = \bigsqcup_{j =
    1}^r X_j$ with associated sets $I_j \subseteq F$ and values $\alpha_j \in \Sigma^{I_j}$ such that:
    \begin{itemize}
        \item $X_j \define \{ x \in X \mid x_{I_j} = \alpha_j \} \setminus \bigcup_{i < j} X_i$;
        \item $(\bm{x}_{F \setminus I_j} \mid \bm{x} \in X_j)$ is $\gamma$-spread;
        \item $\Dinf(\bm x_{F \setminus I_j} \mid \bm x \in X_j) \leq
            \Dinf(\bm x_F) - (1-\gamma)|I_j|m + \delta_j$ where
        $\smash{\delta_j = \log\left(\frac{1}{\Pr[\bm x \in \cup_{k\geq j}
        X_k]}\right)}$.
    \end{itemize}
\end{lemma}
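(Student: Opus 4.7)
I would construct the partition greedily. At step $j$, let $R_j \coloneqq X \setminus \bigsqcup_{i<j} X_i$ be the remaining mass and write $\bm x^{(j)}$ for the distribution $(\bm x \mid \bm x \in R_j)$. Pick any pair $(I,\alpha)$ with $I \subseteq F$ and $\alpha \in \Sigma^I$ that maximizes the ``violation''
\[
\phi(I,\alpha) \;\coloneqq\; \log \Pr[\bm x^{(j)}_I = \alpha] + \gamma |I| m,
\]
with the convention that $\phi(\emptyset,()) = 0$. Set $I_j \coloneqq I$, $\alpha_j \coloneqq \alpha$, and $X_j \coloneqq \{x \in R_j : x_{I_j} = \alpha_j\}$. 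If the maximizer is $(\emptyset,())$, i.e.\ $\bm x^{(j)}_F$ is already $\gamma$-spread, then $X_j = R_j$ and the process terminates; otherwise the maximum is strictly positive because there is some $(I,\alpha)$ witnessing non-spreadness. Since $X$ is finite and each $X_j$ is nonempty, the process ends in finitely many steps.

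\textbf{Spreadness after conditioning (the main obstacle).} The delicate step — and the reason for choosing the \emph{maximum}-violating pair rather than any violating pair — is to argue that $(\bm x^{(j)}_{F \setminus I_j} \mid \bm x \in X_j)$ is $\gamma$-spread. I would prove this by contradiction: if some $I' \subseteq F \setminus I_j$ and $\alpha' \in \Sigma^{I'}$ satisfied $\Pr[\bm x^{(j)}_{I'} = \alpha' \mid \bm x \in X_j] > 2^{-\gamma|I'|m}$, then for the concatenated pair $(I_j \cup I',\, \alpha_j \alpha')$ one computes
\[
\Pr[\bm x^{(j)}_{I_j \cup I'} = \alpha_j \alpha'] \;=\; \Pr[\bm x^{(j)}_{I_j} = \alpha_j] \cdot \Pr[\bm x^{(j)}_{I'} = \alpha' \mid \bm x \in X_j] \;>\; \Pr[\bm x^{(j)}_{I_j} = \alpha_j] \cdot 2^{-\gamma|I'|m},
\]
whence $\phi(I_j \cup I',\, \alpha_j \alpha') > \phi(I_j, \alpha_j)$, contradicting maximality. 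This maximality step is the crux; the rest is bookkeeping.

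\textbf{Deficit bound.} From $\phi(I_j, \alpha_j) \geq 0$ I get $\Pr[\bm x^{(j)}_{I_j} = \alpha_j] \geq 2^{-\gamma|I_j|m}$, so $\Pr[\bm x \in X_j] \geq \Pr[\bm x \in R_j] \cdot 2^{-\gamma |I_j| m}$. Since $X_j \subseteq \{x : x_{I_j} = \alpha_j\}$, for every $\beta \in \Sigma^{F \setminus I_j}$,
\[
\Pr[\bm x_{F \setminus I_j} = \beta \mid \bm x \in X_j] \;\leq\; \frac{\Pr[\bm x_F = \alpha_j \beta]}{\Pr[\bm x \in X_j]} \;\leq\; \frac{2^{-\H_\infty(\bm x_F)}}{\Pr[\bm x \in R_j] \cdot 2^{-\gamma|I_j|m}}.
\]
Taking $-\log$ of the maximum over $\beta$, and recalling that $\delta_j = \log(1/\Pr[\bm x \in R_j])$ since $R_j = \bigcup_{k \geq j} X_k$, gives $\H_\infty(\bm x_{F \setminus I_j} \mid \bm x \in X_j) \geq \H_\infty(\bm x_F) - \gamma|I_j|m - \delta_j$. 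Subtracting both sides from $(|F|-|I_j|)m$ yields precisely the claimed deficit inequality $\Dinf(\bm x_{F \setminus I_j} \mid \bm x \in X_j) \leq \Dinf(\bm x_F) - (1-\gamma)|I_j|m + \delta_j$.
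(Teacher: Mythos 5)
Your proof is correct, and it is essentially the standard argument for this lemma (which the paper does not prove itself but cites from [BPPlifting, Lemma 3.5]): greedily peel off a maximum-violating pair $(I_j,\alpha_j)$, use maximality (via the chain rule $\Pr[\bm x^{(j)}_{I_j\cup I'}=\alpha_j\alpha']=\Pr[\bm x^{(j)}_{I_j}=\alpha_j]\cdot\Pr[\bm x^{(j)}_{I'}=\alpha'\mid \bm x\in X_j]$) to get spreadness of the conditioned variable, and combine $\Pr[\bm x\in X_j]\geq \Pr[\bm x\in R_j]\cdot 2^{-\gamma|I_j|m}$ with $\Pr[\bm x_F=\alpha_j\beta]\leq 2^{-\H_\infty(\bm x_F)}$ for the deficit bound. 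Your choice to maximize the quantity $\phi$ rather than take an inclusion-maximal violating set is an immaterial variation; both yield the same contradiction.
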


We define a procedure to apply the density restoring partition to rectangles $X \times Y$:

\begin{lemma}[Density Restoring Partition for Rectangles]
\label{lemma:density-restoration-rectangle}
Let $X \times Y \subseteq \Sigma^n \times \Sigma^n$, $F \subseteq [n]$, $\varepsilon, \gamma
\in (0,1)$. Then there exist partitions $X = \bigsqcup_{i=1}^a X_i$ and $Y = \bigsqcup_{j=1}^b Y_j$
where each rectangle $X_i \times Y_j \subseteq X \times Y$ is associated
with a set $F_{i,j} \subseteq F$ of ``free variables''
which satisfy the following properties:
\begin{enumerate}
    \item For all $i, j$, the random variables $(\bm x_{F_{i,j}} \mid \bm x \in X_i)$ and $(\bm y_{F_{i,j}} \mid \bm y \in Y_j)$
    are both $\gamma$-spread, where $(\bm x, \bm y) \sim \unif(X \times Y)$;
    \item With probability at least $1-2\varepsilon$ over $(\bm x, \bm y) \sim \unif(X \times Y)$, the
    unique rectangle $X_{\bm{i}}\times Y_{\bm{j}}$ containing $(\bm x, \bm y)$ satisfies
    \[
        \deficit(X_{\bm i} \times Y_{\bm j}, F_{\bm{i,j}})
        \leq \deficit(X \times Y, F) - (1-\gamma)|F \setminus F_{\bm{i,j}}|m + 2\log(1/\varepsilon) .
    \]
\end{enumerate}
\end{lemma}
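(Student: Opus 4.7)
The plan is to apply the single-side \cref{lemma:density-restoring-partition} twice---once to $\bm x \sim \unif(X)$ and once to $\bm y \sim \unif(Y)$, both with respect to the same $F$ and $\gamma$---and then take the common refinement. This yields a partition $X = \bigsqcup_{i=1}^{a} X_i$ with associated fixed coordinate sets $I_i^X \subseteq F$, values $\alpha_i \in \Sigma^{I_i^X}$, and deficit-inflation terms $\delta_i^X = \log(1/\Pr[\bm x \in \bigcup_{k \geq i} X_k])$; symmetrically we get $Y = \bigsqcup_{j=1}^{b} Y_j$ with $I_j^Y, \beta_j, \delta_j^Y$. I will define the free variables for the rectangle $X_i \times Y_j$ to be
\[
    F_{i,j} \coloneqq F \setminus (I_i^X \cup I_j^Y).
\]

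Condition (1) should follow almost immediately from the hereditary nature of $\gamma$-spreadness. \cref{lemma:density-restoring-partition} gives that $(\bm x_{F \setminus I_i^X} \mid \bm x \in X_i)$ is $\gamma$-spread, i.e., every sub-vector indexed by $I \subseteq F \setminus I_i^X$ has min-entropy at least $\gamma |I| m$. Since $F_{i,j} \subseteq F \setminus I_i^X$, the same inequality applies to every $I \subseteq F_{i,j}$, so $(\bm x_{F_{i,j}} \mid X_i)$ is $\gamma$-spread, and analogously for $(\bm y_{F_{i,j}} \mid Y_j)$.

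For condition (2), \cref{lemma:density-restoring-partition} gives $\Dinf(\bm x_{F \setminus I_i^X} \mid X_i) \leq \Dinf(\bm x_F) - (1-\gamma)|I_i^X| m + \delta_i^X$, and analogously on the $Y$ side. I will use the elementary fact that dropping coordinates never \emph{increases} deficit: for a random variable $\bm v$ and $J \subseteq K$, one has $\Dinf(\bm v_J) \leq \Dinf(\bm v_K)$, because $H_\infty(\bm v_K) - H_\infty(\bm v_J) \leq (|K| - |J|)m$ by a standard counting argument. Applied to the marginal $F_{i,j} \subseteq F \setminus I_i^X$, this gives $\Dinf(\bm x_{F_{i,j}} \mid X_i) \leq \Dinf(\bm x_{F \setminus I_i^X} \mid X_i)$, and analogously for $Y$. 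Summing the two and invoking subadditivity $|I_i^X| + |I_j^Y| \geq |I_i^X \cup I_j^Y| = |F \setminus F_{i,j}|$ yields
\[
    \deficit(X_i \times Y_j, F_{i,j}) \leq \deficit(X \times Y, F) - (1-\gamma)|F \setminus F_{i,j}|m + \delta_i^X + \delta_j^Y.
\]

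It remains to argue that $\delta_i^X + \delta_j^Y \leq 2\log(1/\varepsilon)$ with probability at least $1 - 2\varepsilon$ over $(\bm x, \bm y) \sim \unif(X \times Y)$. For $\bm x \sim \unif(X)$, the event ``$\delta_i^X > \log(1/\varepsilon)$'' (where $i$ is the unique partition index with $\bm x \in X_i$) means $\bm x$ lies in a suffix $\bigcup_{k \geq i} X_k$ of total mass less than $\varepsilon$, and hence occurs with probability at most $\varepsilon$. The same bound holds for $\delta_j^Y$ by symmetry, and since the marginals of $\unif(X \times Y)$ on $X$ and on $Y$ are independent, a union bound closes the argument. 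I do not foresee a substantial obstacle here; the only minor subtlety is getting the sharper factor $(1-\gamma)|F \setminus F_{i,j}|m$ rather than the loose $(1-\gamma)(|I_i^X| + |I_j^Y|)m$, which is handled precisely by the subadditivity step above.
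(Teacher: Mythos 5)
Your proposal is correct and follows essentially the same route as the paper's proof: apply the single-variable density restoring partition to each side, take the product partition with $F_{i,j} = F \setminus (I_i \cup J_j)$, use heredity of spreadness for (1), and for (2) combine deficit monotonicity under dropping coordinates, the tail-mass argument showing each $\delta$-term exceeds $\log(1/\varepsilon)$ with probability at most $\varepsilon$, a union bound, and $|I_i|+|J_j| \geq |I_i \cup J_j|$. No gaps.
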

\begin{proof}
Begin by applying the density restoring partition of \cref{lemma:density-restoring-partition}
to each of $X$ and $Y$ to obtain
\[
    X = \bigsqcup_i X_i, \qquad\text{ and}\qquad Y = \bigsqcup_j Y_j ,
\]
where for each $i,j$ there exist sets $I_i, J_j \subseteq F$ and assignments $\alpha_i \in
\Sigma^{I_i}$, $\beta_j \in \Sigma^{J_j}$ such that
\[
    X_i \define \{ x \in X \mid x_{I_i} = \alpha_i \} \setminus \bigcup_{k < i} X_k, \qquad\text{ and }\qquad
    Y_j \define \{ y \in Y \mid y_{J_j} = \beta_j \} \setminus \bigcup_{k < j} Y_k ,
\]
where $(\bm{x}_{F \setminus I_i} \mid \bm{x} \in X_i)$ and $(\bm{y}_{F
\setminus J_j} \mid \bm{y} \in Y_j)$
are both $\gamma$-spread for $(\bm x, \bm y) \sim \unif(X \times Y)$. We
define the set $F_{i,j}$ associated with $X_i \times Y_j$ as
\[
    F_{i,j} \define F \setminus (I_i \cup J_j) ,
\]
so that the ``free variables'' are those which are not assigned by either one of the density
restoring partitions. Since $F_{i,j} \subseteq F \setminus I_i$ and $F_{i,j} \subseteq F \setminus J_j$,
the random variables
\[
    (\bm{x}_{F_{i,j}} \mid \bm{x} \in X_i) \qquad\text{ and } \qquad
    (\bm{y}_{F_{i,j}} \mid \bm{y} \in Y_j) 
\]
remain $\gamma$-spread, establishing the first property of the lemma.
Let us verify the second property. For every fixed rectangle $X_i \times Y_j$,
we have from \cref{lemma:density-restoring-partition} that
\[
    \Dinf(\bm{x}_{F \setminus I_i} \mid \bm x \in X_i)
    \leq \Dinf(\bm x_F) - (1-\gamma)|I_j|m + \log\left(\frac{1}{\Pr[ \bm x' \in \bigcup_{k \geq i} X_k ]}\right) ,
\]
where $\bm{x'} \sim \unif(X)$.
Since $F_{i,j} = F \setminus (I_i \cup J_j)$, we can use
\cref{prop:monotonicity-deficit} (stated below) to bound
$\Dinf(\bm{x}_{F_{i,j}}) \leq \Dinf(\bm{x}_{F \setminus I_i})$. Then the
remaining task to bound the final term.

\begin{claim}
With probability at least $1-\varepsilon$ over $\bm x \sim \unif(X)$, the unique
value $\bm{i} \in [a]$ such that $\bm x \in X_{\bm{i}}$ satisfies $\log\left(\frac{1}{\Pr[
\bm x' \in \bigcup_{k \geq \bm{i}} X_k ]}\right) < \log(1/\varepsilon)$.
\end{claim}
\begin{proof}[Proof of claim]
We may think of the distribution over the index $\bm i \in [a]$ defined by
$p(i) \define \Pr[ \bm i = i ] = \Pr[ \bm x \in X_i ]$. Then for all $t \in [0,1]$,
\[
    \Pr_{\bm x}\left[ \log\left(\frac{1}{\Pr[ \bm{x'} \in \bigcup_{k \geq i} X_k ]}\right) > t \right]
    = \Pr_{\bm i}\left[ \log\left(\frac{1}{\sum_{k \geq \bm i} p(k) }\right) > t \right]
    = \Pr_{\bm i}\left[ \sum_{k \geq \bm i} p(k) < 2^{-t} \right] .
\]
Let $i^*$ be the smallest value such that $\sum_{k \geq i^*} p(k) < 2^{-t}$.
If $\sum_{k \geq \bm i} p(k) < 2^{-t}$ then $\bm i \geq i^*$ so the probability of this event is at most $\sum_{k \geq i^*} p(k) < 2^{-t}$. Setting $t = \log(1/\varepsilon)$ produces the desired bound.
\end{proof}
Applying the same reasoning to $\bm y$ and using the union bound over $\bm x$ and $\bm y$,
we have with probability at least $1-2\varepsilon$ the inequality
\[
    \Dinf(\bm{x}_{F_{\bm{i,j}}} \mid \bm{x} \in X_{\bm{i}}) + \Dinf(\bm{y}_{F_{\bm{i,j}}} \mid \bm{y} \in Y_{\bm{j}})
    \leq \Dinf(\bm{x}_F) + \Dinf(\bm{y}_F) - (1-\gamma) (|I_{\bm{i}}| + |J_{\bm{j}}|)m + 2\log(1/\varepsilon),
\]
which concludes the proof since $|I_{\bm{i}}| + |J_{\bm{j}}| \geq |I_{\bm{i}} \cup J_{\bm{j}}|$.
\end{proof}

We used the following proposition in the proof above:

\begin{proposition}
\label{prop:monotonicity-deficit}
    Let $\bm x$ be a random variable over $\Sigma^n$ and let $J \subseteq F \subseteq[n]$. Then
    \[
        \Dinf(\bm{x}_{F \setminus J}) \leq \Dinf(\bm x_F) .
    \]
\end{proposition}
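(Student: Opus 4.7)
The plan is to reduce the inequality to the standard min-entropy chain rule. Unfolding the definition $\Dinf(\bm x_S) = |S|m - \H_\infty(\bm x_S)$, the claim $\Dinf(\bm x_{F \setminus J}) \le \Dinf(\bm x_F)$ is equivalent to
\[
  \H_\infty(\bm x_F) \;\le\; \H_\infty(\bm x_{F \setminus J}) + |J|\,m.
\]
So my only task is to verify this one-line chain-rule inequality and then plug it back into the definition of the deficit.

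To prove the chain-rule bound, I would argue directly from the definition of min-entropy. Let $a^\ast \in \Sigma^{F \setminus J}$ attain $\max_a \Pr[\bm x_{F \setminus J} = a]$. Splitting the event $\{\bm x_{F \setminus J} = a^\ast\}$ according to the value of $\bm x_J$, and using that $\bm x_J$ takes at most $|\Sigma|^{|J|} = 2^{|J|m}$ values, one obtains
\[
  2^{-\H_\infty(\bm x_{F \setminus J})} \;=\; \Pr[\bm x_{F \setminus J} = a^\ast] \;=\; \sum_{b \in \Sigma^J} \Pr[\bm x_F = (a^\ast,b)] \;\le\; 2^{|J|m}\cdot \max_{(a,b)} \Pr[\bm x_F = (a,b)] \;=\; 2^{|J|m - \H_\infty(\bm x_F)}.
\]
Taking $-\log_2$ of both sides and rearranging gives exactly the required bound $\H_\infty(\bm x_F) \le \H_\infty(\bm x_{F \setminus J}) + |J|m$. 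Substituting this back into $\Dinf(\bm x_{F \setminus J}) = |F\setminus J|m - \H_\infty(\bm x_{F \setminus J})$ and noticing that $|F\setminus J|m + |J|m = |F|m$ closes the proof.

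There is no real obstacle here: the proposition is just monotonicity of the min-entropy deficit under projecting out coordinates, and follows from a single probabilistic inequality once the definitions are unfolded. The only thing to be careful about is keeping the bookkeeping between $|F|$, $|F\setminus J|$, and $|J|$ consistent; no nontrivial estimate is required.
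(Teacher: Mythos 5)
Your proposal is correct and follows essentially the same route as the paper: both reduce the claim to the min-entropy bound $\H_\infty(\bm{x}_{F \setminus J}) \geq \H_\infty(\bm{x}_F) - |J|m$, proved by summing the probability of the maximizing outcome of $\bm{x}_{F\setminus J}$ over the at most $|\Sigma|^{|J|} = 2^{|J|m}$ possible values of $\bm{x}_J$. The bookkeeping $|F\setminus J|m + |J|m = |F|m$ then closes the argument exactly as in the paper.
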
   
\begin{proof}
First, observe
\begin{align*}
    \max_{z \in \Sigma^{F \setminus J}} \Pr[ \bm{x}_{F \setminus J} = z]
    = \max_{z \in \Sigma^{F \setminus J}} \sum_{w \in \Sigma^F, w_J = z} \Pr[ \bm{x}_F = w] 
    \leq |\Sigma|^{|J|} \max_{w \in \Sigma^F} \Pr[ \bm{x}_F = w] ,
\end{align*}
so that
\[
    \H_\infty(\bm{x}_{F \setminus J}) \geq \H_\infty(\bm{x}_F) - |J|m .
\]
Then
\[
    \Dinf(\bm{x}_{F \setminus J})
    = (|F|-|J|)m - \H_\infty(\bm{x}_{F \setminus J})
    \leq |F|m - \H_\infty(\bm{x}_F) = 
    \Dinf(\bm{x}_F) . \qedhere
\]
\end{proof}

\subsection{Protocol Transformation}

To prove the \nameref{lemma:protocol-transformation}, we must relate the growth of the deficit
function to the communication complexity of the protocol $\Pi$.
\begin{proposition}
\label{prop:protocol-deficit}
    Let $X \times Y \subseteq \Sigma^n \times \Sigma^n$, let $F \subseteq [n]$,
    and let $\Pi$ be any communication protocol with cost $d$.
    Then for any $\varepsilon > 0$,
    with probability at least $1-2\varepsilon$ over a random leaf $\bm L$ of $\Pi$
    chosen as the leaf reached by $(\bm x', \bm y')\sim X\times Y$,
    \[
        \deficit(\bm{L},F) 
        \leq \deficit(X\times Y,F) + 2 d + 2\log(1/\varepsilon).
    \]
\end{proposition}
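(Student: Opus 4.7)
\medskip

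\noindent
\textbf{Proof plan for \cref{prop:protocol-deficit}.} The plan is to bound the two contributions to $\deficit(\bm L, F)$ separately: on Alice's side we control the drop in $\H_\infty(\bm x_F)$ when we condition on $\bm x \in X^{(L)}$, and likewise for Bob. The key identity to establish first is the one-sided bound
\[
    \H_\infty\!\left(\bm x'_F \mid \bm x' \in X^{(L)}\right)
    \;\geq\; \H_\infty(\bm x_F) \;-\; \log\!\frac{1}{\Pr[\bm x \in X^{(L)}]},
\]
where $\bm x \sim \unif(X)$ and $\bm x' \sim \unif(X^{(L)})$. This follows from the elementary observation that for any $z \in \Sigma^F$,
\[
    \Pr[\bm x'_F = z]
    \;=\; \frac{|\{x \in X^{(L)} : x_F = z\}|}{|X^{(L)}|}
    \;\leq\; \frac{|\{x \in X : x_F = z\}|}{|X^{(L)}|}
    \;=\; \Pr[\bm x_F = z] \cdot \frac{|X|}{|X^{(L)}|}.
\]
Rearranging yields the min-entropy bound, and hence $\Dinf(\bm x'_F) \leq \Dinf(\bm x_F) + \log(1/\Pr[\bm x \in X^{(L)}])$. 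Doing the same on Bob's side and summing gives
\[
    \deficit(\bm L, F)
    \;\leq\; \deficit(X \times Y, F)
    \;+\; \log\!\frac{1}{p_x(\bm L)}
    \;+\; \log\!\frac{1}{p_y(\bm L)},
\]
where we write $p_x(L) \define \Pr_{\bm x' \sim X}[\bm x' \in X^{(L)}]$ and $p_y(L) \define \Pr_{\bm y' \sim Y}[\bm y' \in Y^{(L)}]$.

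\medskip

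\noindent
The second step is to argue that $p_x(\bm L)$ and $p_y(\bm L)$ are not too small with high probability over the random leaf $\bm L$. Here I would set the threshold $\alpha \define \varepsilon \cdot 2^{-d}$. Since the protocol has cost $d$, it has at most $2^d$ leaves, each corresponding to a combinatorial rectangle $X^{(L)} \times Y^{(L)}$. For any fixed $y \in Y$, the family of leaves with $y \in Y^{(L)}$ induces a partition of $X$, so $\sum_{L : y \in Y^{(L)}} p_x(L) \leq 1$; averaging over $\bm y'$ gives $\sum_L p_x(L) p_y(L) = 1$. The bad-leaf probability can then be bounded as
\[
    \Pr_{(\bm x', \bm y')}\!\left[p_x(\bm L) < \alpha\right]
    \;=\; \sum_{L : p_x(L) < \alpha} p_x(L)\, p_y(L)
    \;<\; \alpha \sum_{L} p_y(L)
    \;\leq\; \alpha \cdot 2^d
    \;=\; \varepsilon,
\]
using that there are at most $2^d$ leaves so $\sum_L p_y(L) \leq 2^d$. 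The symmetric bound for $p_y(\bm L)$ is analogous, and a union bound gives that with probability $\geq 1 - 2\varepsilon$ we have both $\log(1/p_x(\bm L)) \leq d + \log(1/\varepsilon)$ and $\log(1/p_y(\bm L)) \leq d + \log(1/\varepsilon)$, yielding the claimed total $2d + 2\log(1/\varepsilon)$.

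\medskip

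\noindent
The proof is essentially bookkeeping, so there is no single hard step; the only point that demands a little care is the ``bad-leaf'' counting, where one must remember to sum the probability mass of bad leaves with respect to the actual distribution on leaves (i.e.\ weighted by $p_x(L) p_y(L)$) rather than uniformly, and to use the rectangle structure to replace one of these factors by its worst-case value $\alpha$ while bounding the remaining sum by $2^d$. Once that is done, the two ingredients compose directly to give the proposition.
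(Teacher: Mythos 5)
Your proof is correct and takes essentially the same route as the paper: the paper also splits the deficit into Alice's and Bob's contributions and bounds each entropy drop by $d+\log(1/\varepsilon)$ except with probability $\varepsilon$, invoking the min-entropy chain rule (\cref{lem:minent-chainrule}) applied to the pair (transcript, $\bm x_F$), which is exactly the inequality you rederive by hand from the rectangle structure of the leaves. Your bad-leaf counting with threshold $\varepsilon 2^{-d}$ and the final union bound match the paper's argument, so there is nothing to fix.
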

\noindent
To prove the statement, we will need a chain rule for min-entropy:
\begin{lemma}[see e.g. {\cite[Lemma~6.30]{Vad12}}]
\label{lem:minent-chainrule}
    Let $(\bm a, \bm b)$ be distributed over $A \times B$ with $|A|=2^\ell$ and $\minentropy(\bm a, \bm b) \ge k$, and let $\mathcal{A}$ denote the marginal distribution of $\bm{a}$.
    Then for every $\varepsilon > 0$
    \[ \Pr_{\bm a' \sim \mathcal{A}}[\minentropy(\bm b \mid \bm a = \bm a') \ge k - \ell - \log 1/\varepsilon] \ge 1 - \varepsilon. \]
\end{lemma}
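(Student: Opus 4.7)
The plan is a direct union bound argument over the ``bad'' values of $\bm a$. Let me write $\tau \coloneqq k - \ell - \log(1/\varepsilon)$ for the target min-entropy bound. Call a value $a' \in A$ \emph{bad} if it fails the conclusion, i.e., if there exists some $b^* \in B$ with
\[
    \Pr[\bm b = b^* \mid \bm a = a'] > 2^{-\tau}.
\]
The goal is to show that the marginal mass $\Pr_{\mathcal{A}}[\bm a \text{ is bad}]$ is at most $\varepsilon$.

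The key observation I would use is that each individual bad $a'$ must itself have small probability under $\mathcal{A}$. Indeed, fix a bad $a'$ with witness $b^*$; expanding the conditional probability gives
\[
    \Pr[\bm a = a', \bm b = b^*] > 2^{-\tau}\cdot\Pr[\bm a = a'].
\]
On the other hand, the hypothesis $\minentropy(\bm a,\bm b)\ge k$ gives $\Pr[\bm a = a', \bm b = b^*]\le 2^{-k}$. Dividing, $\Pr[\bm a = a'] < 2^{-k+\tau} = \varepsilon\cdot 2^{-\ell}$. This is the crucial step where both the joint min-entropy bound and the definition of $\tau$ get used, and it is the only place where the value of $\tau$ is tuned.

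Now I would finish by summing this estimate across all bad $a'$. Since $|A|=2^\ell$, there are at most $2^\ell$ bad values, each contributing less than $\varepsilon\cdot 2^{-\ell}$ to the marginal mass, so
\[
    \Pr_{\bm a'\sim\mathcal{A}}[\bm a'\text{ is bad}] \;=\; \sum_{a'\text{ bad}}\Pr[\bm a = a'] \;<\; 2^\ell\cdot\varepsilon\cdot 2^{-\ell} \;=\; \varepsilon,
\]
which yields the lemma by taking complements. I do not expect any real obstacle: the entire argument is one line of algebra followed by a trivial union bound, and the only subtlety is remembering that $|A|=2^\ell$ is the correct size of the support to union-bound against (rather than, say, the support of $\bm a$ restricted to nonzero-probability values, though that would only improve the bound).
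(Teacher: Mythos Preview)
Your proof is correct and essentially identical to the paper's: the paper first observes that $\minentropy(\bm b\mid\bm a=a)\ge\minentropy(\bm a,\bm b)-\log(1/p(a))$ for every $a$, and then bounds $\Pr[p(\bm a)\le\varepsilon/|A|]\le\varepsilon$ by the same union bound over $|A|=2^\ell$ values. You have phrased the contrapositive (``bad $a'$ must have $p(a')<\varepsilon\cdot 2^{-\ell}$''), but the computation is the same.
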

\begin{proof}
    Fix some $a\in A,b\in B$ then $\Pr[\bm a = a \land \bm b = b] / \Pr[\bm a = a] = \Pr[\bm b = b \mid \bm a = a]$, so $\minentropy(\bm b \mid \bm a = a) \ge \minentropy(\bm a, \bm b) - \log 1/\Pr[\bm a = a]$. Then defining $p(a) \coloneqq \Pr[\bm a = a]$ we have
    \[ \Pr[\log 1/p(\bm a) \ge \ell  + \log 1/\varepsilon] = \Pr[p(\bm a) \le \varepsilon/|A|] = \sum_{a\colon p(a) \le \varepsilon/|A|} p(a) \le \varepsilon.\qedhere\] 
\end{proof}

\begin{proof}[Proof of \cref{prop:protocol-deficit}]
The claim is equivalent to the statement that, with probability at least $1-2\varepsilon$
over a random leaf $\bm L$,
\[
    \H_\infty(\bm{x}_F \mid (\bm x, \bm y) \in \bm L)
    + \H_\infty(\bm{y}_F \mid (\bm x, \bm y) \in \bm L)
    \geq 
    \H_\infty(\bm{x}_F )
    + \H_\infty(\bm{y}_F ) - 2d - 2\log(1/\varepsilon) .
\]
Let $t(x, y)$ denote the transcript of protocol $\Pi$ on input $x,y$, so that $t(x,y) \in \zo^d$,
and observe that transcript $t(x,y)$ is in one-to-one correspondence with the leaf $L$ of $(x,y)$.
Consider the random variable $(t(\bm x, \bm y), \bm{x}_F )$.
By the chain rule for min-entropy (\cref{lem:minent-chainrule}),
with probability at least $1-\varepsilon$ over a random transcript $\bm t' = t(\bm x', \bm y')$
\[
    \H_\infty(\bm{x}_F \mid t(\bm x, \bm y) = \bm t')
    \geq \H_\infty(\bm{x}_F) - d - \log(1/\varepsilon) .
\]
Using the same argument for $\bm y$ and applying the union bound, we get the desired conclusion.
\end{proof}

We may now prove the protocol transformation lemma.

\begin{proof}[Proof of \nameref{lemma:protocol-transformation}]
Let $X \times Y \subseteq \Sigma^n \times \Sigma^n$, let $F \subseteq [n]$, and
let $\Pi$ be a deterministic protocol with cost $o(\sqrt n m)$ and error $\varepsilon$ over $\mu$.
We define the protocol $\Pi'$ in \cref{alg:transformed-protocol},
which we think of as outputting a transcript
(\ie a leaf L) along with a value. 

\begin{algorithm}[h]
\begin{algorithmic}[1]
\Input{$(x,y)\in X\times Y$}
\Output{Leaf $L'(x, y)$ and value $b=\Pi(x,y)$}
\State Run protocol $\Pi$ on $(x, y)$ to obtain a leaf $L = L(x,y)$
and a value $b$.
\State Apply the density restoring partition for rectangles (\cref{lemma:density-restoration-rectangle})
to $L = U^{(L)} \times V^{(L)}$ and $F$, to obtain
\[
    U^{(L)} = \bigsqcup_i U^{(L)}_i ,\qquad\text{ and}\qquad
    V^{(L)} = \bigsqcup_j V^{(L)}_j .
\]
\State Alice sends value $i$ for the unique $U^{(L)}_i$ containing $x$.
\State Bob sends value $j$ for the unique $V^{(L)}_j$ containing $y$.
\State The players output leaf $U^{(L)}_i \times V^{(L)}_j$ and value $b$.
\end{algorithmic}
\caption{The transformed protocol $\Pi'$}
\label{alg:transformed-protocol}
\end{algorithm}
By definition, on any input the protocol outputs the same value as the original
protocol $\Pi$, so its output is correct with the probability $1 - \varepsilon$ on $\mu$.

For each leaf $L' = U^{(L)}_i \times V^{(L)}_j$ of $\Pi'$, we associate the free
variables $F_{L'} = F^{(L)}_{i,j} \subseteq F$ given by
\cref{lemma:density-restoration-rectangle}. The first desired property of
\cref{lemma:protocol-transformation} is that for every leaf $U^{(L)}_i \times
V^{(L)}_j$ of $\Pi'$, the variables
\[
    (\bm x_{F^{(L)}_{i,j}} \mid (\bm x, \bm y) \sim U^{(L)}_i \times V^{(L)}_j )
    \qquad\text{ and }\qquad
    (\bm y_{F^{(L)}_{i,j}} \mid (\bm x, \bm y) \sim U^{(L)}_i \times V^{(L)}_j )
\]
are $0.9$-spread. This is immediately guaranteed by \cref{lemma:density-restoration-rectangle}.
The second desired property of \cref{lemma:protocol-transformation} is that
a random leaf $\bm{L'} = U^{(\bm L)}_{\bm i} \times V^{(\bm L)}_{\bm j}$ satisfies
\begin{equation*}\label{eqn:total_deficit_increase}
    \deficit(\bm L', F_{\bm L'}) 
    \leq \deficit(X \times Y, F) + O(|\Pi|) - \Omega(|F \setminus F_{\bm L'}|)m
\end{equation*}
with probability at least $1-4\varepsilon$, where constants under $O(\cdot)$ and $\Omega(\cdot)$ ($C_1$ and $C_2$, respectively) are universal. The random leaf $\bm L'$ is obtained by first selecting a random leaf $\bm
L$ of $\Pi$ by taking the leaf containing $(\bm x, \bm y) \sim \unif(X \times
Y)$, and then conditional on $(\bm x, \bm y) \in \bm L$ taking the part $U^{(\bm
L)}_{\bm i} \times V^{(\bm L)}_{\bm j}$ containing $(\bm x, \bm y)$.

By \cref{prop:protocol-deficit}, with probability at least $1-2\varepsilon$ over $\bm L$, we have
\[
    \deficit( \bm L, F )  \leq \deficit(X \times Y, F) + 2|\Pi| + 2\log(1/\varepsilon) .
\]
Now for any fixed leaf $L$ of $\Pi$, \cref{lemma:density-restoration-rectangle} guarantees
that with probability at least $1-2\varepsilon$ over $(\bm x, \bm y) \sim \unif(X \times Y)$
conditional on $(\bm x, \bm y) \in L$, we have
\[
    \deficit( U^{(L)}_{\bm i} \times V^{(L)}_{\bm j}, F^{(L)}_{\bm i, \bm j} )
    \leq \deficit( L, F ) - 0.1 |F \setminus F^{(L)}_{\bm i, \bm j}|m + 2\log(1/\varepsilon) .
\]
Therefore, by the union bound, we may combine these inequalities to obtain,
with probability at least $1-4\varepsilon$ over the random leaf $\bm L'$ of $\Pi'$,
\[
    \deficit(\bm L', F_{\bm L'})
    \leq \deficit(X \times Y, F) + 2|\Pi| - 0.1 |F \setminus F_{\bm L'}|m + 4\log(1/\varepsilon) .
\]
For any fixed $\varepsilon > 0$ and sufficiently large $n$,
$4\log(1/\varepsilon) < |\Pi|$, so we get the conclusion with $C_1 = 3$ and $C_2 = 0.1$.
\end{proof}

\section{Completing the Safe Stage Lemma}
\label{section:completing-safe-stage}

\subsection{Closeness Lemma}
\label{sec:closeness-proof}
In this section, we show that the sprinkled-$1$s distribution in
\cref{def:sprinkled-1s} is close to uniform for the ``good leaves'', \ie the
leaves $L$ where we fix at most $\sqrt n / 1000$ new variables ($|F|-|F_L| \leq
\sqrt n / 1000$). The intuition of this proof is that if the communication protocol ``fixes''
at most $\sqrt n / 1000$ variables within one stage, then this is similar to ``querying''
at most $\sqrt n /1000$ gadgets, and therefore by a birthday paradox argument
we may sprinkle a random set of $\sqrt n$ gadgets fixed to 1 without affecting the distribution over
leaves of the protocol. We require the following birthday paradox bounds:

\begin{claim}
\label{claim:birthday}
    Let $\bm I \sim \binom{[\ell]}{s}$ be a uniformly random set of $s$ elements drawn from $[\ell]$
    and let $T \subseteq [\ell]$ be such that $s \cdot |T| \leq \ell/100$. Then
    \[
        \Pr[ \bm I \cap T = \emptyset ] \geq 1 - 1/100
        \qquad\text{ and }\qquad
        \Exp[ 2^{|\bm I \cap T|} ] \leq 1 + 1/20 .
    \]
\end{claim}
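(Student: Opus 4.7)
The plan is to handle the two bounds separately. For the first inequality, my first step would be a straightforward union bound: since each fixed $i \in [\ell]$ lies in $\bm I$ with probability exactly $s/\ell$, one has $\Pr[\bm I \cap T \neq \emptyset] \leq \sum_{i \in T} \Pr[i \in \bm I] = s|T|/\ell \leq 1/100$ by the hypothesis $s \cdot |T| \leq \ell/100$. This yields the first bound immediately.

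For the exponential moment, my approach is to rewrite $2^{|\bm I \cap T|}$ as a product and then expand. Specifically, $2^{|\bm I \cap T|} = \prod_{i \in T}(1 + \mathbbm{1}[i \in \bm I])$, so taking expectations and expanding the product gives
\[
\Exp\bigl[2^{|\bm I \cap T|}\bigr] = \sum_{S \subseteq T} \Pr[S \subseteq \bm I].
\]
For a fixed $S \subseteq T$ with $|S| = k$, the probability of containment in a uniformly random $s$-subset of $[\ell]$ is $\binom{\ell-k}{s-k}/\binom{\ell}{s} = \prod_{i=0}^{k-1} (s-i)/(\ell-i)$. Since $s \leq \ell$, each factor is bounded above by $s/\ell$, so $\Pr[S \subseteq \bm I] \leq (s/\ell)^k$. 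Summing over all $S \subseteq T$ grouped by size then yields
\[
\Exp\bigl[2^{|\bm I \cap T|}\bigr] \leq \sum_{k=0}^{|T|} \binom{|T|}{k}(s/\ell)^k = (1 + s/\ell)^{|T|} \leq \exp(s|T|/\ell) \leq e^{1/100},
\]
which is comfortably below $1 + 1/20$.

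I do not expect any real obstacle here. The only step that requires a moment of care is the monotonicity inequality $(s-i)/(\ell-i) \leq s/\ell$, valid precisely because $s \leq \ell$ (which is guaranteed by the hypothesis $s|T| \leq \ell/100$ unless $T = \varnothing$, in which case both bounds are trivial). A slicker alternative would be to invoke negative correlation of the indicators $\mathbbm{1}[i \in \bm I]$, which also gives $\Pr[S \subseteq \bm I] \leq \prod_{i \in S} \Pr[i \in \bm I] = (s/\ell)^{|S|}$, but the direct binomial-coefficient computation above already suffices and keeps the argument self-contained.
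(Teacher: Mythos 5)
Your proof is correct, and both inequalities are established by the same elementary birthday-paradox computation as in the paper; the only difference is in the bookkeeping for the exponential moment. The paper bounds the tail $\Pr[|\bm I \cap T| \geq k] \leq \binom{s}{k}(|T|/\ell)^k \leq 100^{-k}$ (a union bound over $k$-subsets of the $s$ positions of $\bm I$) and then uses $\Exp[2^{|\bm I\cap T|}] \leq 1 + \sum_{k\geq 1} 2^k \Pr[|\bm I \cap T|\geq k] \leq 1 + \sum_{k\geq 1} 50^{-k}$, with the first inequality of the claim falling out as the $k=1$ case. You instead use the exact identity $\Exp[2^{|\bm I \cap T|}] = \sum_{S \subseteq T}\Pr[S \subseteq \bm I]$ together with $\Pr[S \subseteq \bm I] \leq (s/\ell)^{|S|}$, arriving at the cleaner closed form $(1+s/\ell)^{|T|} \leq e^{s|T|/\ell} \leq e^{1/100}$; your separate union bound over elements of $T$ for the first inequality and your check that $s \leq \ell$ (needed for $(s-i)/(\ell-i)\leq s/\ell$) are both fine. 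The two routes are equally rigorous; yours gives a marginally sharper constant, while the paper's tail-sum version is the more standard template. No gaps.
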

\begin{proof}
    Write $\bm I = \{\bm {i}_1, \dotsc, \bm {i}_k\}$.
    For any fixed $k$ we estimate
    \[
        \Pr[ |\bm I \cap T| \geq k ]
        = \Pr[ \exists K \subseteq [s] : |K| = k \wedge \;\forall j \in K, \;\; \bm i_j \in T ] 
        \leq \binom{s}{k} \left(\frac{|T|}{\ell}\right)^k
        \leq \frac{(s \cdot |T|)^k}{\ell^k} \leq 100^{-k} .
    \]
    Then $\Pr[ \bm I \cap T = \emptyset ] \geq 1 - 1/100$ and
    \[
        \Exp[ 2^{|\bm I \cap T|} ] \leq 1 + \sum_{k = 1}^\ell 2^k \Pr[ |\bm I \cap T| \geq k ] \leq 1 + \sum_{k=1}^\ell \frac{1}{50^k}
        \leq 1 + \frac{1}{20}. \qedhere
    \]
\end{proof}

\lemmacloseness*

\begin{proof}
Within this proof we will use measure notation.
For any subset $I\subseteq F$ of size $|I|=\sqrt{n}$, let $\psi_I$ denote the measure
on rectangles defined by our distributions $\sigma(X,Y, F)$ conditioned on choosing the set $I$ of gadgets to fix to 1:
\begin{align*}
\forall A \times B \subseteq X \times Y \colon\qquad
    \psi_I(A \times B) &\define \Pr_{(\bm x, \bm y) \sim \sigma(X \times Y, F)}[ (\bm x, \bm y) \in A \times B \mid \bm I = I].
\end{align*}

According the definition of $\sigma(X,Y,F)$, the lemma's conclusion is equivalent to:
\begin{align}
\label{eq:closeness-target}
\Exp_{\bm{I}\sim \binom{F}{\sqrt{n}}}[\psi_{\bm I}(X'\times Y')] \in \frac{|X' \times Y'|}{|X \times Y|} (1 \pm 0.1 )&,\text{ and}\\
\label{eq:conditioned-closeness-target}
\Exp_{\bm{I}\sim \binom{F}{\sqrt{n}}}[\psi_{\bm I}(X'\times Y')\mid \bm{I}\subseteq F'] \in \frac{|X' \times Y'|}{|X \times Y|} (1 \pm 0.1 )&.
\end{align}

To prove the equations, we first observe that $\psi_I$ is uniform over its support
\[
    \supp(\psi_I) = \{ (x, y) \in X\times Y \mid \forall i \in I, \IP_m(x_i,y_i) = 1 \},
\]
so
\[
\psi_I(X'\times Y')=\frac{|\supp(\psi_I)\cap X'\times Y'|}{|\supp(\psi_I)|}.
\]

Next, we calculate a formula for $|\supp(\psi_I)\cap A\times B|$ for any rectangle $A\times B\subseteq X\times Y$.
For convenience, we define $\delta = 2^{-m/20} = n^{-5}$ for $m = 100\log n$.
Observe that $(1-2^{-m})^{\sqrt n} \geq 1 - \sqrt n \cdot 2^{-m} \geq 1 - \delta$.
Consider $(\bm a, \bm b) \sim \unif(A \times B)$. Let's assume that
$\bm a_S, \bm b_S$ are both $0.9$-spread, for some set $S \subseteq F$.
Then, due to
the pseudorandomness lemma, \cref{lemma:ip-gadget-pseudorandom}, the sequence of inner
products $(\IP_m(\bm a_i, \bm b_i))_{i \in S}$ is nearly uniform; specifically,
\[
    \forall z \in \zo^S \colon\qquad
    \Pr[ (\IP_m(\bm a_i, \bm b_i))_{i \in S} = z ] \in \frac{1}{2^{|S|}}(1 \pm 2^{-m/20}) = \frac{1}{2^{|S|}}(1 \pm \delta) .
\]
Therefore,
\begin{equation}
\label{eq:closeness-I-notin-S}
\begin{aligned}
    |\supp(\psi_I) \cap A \times B|
        &= |A \times B| \cdot \Pr[ (\bm a, \bm b) \in \supp(\psi_I) ] \\
        &= |A \times B| \cdot \Pr[ ( \IP_m(\bm a_i, \bm b_i) )_{i \in I} = 1^I ] \\
        &\leq |A \times B| \cdot \Pr[ ( \IP_m(\bm a_i, \bm b_i) )_{i \in I \cap S} = 1^{I \cap S} ] \\
        &\leq |A \times B| \cdot 2^{-|I \cap S|} \cdot (1 + \delta).
\end{aligned}
\end{equation}
In the case $I \subseteq S$, repeating the above calculation also yields
\begin{equation}
\label{eq:closeness-I-in-S}
    |\supp(\psi_I) \cap A \times B| \in |A \times B| \cdot 2^{-|I|} \cdot (1 \pm \delta ) .
\end{equation}

With these calculations, we may now complete the proof. For the rectangle $X \times Y$,
$(\bm x, \bm y) \sim \unif(X \times Y)$ have $\bm{x}_F, \bm{y}_F$ both $0.9$-spread
and that $I \subseteq F$ always, so
by \Cref{eq:closeness-I-in-S}, we have
\[
    |\supp(\psi_I)|=|\supp(\psi_I)\cap X\times Y|
        \in |X\times Y|\cdot 2^{-|I|}\cdot (1\pm \delta),
\]
using the fact that $\delta < 0.1$.
For the rectangle $X' \times Y'$, the random
variables $\bm{x'}_{F'}$, $\bm{y'}_{F'}$ are both $0.9$-spread for
$(\bm x', \bm y') \sim \unif(X' \times Y')$.
When $I\subseteq F'$, using the same argument, we have
\[
|\supp(\psi_I)\cap X'\times Y'|
        \in |X'\times Y'|\cdot 2^{-|I|}\cdot (1\pm \delta).
\]
Therefore, the left-hand side of~\Cref{eq:conditioned-closeness-target} can be bounded as
\[
\mathbb{E}_{\bm{I}\sim \binom{F}{\sqrt{n}}}[\psi_{\bm I}(X'\times Y')\mid \bm{I}\subseteq F']=\Exp_{\bm{I}\sim \binom{F}{\sqrt{n}}}\left[\frac{|\supp(\psi_{\bm I})\cap X'\times Y'|}{|\supp(\psi_{\bm I})|}\mid \bm{I}\subseteq F'\right]\in \frac{|X'\times Y'|}{|X\times Y|}\cdot (1\pm 2\delta),
\]
as desired. To prove~\Cref{eq:closeness-target}, we can similarly express the left-hand side as
\begin{equation}\label{eq:expt_psi}
\begin{aligned}
\Exp_{\bm{I}\sim \binom{F}{\sqrt{n}}}[\psi_I(X'\times Y')]&=\Exp_{\bm{I}\sim \binom{F}{\sqrt{n}}}\left[\frac{|\supp(\psi_{\bm I})\cap X'\times Y'|}{|\supp(\psi_{\bm I})|}\right]\\
&\in \Exp_{\bm{I} \sim \binom{F}{\sqrt{n}}}[|\supp(\psi_{\bm{I}})\cap X'\times Y'|]\cdot \frac{2^{\sqrt{n}}}{|X \times Y|}\cdot (1\pm \delta).
\end{aligned}
\end{equation}

In this case, we do not have $\bm I \subseteq F'$
always, but we do have it with large enough probability, so we can write the following lower bound
\[
    \Exp_{\bm{I} \sim \binom{F}{\sqrt{n}}}[|\supp(\psi_{\bm{I}})\cap X'\times Y'|]
    \geq \Pr[ \bm I \subseteq F' ] \cdot \frac{|X' \times Y'|}{2^{\sqrt{n}}} \cdot (1-\delta)
    \geq (1-1/100) \cdot \frac{|X' \times Y'|}{2^{\sqrt{n}}} \cdot (1 - \delta) ,
\]
where we use the fact that $|\bm I | \cdot |F \setminus F'| \leq n / 1000 \leq
|F|/100$, together with \cref{claim:birthday},
to obtain
\[
    \Pr[ \bm I \subseteq F' ] = \Pr[ \bm I \cap (F \setminus F') = \emptyset ] \geq 1 - \frac{1}{100} .
\]
Using \Cref{eq:closeness-I-notin-S}
and \cref{claim:birthday},
we also get an upper bound
\[
    \Exp_{\bm{I} \sim \binom{F}{\sqrt{n}}}[|\supp(\psi_{\bm{I}})\cap X'\times Y'|]
    \leq \frac{|X' \times Y'|}{2^{\sqrt{n}}} \cdot (1+\delta) \cdot \Exp[ 2^{|\bm I \setminus F'|} ]
    \leq \frac{|X' \times Y'|}{2^{\sqrt{n}}} \cdot (1+\delta) \cdot (1 + \tfrac{1}{20}) .
\]
By combining these bounds with \Cref{eq:expt_psi}, we
establish \Cref{eq:closeness-target}.
\end{proof}

\subsection{Sprinkling and Fixing the 1s}
\label{section:safe-stage-propositions}

Throughout this section, we are in the context of the
\nameref{lemma:safe-stage-lemma}, meaning that we have rectangle $X \times Y$
and free variables $F \subseteq [n]$ satisfying the conditions of the
\nameref{lemma:stage-lemma}; $\Pi$ is the protocol assumed in \cref{assumption},
$\Pi'$ is the transformed protocol from the
\nameref{lemma:protocol-transformation}, and  \cref{eq:safe-or-unsafe} holds.

\subsubsection{Finding a Leaf}

\propsprinkle*
\begin{proof}
We say a leaf $L$ of the protocol $\Pi'$ is \emph{good} if it satisfies the following three conditions:
\begin{enumerate}[noitemsep]
    \item $F$ is a 0-leaf of $\Pi'$ (\ie the protocol outputs 0 if it reaches $L$);
    \item $|F| - |F_L| \leq \sqrt n / 1000$;
    \item $\deficit(L, F_L) \leq \deficit(X \times Y, F) + O(|\Pi|) - \Omega(|F \setminus F_L| m)$,
    from \cref{lemma:protocol-transformation}.
\end{enumerate}
Let $L_1, \dotsc, L_N$ be all the good leaves of $\Pi'$, with associated variables $F_{L_i}$. First, we lower bound the probability to be in a good leaf, and then prove that a typical good leaf does not contain too many errors, which yields the proposition.

Let $\bm L$ be a random leaf of $\Pi'$, chosen as the unique leaf containing
$(\bm x, \bm y) \sim \unif(X \times Y)$. The current stage is safe, meaning that
\cref{eq:safe-or-unsafe} holds, so $\Pr[|F| - |F_{\bm L}| \le \sqrt{n}/1000 \mid
\text{$\bm L$ prints $0$}] \ge 1/2$. By \cref{lemma:protocol-transformation}, 
\[
    \Pr[ \deficit(\bm L, F_{\bm L}) \leq \deficit(X \times Y, F) + O(|\Pi|) - \Omega(|F \setminus F_{\bm L}| m) ]
    \geq 1-4\varepsilon.
\]
Since $\Pi'$ errs with probability at most $\varepsilon$ over the mixture distribution
$\tfrac{1}{2} \left( \unif(X \times Y) + \sigma(X \times Y, F) \right)$, it errs with probability
at most $2\varepsilon$ over
$\unif(X \times Y)$. By assumption, $\Pr_{(\bm x,\bm y) \sim \unif(X \times Y)}[f(\bm x, \bm y) = 0] \ge 1/2$, so by the union bound,
\[
    \Pr_{(\bm x, \bm y) \sim \unif(X \times Y)}[L(\bm x, \bm y) \text{ is good}]
    \geq \Pr[|F| - |F_{L(\bm x, \bm y)}| \le \sqrt{n}/1000 \text{ and } \Pi'(\bm x, \bm y) = 0] - 4\varepsilon
        \ge 1/4 - 5\varepsilon .
\]
We rewrite the inequality as
\begin{equation}
\label{eq:sprinkle-0-mass}
    \sum_{i \in [N]} \Pr_{(\bm x, \bm y) \sim X \times Y}[(\bm x, \bm y) \in L_i] \ge 1/4 - 5\varepsilon.
\end{equation}
For brevity, we write $\sigma \define \sigma(X \times Y, F)$.
By the \nameref{lemma:closeness}, for every $i \in [N]$,
\begin{align}
\label{eq:sprinkle-closeness}
    \Pr_{(\bm u, \bm v) \sim \sigma}[(\bm u, \bm v) \in L_i]
        \in (1 \pm 1/10) \Pr_{(\bm x, \bm y) \sim X \times Y}[(\bm x, \bm y) \in L_i],\\
\label{eq:conditional-sprinkle-closeness}     
\Pr_{(\bm u, \bm v) \sim \sigma}[(\bm u, \bm v) \in L_i\mid \bm{I}\subseteq F_{L_i}]
        \in (1 \pm 1/10) \Pr_{(\bm x, \bm y) \sim X \times Y}[(\bm x, \bm y) \in L_i].
\end{align}
Let $p_L \define  \Pr_{(\bm u, \bm v) \sim \sigma}[f(\bm u, \bm v) = 1 \mid (\bm u, \bm v) \in L]$.
Since $\Pi'$ has error $\varepsilon$ over the mixture, it has error
$2\varepsilon$ over $\sigma$, so
\begin{align*}
    2\varepsilon
    &\geq \Pr_{(\bm u, \bm v) \sim \sigma}[ f(\bm u, \bm v) = 1 \wedge \Pi'(\bm u, \bm v) = 0 ] \\
    &\geq \sum_{L \text{ 0-leaf of } \Pi'} p_L \Pr_{(\bm u, \bm v) \sim \sigma}[ (\bm u, \bm v) \in L ] \\
    &\geq \sum_{i \in [N]} p_{L_i} \Pr_{(\bm u, \bm v) \sim \sigma}[ (\bm u, \bm v) \in L_i ] \\
    &\geq 0.9 \sum_{i \in [N]} p_{L_i} \Pr_{(\bm x, \bm y) \sim \unif(X \times Y)}[ (\bm x, \bm y) \in L_i ] \tag{by \cref{eq:sprinkle-closeness}.}
\end{align*}
Thus, for a random $L_{\bm i}$ chosen as the unique leaf containing $(\bm x', \bm y')$
drawn from the uniform distribution on $\bigcup_i L_i$,
\begin{align*}
    \Exp[ p_{L_{\bm i}} ]&=\sum_{i\in [N]} p_{L_i}\Pr_{(\bm{x},\bm{y})\sim \unif(X\times Y)}[(\bm x,\bm y)\in L_i\mid (\bm x,\bm y)\in \bigcup_i L_i]\\
    &\leq 2\varepsilon \cdot \frac{10}{9} \cdot \Pr[ (\bm x, \bm y) \in \bigcup_i L_i ]^{-1}\\
        &\leq 2\varepsilon \cdot \frac{10}{9} \cdot \frac{1}{1/4-5\varepsilon}\tag{by \cref{eq:sprinkle-0-mass}}\\
        &\leq 10\varepsilon ,
\end{align*}
for sufficiently small $\varepsilon > 0$. Therefore, there exists a good leaf $L_i$
with $p_{L_i} \leq 10\varepsilon$. For this leaf,
\[
    \deficit(L_i, F_{L_i})
        \leq \deficit(X \times Y, F) + O(|\Pi|) - \Omega(|F|-|F_{L_i}|)m
        \leq \deficit(X \times Y, F) + O(|\Pi|),
\]
as desired.

\end{proof}

\subsubsection{Fixing the Gadgets}

\propfixones*
\begin{proof}
Let $\sigma \define \sigma(X \times Y, F)$. 
By \cref{prop:safe-stage-leaf} there exists a leaf $L = X^{(L)} \times Y^{(L)}$
of $\Pi'$ satisfying $|F| - |F_L| \le \sqrt{n}/1000$ and
\[
    \Pr_{(\bm u, \bm v) \sim \sigma}[f(\bm u, \bm v) = 1 \mid (\bm u, \bm v) \in L] \le 1/5 .
\]

Recall from \cref{def:sprinkled-1s} that $(\bm u, \bm v) \sim \sigma$ is
obtained by choosing $\bm I \sim \binom{F}{\sqrt{n}}$ and taking
$(\bm u, \bm v)$ uniform over the set $\{ (u, v) \in X \times Y \mid \IP_m(u_i, v_i) = 1, \; \forall i \in \bm I \}$.
Observe that
\begin{align*}
&\Pr_{(\bm u,\bm v)\sim \sigma}[f(\bm u,\bm v)=1\mid (\bm u,\bm v)\in L \wedge \bm I\subseteq F_L]\cdot \Pr_{(\bm u,\bm v)\sim \sigma}[\bm{I}\subseteq F_L\mid (\bm u,\bm v)\in L]\\
=&\Pr_{(\bm u,\bm v)\sim \sigma}[f(\bm u,\bm v)=1\wedge \bm I\subseteq F_L\mid (\bm u,\bm v)\in L ]\\
\le& \Pr_{(\bm u, \bm v) \sim \sigma}[f(\bm u, \bm v) = 1 \mid (\bm u, \bm v) \in L]\\
\le& 1/5.
\end{align*}
Moreover, by Bayes' rule, we have
\begin{align*}
\Pr_{(\bm u,\bm v)\sim \sigma}[\bm{I}\subseteq F_L\mid (\bm u,\bm v)\in L]&=\Pr[\bm{I}\subseteq F_L]\cdot \frac{\Pr_{(\bm u,\bm v)\sim \sigma}[(\bm u,\bm v)\in L\mid \bm{I}\subseteq F_L]}{\Pr_{(\bm u,\bm v)\sim \sigma}[(\bm u,\bm v)\in L]}\\
&\ge 0.99\cdot \frac{\Pr_{(\bm u,\bm v)\sim \sigma}[(\bm u,\bm v)\in L\mid \bm{I}\subseteq F_L]}{\Pr_{(\bm u,\bm v)\sim \sigma}[(\bm u,\bm v)\in L]}\tag{by \cref{claim:birthday}}\\
&\ge 0.99\cdot \frac{0.9\Pr_{(\bm x,\bm y)\sim X\times Y}[(\bm x,\bm y)\in L]}{1.1\Pr_{(\bm x,\bm y)\sim X\times Y}[(\bm x,\bm y)\in L]}\tag{by \cref{eq:sprinkle-closeness,eq:conditional-sprinkle-closeness}}\\
&>0.8.
\end{align*}
If follows that
\[
\Pr_{(\bm u,\bm v)\sim \sigma}[f(\bm u,\bm v)=1\mid (\bm u,\bm v)\in L \wedge \bm I\subseteq F_L]\le (1/5)/0.8=1/4.
\]
So there exists a choice $\bm I = R \in \binom{F_L}{\sqrt n}$ such that
\[
    \Pr[ f(\bm u, \bm v) = 1 \mid (\bm u, \bm v) \in L \wedge \bm I = R ] \leq 1/4 .
\]
Let $(\bm x, \bm y) \sim \unif(L)$ and let 
$(\bm u', \bm v')$ be distributed identically to $(\bm u, \bm v \mid (\bm u,
\bm v) \in L \wedge \bm I = R )$, which is equivalent to $(\bm u', \bm v')$ distributed
as $(\bm x, \bm y \mid \IP_m(x_i, y_i) = 1, \; \forall i \in R )$. In calculations below, we write $F \define F_L$ for brevity. We first want to show
\begin{equation}
\label{eq:fix-gadget-minentropy}
    \H_\infty(\bm u'_F, \bm v'_F) \geq \H_\infty(\bm x_F, \bm y_F) - |R| - 1 .
\end{equation}
From \cref{lemma:ip-gadget-pseudorandom}, 
\[
    \Pr[ \forall i \in R :\; \IP_m(\bm x_i, \bm y_i) = 1 ] \in 2^{-|R|} (1 \pm 2^{-m/20}) ,
\]
so
\[
    \max_{(u,v) \in \Sigma^F \times \Sigma^F} \Pr[ (\bm u'_F, \bm v'_F) = (u,v) ]
        \leq \max_{(x,y) \in \Sigma^F \times \Sigma^F} \Pr[ (\bm x_F, \bm y_F) = (x,y) ] \cdot 2^{|R|}(1 - 2^{-m/20})^{-1} ,
\]
which establishes \cref{eq:fix-gadget-minentropy}. The set $\{ x,y \mid \IP_m(x_i, y_i) = 1, \forall i \in R\}$ is not
a rectangle; to obtain a rectangle we will find variable settings $x^0_R, y^0_R \in \Sigma^R$ so that the
rectangle $X' \times Y' \define \{ (x,y) \in L \mid (x_R, y_R) = (x^0_R, y^0_R) \}$ satisfies the desired condition.
Let $(\bm x^0_R, \bm y^0_R)$ be distributed identically to $(\bm u'_R, \bm v'_R)$. Then
by the chain rule for min-entropy (\cref{lem:minent-chainrule}), for all $\delta \in (0,1)$,
\[
    \Pr_{(\bm x^0_R, \bm y^0_R)}\Big[ \H_\infty\big(\bm u'_{F \setminus R}, \bm v'_{F \setminus R}
        \mid (\bm u'_R, \bm v'_R) = (\bm{x}^0_R, \bm{y}^0_R) \big) \geq \H_\infty(\bm u'_F, \bm v'_F) - 2|R|m - \log(1/\delta)\Big]
        \geq 1-\delta.
\]
The last step is to ensure that the rectangle will be mostly 0-valued. Observe that
\[
    \Exp_{\bm x^0_R, \bm y^0_R}\left[ \Pr_{\bm u', \bm v'}[ f(\bm u', \bm v') = 1 \mid (\bm u'_R, \bm v'_R) = (\bm x^0_R, \bm y^0_R) ] \right]
    = \Pr[ f(\bm u', \bm v') = 1 ] \leq 1/4 ,
\]
so by Markov's inequality we have
\[
    \Pr_{\bm x^0_R, \bm y^0_R}\left[ \Pr[ f(\bm u', \bm v') = 1 \mid (\bm u'_R, \bm v'_R) = (\bm x^0_R, \bm y^0_R) ]
        > 0.4 \right] \leq \frac{1}{4} \cdot \frac{10}{4} = 5/8.
\]
By the union bound (using $\delta = 1/4$), there exist $x^0_R, y^0_R$ such that
$\Pr[ f(\bm u', \bm v') = 0 \mid (\bm u'_R, \bm v'_R) = (x^0_R, y^0_R) ] \geq 6/10$ and
\[
    \H_\infty\left(\bm x'_{F \setminus R}, \bm v'_{F \setminus R} \mid (\bm u'_R, \bm v'_R) = (x^0_R, y^0_R) \right)
    \geq \H_\infty(\bm u'_F, \bm v'_F) - 2|R|m - 2
    \geq \H_\infty(\bm x_F, \bm y_F) - 2|R|m -|R| - 3 .
\]
Therefore the rectangle $X' \times Y'$ for $X' \define \{ x \in X^{(L)} \mid x_R = x^0_R \}$
and $Y' \define \{ y \in Y^{(L)} \mid y_R = y^0_R \}$ satisfies
\begin{align*}
    \deficit(X' \times Y', F \setminus R)
    &= 2(|F|-|R|)m
        - \H_\infty(\bm u'_{F \setminus R} \mid \bm u'_R = x^0_R)
        - \H_\infty(\bm v'_{F \setminus R} \mid \bm v'_R = y^0_R) \\
    &\leq 2(|F|-|R|)m
        - \H_\infty(\bm x_F, \bm y_F) + 2|R|m + |R| + 3 \\
    &\leq 2|F|m
        - \H_\infty(\bm x_F, \bm y_F) + 2|R| \\
    &= \deficit(L, F) + 2|R| . \qedhere
\end{align*}
\ignore{
Let $\bm x', \bm y' \sim (\bm x, \bm y \mid (\bm x, \bm y) \in L, \bm{I} = J)$, then by the definition of $\sigma$ we can view the distribution of $\bm x', \bm y'$ as uniform over $\{x,y \in L \mid \IP^J_m(x_J, y_J) = 1_J\}$. Applying \cref{lemma:ip-gadget-pseudorandom} we get $\minentropy(\bm x', \bm y') \ge \minentropy(L) - |J| - 1$. We need \cref{lem:minent-chainrule}.

\nathan{TODO: I moved the chain rule, gotta edit this to reference it properly.}
\artur{The proof doesn't correspond to the statement now.}

Hence by \cref{lem:minent-chainrule} we get that with probability at least $5/6$ over $\bm x'', \bm y'' \sim \bm x', \bm y'$ we get 
\[ \minentropy(\bm x'_{F \setminus J}, \bm y'_{F \setminus J} \mid \bm x_J = \bm x''_J, \bm y_J = \bm y''_J) \ge \minentropy(L) - |J| - 1 - |J|m - \log_2 6. \]
On the other hand 
\[1/5 \ge \Pr[f(\bm x', \bm y') = 1] = \Exp_{\bm x'', \bm y''} \Pr[f(\bm x', \bm y') = 1 \mid \bm x'_J = \bm x''_J, \bm y'_J = \bm y''_J].\]
By Markov's inequality we have $\Pr[f(\bm x', \bm y') = 0 \mid \bm x'_J = \bm x''_J, \bm y'_J = \bm y''_J] \ge 1-1/5-1/6 > 0.6$ with probability at least $3/5$. Hence, there is a setting of $x^0, y^0$ such that $X' \coloneqq \{x \in X^{(L)} \mid x_J = x^0_J\}$ and $Y' \coloneqq \{y \in Y^{(L)} \mid y_J = y^0_J\}$ satisfy the required properties. 
}
\end{proof}

\subsubsection{Restoring Spreadness}

\propspreadify*
\begin{proof}
For parameter $\varepsilon > 0$ to be fixed later,
we apply the density restoring partition for the rectangle $A \times B$ (\cref{lemma:density-restoration-rectangle}) to obtain partitions $A = \bigsqcup_i A_i$ and $B = \bigsqcup_j B_j$
with each rectangle $A_i \times B_j$ associated with a set $H_{i,j} \subseteq H$
of free variables. For each $i,j$, the random variables $\bm x_{H_{i,j}}$ and
$\bm y_{H_{i,j}}$ are $0.9$-spread, where $(\bm x, \bm y) \sim \unif(A_i \times
B_j)$. With probability at least $1-2\varepsilon$ over $(\bm x, \bm y) \sim \unif(A \times B)$,
the unique rectangle $A_{\bm i} \times B_{\bm j}$ containing $(\bm x, \bm y)$
satisfies
\begin{equation}
\label{eq:spreadify-deficit}
    \deficit(A_{\bm i} \times B_{\bm j}, H_{\bm{i,j}})
    \leq \deficit(A \times B, H) - 0.1 |H \setminus H_{\bm{i,j}}|m + 2\log(1/\varepsilon) .
\end{equation}
For each $i,j$, define
\[
    p_{i,j} \define \Pr_{(\bm a', \bm b') \sim \unif(A_i \times B_j)}[ f(\bm a', \bm b') = 1 ] .
\]
Let $(\bm i, \bm j)$ the random variable chosen as the unique values such that
$A_{\bm i} \times B_{\bm j}$ contains $(\bm a, \bm b) \sim \unif(A \times B)$.
By assumption, 
\[
    \Exp[ p_{\bm{i,j}} ] = \sum_{i,j} \Pr[ (\bm x, \bm y) \in A_i \times B_j ]
        \cdot p_{i,j} = \Pr_{(\bm a, \bm b) \sim \unif(A \times B)}[ f(\bm a, \bm b) = 1 ]
        \leq 4/10 ,
\]
so by Markov's inequality, $\Pr[ p_{\bm{i,j}} > 1/2 ] \leq 8/10$. Setting $\varepsilon = 1/20$,
the probability that $p_{\bm i, \bm j} > 1/2$ or that \cref{eq:spreadify-deficit} fails for $i,j$
is at most $8/10 + 1/10 = 9/10$, so there exists rectangle $A' \times B' = A_i \times B_j$
with free variables $H' \define H_{i,j}$ satisfying the desired conditions.
\end{proof}

\appendix
\section{Appendix: Counting Arguments and Variations of FBPP}
\label{sec:easy-separation}

\subsection{Variations of FBPP}

As mentioned in the introduction, there is a variety of natural ways to define
a class \FBPP{} of ``\BPP-search problems'', \ie search problems with efficient
randomized protocols. Here are four definitions of increasing restrictiveness:
\begin{enumerate}[label={(\roman*)}]
    \item Direct translation of the definition of $\BPP$ for decision problems. A sequence of relations $R_n \subseteq \zo^n \times \zo^n \times \zo^*$
    is in $\FBPP$ if and only if $\forall n$ there exists a randomized protocol $\Pi$
    with cost $\poly\log n$ such that $\forall x,y \in \zo^n$, $\Pr[ (x,y,\Pi(x,y)) \in R_n ] \geq 2/3$.

    The issue with this definition is that, unlike for $\BPP$, the constant
    $2/3$ is no longer arbitrary: choosing a different constant will change the
    class, because for search problems, the error cannot generally be boosted
    (see examples below). \label{item-one}
    \item Fix the issue by demanding that there is an efficient protocol achieving any error $\epsilon > 0$. The sequence $R_n \subseteq \zo^n \times \zo^n \times \zo^*$
    is in $\FBPP$ if and only if $\forall n, \forall \epsilon > 0$ there exists a
    randomized protocol with cost $\poly(\log n, 1/\epsilon)$ such that $\forall x,y, \Pr[ (x, y, \Pi(x,y)) \in R_n ] \geq 1-\epsilon$. This is the approach taken in \cite{Aaronson10,Aaronson24} for Turing machines. \label{item-two}
    \item Demand that the cost of the protocol is $\poly(\log n, \log(1/\epsilon))$ instead of $\poly(\log n, 1/\epsilon)$, so that dependence on $\epsilon$ is the same as it is for decision problems. \label{item-three}
    \item Require that solutions are not only efficient to find, but also efficient to verify. This is the approach taken by \cite{Goldreich11},
    and this is also analogous to the definition of efficient deterministic search
    problems\footnote{We remark that, confusingly, the name $\mathsf{FP}$ is used for two different classes of problems: relations where a solution can be found efficiently, and relations where a solution can be found \emph{and} verified efficiently.} in the context of $\mathsf{TFNP}$. According to this definition, 
    $R_n$ is in $\FBPP$ if and only if both the following conditions hold:
    \begin{itemize}
        \item The condition from definition \ref{item-one} above, where the relation can be computed with cost $\poly \log n$ and success probability is $2/3$; and,
        \item For every $n$ there exists a randomized \emph{verifier} protocol $V$
        which outputs a value in $\zo$ such that, for all $(x,y,z) \in \zo^n \times \zo^n \times \zo^*$, when
        Alice is given $(x,z)$ and Bob is given $(y,z)$, they can run protocol $V$
        which satisfies $\Pr[ V(x,y,z) = 1_{(x,y,z) \in R_n} ] \geq 2/3$.
    \end{itemize}
    \label{item-four}
\end{enumerate}
Note that each of these definitions is more strict than the previous one: If a
relation satisfies definition \ref{item-two} then it also satisfies definition \ref{item-one}, etc.

\paragraph*{Pseudodeterminism.}
It is most natural to compare pseudodeterminism with definition \ref{item-three}. This is
because, if a problem admits an efficient $\poly\log n$ pseudodeterministic
protocol, it also satisfies definition \ref{item-three}. With a pseudodeterministic protocol,
we can take a majority vote on the output of $O(\log(1/\epsilon))$ independent
runs; with probability $1-\epsilon$, the canonical output for given input
$(x,y)$ will be the majority of outputs.

\paragraph*{Partial boolean functions.}
Each of the above definitions is \emph{more strict} than the previous one, but
this is only true for problems with large outputs. When we consider partial
boolean functions, \ie problems where the valid outputs are single bits (the
main topic of this paper), these 4 definitions collapse into only 2, because the
first 3 definitions are equivalent. Given a protocol with the guarantees in
definition~\ref{item-one}, we can perform majority-vote error boosting. If there is
only 1 valid output for given $(x,y)$ then the majority vote will take this
value; if there are 2 valid outputs for given $(x,y)$ then the output is
valid with probability 1.

If we then take definition \ref{item-three} as the definition of $\FBPP$ for partial
functions,  our \cref{thm:main} proves $\FBPP \subsetneq \FPs$
for partial functions.

Furthermore, when the number of valid outputs is small, definition \ref{item-four} is a
special case of efficient pseudodeterminism. This is because, for relations $R_n
\subseteq \zo^n \times \zo^n \times \zo^*$ with only a small $\leq \poly\log(n)$
number of possible outputs, the verifier can be used to create a
pseudodeterministic protocol, by iterating over every possible output in a fixed
order:

\begin{claim}
Let $R_n \subseteq \zo^n \times \zo^n \times \zo^*$ be a relation that satisfies
definition \ref{item-four} above, and also $|\{ z \in \zo^* \mid \exists x,y : (x,y,z) \in R_n \}|
\leq \poly\log n$. Then there is a pseudodeterministic protocol for $R_n$ with
cost $\poly\log n$.
\end{claim}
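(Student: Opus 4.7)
The plan is to exploit the enumeration hinted at in the claim. Fix once and for all a canonical ordering $z_1, z_2, \ldots, z_L$ of the set $Z \define \{z \in \zo^* \mid \exists x,y : (x,y,z) \in R_n\}$, which by hypothesis has size $L \le \poly\log n$; both players know this ordering in advance. On input $(x,y)$, the pseudodeterministic protocol iterates $i = 1, 2, \ldots, L$ in order, has the players append $z_i$ to their inputs and run the verifier $V$ from definition \ref{item-four} on $(x,y,z_i)$, and outputs the first $z_i$ for which $V$ accepts.

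The point of iterating in a fixed order is that, whenever \emph{all} $L$ verifier calls return their correct values, the output is the unique smallest-indexed $z_i \in Z$ with $(x,y,z_i) \in R_n$. This value depends only on $(x,y)$, so independent runs produce the same output, which is the definition of pseudodeterminism. Such a canonical $z_i$ exists because the solver required by definition \ref{item-four} (inherited from definition \ref{item-one}) outputs some valid $z$ with probability $\ge 2/3$, which is only possible if at least one valid $z$ exists for every $(x,y)$.

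To make all $L$ verifier calls simultaneously correct with probability $\ge 2/3$, the plan is to first amplify $V$ down to per-call error $1/(3L)$ by taking the majority vote of $O(\log L)$ independent invocations; since $V$ is a decision-problem verifier (outputs a single bit), this standard boosting works without issue. Each boosted call costs $\poly\log n \cdot O(\log L) = \poly\log n$ bits, and we perform $L \le \poly\log n$ of them, for a total cost of $\poly\log n$. A union bound over the $L$ calls then yields overall failure probability at most $1/3$.

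There is no serious technical obstacle here; the main subtlety is that the bound $|Z| \le \poly\log n$ is used twice, once to keep the enumeration short and once to keep the amplification overhead $\log L$ polylogarithmic, so relaxing it to $\poly(n)$ would break both the iteration and the union bound. The argument also highlights why definition \ref{item-four} is genuinely stronger than pseudodeterminism only when the output set is large: with few outputs, the verifier itself can be used to \emph{canonicalize} the solution, collapsing the two notions.
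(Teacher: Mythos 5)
Your proof is correct and follows exactly the route the paper intends: the paper gives no formal proof of this claim, only the one-line sketch ``the verifier can be used to create a pseudodeterministic protocol, by iterating over every possible output in a fixed order,'' and your write-up supplies precisely the missing details (amplifying $V$ to per-call error $1/(3L)$, union-bounding over the $L \le \poly\log n$ calls, and noting that totality of the solver guarantees a first accepted $z_i$ exists).
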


\subsection{Separations via counting}
\label{section:counting}

Let us explain how to prove \cref{eq:separation} for definitions \ref{item-one}--\ref{item-three} of
$\FBPP$ given above, using a counting argument that works equally well for boolean circuits
and other models of computation. An initial sketch of the argument for definition~\ref{item-one}
is as follows: 
\begin{quote}
Define a relation $S\subseteq\{0,1\}^n\times\{0,1\}^n\times[3]$ by choosing $S(x,y)\coloneqq\{i:(x,y,i)\in S\}$ as a random subset of~$[3]$ of size $2$ (i.e., we forbid one random output element). On any input, outputting a random number in~$[3]$ will solve $S$ with probability~$2/3$. A simple counting argument shows that the pseudodeterministic complexity is $\Omega(n)$ with high probability over the choice of $S$.
\end{quote}
For definition \ref{item-one}, we can make this argument explicit (see \cref{sec:explicit}),
\ie we exhibit a specific relation satisfying definition \ref{item-one} but which does not
have an efficient pseudodeterministic protocol. But let us now give the full
counting argument for definition \ref{item-three}.

\begin{proposition}
    There exists a sequence $R_n \subseteq \zo^n \times \zo^n \times \zo^*$ of relations such
    that for every $\epsilon > 0$, $\RandCC_{\epsilon}(R_n) = O(\log(1/\epsilon))$,
    but which has pseudodeterministic cost $\Omega(n)$.
\end{proposition}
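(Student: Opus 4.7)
The plan is to take $R_n$ to be a random ``forbidden-value'' relation with output alphabet of intermediate size. Concretely, set $K\define 2^{\lceil n/10\rceil}$, draw $\phi\colon\zo^n\times\zo^n\to[K]$ uniformly at random, and put
\[
    R_n \define \{(x,y,z)\in\zo^n\times\zo^n\times[K] : z\neq\phi(x,y)\},
\]
identifying $[K]$ with its $\lceil n/10\rceil$-bit binary encoding so that $R_n\subseteq\zo^n\times\zo^n\times\zo^*$. The randomized upper bound $\RandCC_\varepsilon(R_n)=O(\log(1/\varepsilon))$ holds for \emph{every} $\phi$, by a two-case protocol. If $\varepsilon\ge 1/K$, both players use public randomness to output the same uniform $\bm z\in[K]$, which errs with probability exactly $1/K\le\varepsilon$ at communication cost $0$. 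If $\varepsilon<1/K$, then $\log(1/\varepsilon)>n/10$, so Alice simply sends $x$ to Bob (cost $n$) and Bob outputs any element of $[K]\setminus\{\phi(x,y)\}$, achieving zero error at cost $n\le 10\log(1/\varepsilon)=O(\log(1/\varepsilon))$.

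For the pseudodeterministic lower bound I would use the probabilistic method: show that for random $\phi$, no protocol of cost $c=\Omega(n)$ can succeed with positive probability. Fix any pseudodeterministic protocol $\Pi$ of cost $c$ \emph{in advance of} revealing $\phi$, and bound the number of distinct canonical functions $g_\Pi\colon\zo^n\times\zo^n\to[K]$ that can arise as $\Pi$ varies. Applying Newman's theorem to the boolean event ``$\Pi(x,y)=g_\Pi(x,y)$'', we may assume $\Pi$ uses $O(\log n)$ bits of public randomness and has cost $c'\define c+O(\log n)$, while keeping the canonical probability above $1/2$ (so that $g_\Pi$ is still uniquely determined). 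Each deterministic cost-$c'$ protocol outputting in $[K]$ is described by its protocol tree---at most $2^{c'}$ internal nodes, each specifying a speaker and a function $\zo^n\to\zo$, and at most $2^{c'}$ leaves each labeled in $[K]$---giving at most $2^{2^{c'+n+O(1)}}$ choices. Combining over the $\poly(n)$ fixings of the public randomness bounds the number of realizable canonical functions $g_\Pi$ by $2^{2^{c+n+O(\log n)}}$.

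The union bound then finishes the argument: by independence of $\phi(x,y)$ across inputs, every \emph{fixed} $g\colon\zo^n\times\zo^n\to[K]$ satisfies
\[
    \Pr_\phi\!\left[g(x,y)\neq\phi(x,y)\ \forall\,(x,y)\right]
        = (1-1/K)^{2^{2n}} \le \exp\!\big({-}2^{2n-n/10}\big),
\]
so the probability that \emph{some} cost-$c$ pseudodeterministic protocol is valid for $R_n$ is at most $2^{2^{c+n+O(\log n)}}\cdot\exp(-2^{2n-n/10})$. This tends to $0$ whenever $c+n+O(\log n)<2n-n/10$, i.e., $c\le (9/10)\,n-O(\log n)$. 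Thus for such $c$ there exists, for every sufficiently large $n$, a specific $\phi$ (hence an $R_n$) admitting no pseudodeterministic protocol of cost $c$, yielding pseudodeterministic cost $\Omega(n)$. The main obstacle is calibrating $K$: it must be small enough that the exponential factor $\exp(-2^{2n}/K)$ in the union bound dominates the double-exponential count of canonical functions (demanding $\log K\ll n$), yet large enough that the trivial ``Alice sends $x$'' protocol only needs to be invoked in the regime $\log(1/\varepsilon)=\Omega(n)$ so the $n$-bit cost fits inside $O(\log(1/\varepsilon))$; the choice $K=2^{n/10}$ leaves comfortable slack on both sides.
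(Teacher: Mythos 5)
Your proposal is correct and follows essentially the same route as the paper: a forbidden-value relation whose randomized complexity is trivially $O(\log(1/\epsilon))$, combined with a counting/probabilistic argument that uses Newman's theorem to bound the number of canonical functions realizable by cost-$c$ pseudodeterministic protocols and compares this against the probability that a fixed function avoids the random forbidden values on all $2^{2n}$ inputs. The only (harmless) difference is in the construction: the paper forbids one output of every length $k\in[n]$ so that the error-$\epsilon$ protocol just picks output length $\lceil\log(1/\epsilon)\rceil$, whereas you use a single output alphabet of size $2^{n/10}$ and handle very small $\epsilon$ with the ``Alice sends $x$'' fallback, which if anything treats the regime $\epsilon<2^{-n/10}$ more carefully than the paper does.
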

\begin{proof}
Fix any $n$. Let $\mathcal R$ be the set of all relations $R$ such that $|R(x,y)\cap\{0,1\}^k|=2^k-1$ for all $(x,y) \in \zo^n \times \zo^n$ and $k\in[n]$. That is, in any relation $R\in\mathcal R$ we have removed exactly one of the $2^k$ valid outputs for every output length $k\in[n]$. Each relation $R \in \mathcal R$ has randomized cost $O(\log(1/\epsilon))$ since for every $\epsilon$ we can take $k =
\lceil\log(1/\epsilon)\rceil$ and output a random value $\bm z \sim \zo^k$; then
$(x,y,\bm z) \in R$ with probability at least $1 - 2^{-k} \geq 1-\epsilon$.

Fix any function $f \colon \zo^n \times \zo^n \to \zo^*$ such that the length of
any output is at most $|f(x,y)| \leq c$ for some $c$. Let $\mathcal R_f
\subseteq \mathcal R$ be the set of relations consistent with $f$ (\ie $f(x,y)
\in R$ for all $x,y$). Then
\[
    \frac{|\mathcal R_f|}{|\mathcal R|} = \prod_{x,y} \left(1 - 2^{-|f(x,y)|}\right)
    \leq \left(1- 2^{-c}\right)^{2^{2n}} \leq e^{-2^{2n-c}} .
\]
Every pseudodeterministic protocol with cost $c$ computes one of these functions $f$.
Using Newman's theorem to bound the number of random bits in any randomized protocol
by $\log n + O(1)$, the number of pseudodeterministic protocols with cost $c$ is at most
\[
    \left(\left(2^{2^n}\right)^{2^c} \cdot (2^{c+1})^{2^c}\right)^{O(n)} .
\]
Then the fraction of relations consistent with any pseudodeterministic protocol is
at most
\[
    2^{O(n (2^{n+c} + c2^c)) - \Omega(2^{2n-c})} < 1
\]
when $c = o(n)$.
\end{proof}

\subsection{Explicit Separation for Definition \ref{item-one}}
\label{sec:explicit}

In this section, we provide an explicit relation satisfying definition \ref{item-one} but does not admit an efficient pseudodeterministic protocol.
In fact, we separate randomized communication from the stronger quantum pseudoterministic model.

\begin{theorem}
There exists an explicit two-party search problem $f\subseteq
\{0,1\}^n\times \{0,1\}^n\times \{0,1\}^2$ with randomized
complexity $O(1)$ but quantum pseudodeterministic complexity $\Omega(n)$.
\end{theorem}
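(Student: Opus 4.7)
The plan is to take the explicit function $F\colon\zo^n\times\zo^n\to\zo^2$ defined by
\[
    F(x,y) \coloneqq \bigl(\IP_{n/2}(x_L, y_L),\ \IP_{n/2}(x_R, y_R)\bigr),
\]
where $x = (x_L, x_R)$ and $y = (y_L, y_R)$ split each input into two halves, and to let $f$ be the search relation with $(x,y,z)\in f$ iff $z\neq F(x,y)$. The randomized upper bound is then immediate: Alice outputs a uniformly random $\bm z\sim\unif(\zo^2)$ using private coins; since exactly one value is forbidden, $\Pr[\bm z\neq F(x,y)]=3/4$, giving $\RandCC(f)=0$.

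For the pseudodeterministic lower bound, the first step is to extract, from any pseudodeterministic protocol $\Pi$ for $f$ of cost $c$, its canonical function $g\colon\zo^n\times\zo^n\to\zo^2$, which by definition satisfies $g(x,y)\neq F(x,y)$ for every input and $\RandCC(g)\leq c$. Majority-vote amplification over $O(1)$ independent runs of $\Pi$ (output the most common element of $\zo^2$) boosts the error on $g$ to some constant $\epsilon<1/10$ at cost $O(c)$. It therefore suffices to prove that every function $g\colon\zo^n\times\zo^n\to\zo^2$ with $g(x,y)\neq F(x,y)$ pointwise has $\RandCC(g)=\Omega(n)$. To that end, I would apply Yao's principle with the uniform input distribution to obtain a deterministic protocol $\Pi_0$ of cost $O(c)$ agreeing with $g$ on at least a $(1-\epsilon)$-fraction of inputs; this $\Pi_0$ partitions $\zo^n\times\zo^n$ into at most $2^{O(c)}$ monochromatic rectangles.

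The heart of the argument is a discrepancy lemma: on every rectangle $A\times B$ with $|A|\cdot|B|\gg 2^{3n/2}$, the distribution of $F(\bm x,\bm y)$ for $(\bm x,\bm y)\sim\unif(A\times B)$ is $o(1)$-close to uniform on $\zo^2$. Granting this, rectangles that are ``too small'' cover at most a $2^{c-n/2+o(1)}$ fraction of inputs, which is negligible when $c=o(n)$; hence at least a $(1/4-o(1))$-fraction of inputs have $\Pi_0(x,y)=F(x,y)$. Combined with $\Pr[g=\Pi_0]\geq 9/10$, some input $(x,y)$ must then satisfy $g(x,y)=\Pi_0(x,y)=F(x,y)$, contradicting $g\neq F$ everywhere, and forcing $c=\Omega(n)$.

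The main obstacle is the discrepancy lemma itself. The $(1,1)$-character evaluates to $(-1)^{\IP_n(x,y)}$ and is controlled by the standard Hadamard spectral bound, giving discrepancy $2^{n/2}/\sqrt{|A|\cdot|B|}$. The marginal characters $(1,0)$ and $(0,1)$ are the trickier ones because a rectangle $A\times B\subseteq\zo^n\times\zo^n$ need not factor through the left/right coordinate split, so the character sum must be analyzed through the multiplicities of the projected rows and columns. A Cauchy--Schwarz bookkeeping argument, combined with the spectral norm $2^{n/4}$ of the $2^{n/2}\times 2^{n/2}$ Hadamard matrix, yields discrepancy $2^{3n/4}/\sqrt{|A|\cdot|B|}$ for these marginal characters, and it is precisely the resulting $|A|\cdot|B|\gg 2^{3n/2}$ threshold that determines the $\Omega(n)$ rate in the final bound.
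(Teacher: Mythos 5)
Your construction of $f$ is identical to the paper's (forbid the single output $(\IP(x^1,y^1),\IP(x^2,y^2))$), and the upper bound is the same, but your lower-bound argument takes a genuinely different route. The paper reduces to the known distributional lower bound $\DCC_{0.4}(\IP_{n/2},\unif)=\Omega(n)$: from a deterministic protocol with error $0.1$ for $f$ it builds a protocol for $\IP_{n/2}$ by padding the input with shared random halves and doing a case analysis on which of the error patterns $p_{00},p_{01},p_{10}$ dominates (using either the first coordinate alone or the XOR of both coordinates corrected by the locally computable inner product of the padded halves). You instead argue directly about the canonical function $g$ with $g\neq F$ pointwise: Yao plus a discrepancy lemma showing that on any rectangle with $|A|\cdot|B|\gg 2^{3n/2}$ the pair $(\IP(x_L,y_L),\IP(x_R,y_R))$ is nearly equidistributed on $\zo^2$, so a cheap deterministic protocol must agree with $F$ on a $(1/4-o(1))$-fraction of inputs, contradicting $g\neq F$ everywhere. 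Your discrepancy bounds check out: the $(1,1)$-character is the full $\IP_n$ with discrepancy $\sqrt{2^n/(|A||B|)}$, and for the marginal characters the estimate $\|a\|_2^2\le(\max_u a(u))\sum_u a(u)\le 2^{n/2}|A|$ combined with the $2^{n/4}$ spectral norm of the half-size Hadamard matrix indeed gives $2^{3n/4}/\sqrt{|A||B|}$, whence the $2^{3n/2}$ threshold and the $\Omega(n)$ rate. What each approach buys: the paper's reduction uses the IP lower bound as a black box and avoids any Fourier bookkeeping, at the cost of the somewhat fiddly case analysis; your argument is self-contained modulo the Hadamard spectral bound, avoids the case split, and makes the quantitative source of the $\Omega(n)$ bound (the marginal-character discrepancy) transparent. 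Both ultimately rest on the same spectral fact about $\IP$. One presentational point: you should state explicitly, as the paper does implicitly via error reduction, that plurality amplification of the pseudodeterministic protocol is what lets you assume error below $1/10$ for the fixed canonical function $g$ at only constant-factor cost.
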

Before presenting the proof, we remark that in concurrent work, Aaronson, Gur, and Li~\cite{AGL26} showed an analogous separation for query complexity. Indeed, the two-party search problem we exhibit here is a composition of their search problem with the two-bit AND gadget.

    Let $\RandCC_\epsilon$ denote the $\epsilon$-error classical communication complexity,  $\QuantCC_\epsilon$ the $\epsilon$-error quantum communication complexity, and $\QuantPsCC_\epsilon$ denote the $\epsilon$-error quantum pseudodeterministic communication complexity.
    We will omit $\epsilon$ for the above-mentioned measures when $\epsilon=1/3$.

    We first claim that it suffices to find an $f$ and two different constant error parameters $0<\epsilon_2<\epsilon_1<\frac{1}{3}$, such that $\RandCC_{\epsilon_1}(f)=O(1)$ while $\QuantCC_{\epsilon_2}(f)=\Omega(n)$.
    Indeed, if we can find such an $f$, we have $\RandCC(f)\le \RandCC_{\epsilon_1}(f)=O(1)$, and on the other hand,
    \[
    \QuantPsCC(f)\ge \Omega(\QuantPsCC_{\epsilon_2}(f)/\log (1/\epsilon_2))\ge \Omega(\QuantCC_{\epsilon_2}(f)/\log (1/\epsilon_2))=\Omega(n),
    \]
    where we used the fact that quantum pseudodeterministic complexity admits error reduction.
    
    We now construct $f$ as follows:
    Let $n=2m$ without loss of generality.
    Given $x,y\in \{0,1\}^n$ as the input for Alice and Bob respectively, we interpret $x=(x^1,x^2)$ as the concatenation of two $m$-bit strings $x^1,x^2\in \{0,1\}^m$, and similarly for $y=(y^1,y^2)$.
    Then we define $f(x,y)\coloneqq \{0,1\}^2\setminus \{(\IP(x^1,y^1),\IP(x^2,y^2))\}$ as the set of all pairs of bits excluding $(\IP(x^1,y^1),\IP(x^2,y^2))$.
    We will chose $\epsilon_1=1/4,\epsilon_2=0.1$ and prove that $\RandCC_{1/4}(f)=O(1)$, while $\QuantCC_{0.1}(f)=\Omega(n)$.

    We first show $\RandCC_{1/4}(f)=O(1)$: Observe that $|f(x,y)|=3$ for all $x,y\in \{0,1\}^n$,
    the simple protocol that outputs a uniform random pair $(\bm{u},\bm{v})\sim \{0,1\}^2$ succeeds with probability $3/4$.
    
    It remains to show $\QuantCC_{0.1}(f)=\Omega(n)$.
    For the sake of contradiction, suppose that $\QuantCC_{0.1}(f)=o(n)$.
    Then there exists a quantum protocol $\Pi$ of cost $d=o(n)$ that computes $f$ with error at most $0.1$ with respect to $\unif(\{0,1\}^n\times \{0,1\}^n)$.
    We will then construct a quantum protocol $\Pi'$ of cost $d'=O(d)=o(n)$ that computes $n$-bit $\IP$ with error at most $0.4$ with respect to $\unif(\{0,1\}^m\times \{0,1\}^m)$, a contradiction to the following folklore lower bound for the quantum distributional communication complexity of IP.
    \begin{lemma}[\cite{Kremer1995}]
    The quantum distributional communication complexity of the $n$-bit $\IP$ with respect to $\mu\coloneqq \unif(\{0,1\}^{n/2}\times \{0,1\}^{n/2})$ is $\QuantCC_{0.4}(\IP_{n/2},\mu)=\Omega(n)$.
    \end{lemma}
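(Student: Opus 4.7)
The plan is to prove this classical bound via the discrepancy method of Chor--Goldreich. Consider the signed communication matrix $M\in\{\pm 1\}^{2^{n/2}\times 2^{n/2}}$ with entries $M[x,y]\coloneqq (-1)^{\IP_{n/2}(x,y)}$; this is (a renaming of) the $2^{n/2}\times 2^{n/2}$ Hadamard matrix, so $MM^\top = 2^{n/2}\cdot I$. This orthogonality is essentially the only algebraic fact the proof uses.

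The first step is to bound the $\mu$-discrepancy of every combinatorial rectangle $R=A\times B$, defined by $\operatorname{disc}_\mu(R)\coloneqq 2^{-n}\cdot\bigl|\mathbf{1}_A^\top M\,\mathbf{1}_B\bigr|$. By Cauchy--Schwarz,
\[
\bigl|\mathbf{1}_A^\top M\,\mathbf{1}_B\bigr|\;\le\;\|\mathbf{1}_A^\top M\|_2\cdot\|\mathbf{1}_B\|_2\;=\;\sqrt{\mathbf{1}_A^\top MM^\top \mathbf{1}_A}\cdot\sqrt{|B|}\;=\;\sqrt{2^{n/2}\,|A|\,|B|}\;\le\;2^{3n/4},
\]
which yields the uniform bound $\operatorname{disc}_\mu(R)\le 2^{-n/4}$. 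This is Lindsey's lemma specialized to $\IP$.

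The second step converts this into a distributional communication lower bound in the standard way. Any deterministic protocol $\Pi$ of cost $d$ partitions $\{0,1\}^{n/2}\times\{0,1\}^{n/2}$ into at most $2^d$ rectangles $R_1,\dots,R_{2^d}$, each labeled by the single bit $b_i\in\{0,1\}$ that $\Pi$ outputs on $R_i$. The $\mu$-advantage of $\Pi$ over random guessing can then be bounded by summing the per-rectangle discrepancies:
\[
\sum_{i=1}^{2^d}\bigl(\mu(R_i\cap\IP_{n/2}^{-1}(b_i))-\mu(R_i\cap\IP_{n/2}^{-1}(1-b_i))\bigr)\;\le\;2^d\cdot\operatorname{disc}_\mu\;\le\;2^d\cdot 2^{-n/4}.
\]
If $\Pi$ errs with probability at most $0.4$ under $\mu$, the left-hand side is at least $1-2(0.4)=0.2$, so $2^d\cdot 2^{-n/4}\ge 0.2$, which rearranges to $d\ge n/4-\log_2 5=\Omega(n)$.

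The main obstacle here is essentially nonexistent: this is a textbook discrepancy argument. The only thing to verify is that the constants are compatible with the specific choice $\epsilon=0.4$, but the bound $\operatorname{disc}_\mu\le 2^{-n/4}$ shrinks geometrically in $n$ while the required advantage $1-2\epsilon=0.2$ is a fixed constant, so the conclusion goes through comfortably for all sufficiently large $n$.
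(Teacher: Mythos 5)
Your proof is correct. Note that the paper does not prove this lemma at all --- it is quoted as a known result with a citation to Chor and Goldreich --- so there is no in-paper argument to compare against; what you have written is the standard self-contained justification. Your two steps are exactly the textbook discrepancy method: the orthogonality $MM^\top=2^{n/2}I$ of the $\pm1$ inner-product (Hadamard) matrix gives, via Cauchy--Schwarz, Lindsey's bound $\operatorname{disc}_\mu(R)\le 2^{-n/4}$ for every rectangle, and the leaf-by-leaf decomposition of a cost-$d$ protocol converts this into advantage at most $2^d\cdot 2^{-n/4}$, which must exceed $1-2\epsilon=0.2$, forcing $d\ge n/4-\log_2 5$. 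All the normalizations check out (the matrix has $2^n$ entries, so dividing by $2^n$ is the right scaling for the uniform $\mu$), and the constant $0.4$ plays no special role beyond being bounded away from $1/2$. This is essentially the argument in Chor--Goldreich as well (their exponential-sum bound is Lindsey's lemma in disguise), so nothing further is needed.
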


    Before presenting the protocol $\Pi'$, we define the following notation:
    \begin{align*}
        p_{00}\coloneqq \Pr[\bm{u}\ne \IP(\bm{x}^1,\bm{y}^1)\land \bm{v}\ne \IP(\bm{x}^2,\bm{y}^2)],\\
        p_{01}\coloneqq \Pr[\bm{u}\ne \IP(\bm{x}^1,\bm{y}^1)\land \bm{v}= \IP(\bm{x}^2,\bm{y}^2)],\\
        p_{10}\coloneqq \Pr[\bm{u}= \IP(\bm{x}^1,\bm{y}^1)\land \bm{v}\ne \IP(\bm{x}^2,\bm{y}^2)],
    \end{align*}
    where $\bm{x}=(\bm{x}^1,\bm{x}^2),\bm{y}=(\bm{y}^1,\bm{y}^2)$ are uniform $n$-bit strings and $(\bm{u},\bm{v})\coloneqq \Pi(\bm{x},\bm{y})$.
    It follows that $p_{00}+p_{01}+p_{10}=\Pr[(\bm{u},\bm{v})\in f(\bm{x},\bm{y})]\ge 0.9$.

    Next, we specify the protocol $\Pi'$ as follows:
    Alice holds $\bm{x}'\sim \{0,1\}^m$ and Bob holds $\bm{y}'\sim \{0,1\}^m$.
    To compute $\IP(\bm{x}',\bm{y}')$, they first sample two uniform random $m$-bit strings $\bm{x}'',\bm{y}''\sim \{0,1\}^m$.
    Consider the following cases:
    \begin{itemize}
        \item $p_{00}\ge 0.3$: In this case, we have either $p_{00}+p_{01}\ge 0.6$ or $p_{00}+p_{10}\ge 0.6$.
        Suppose the former holds without loss of generality, as the other case can be dealt with a similar argument.
        Both parties simulate $\Pi$ on $\bm{x}\coloneqq (\bm{x}',\bm{x}'')$ and $\bm{y}\coloneqq(\bm{y}',\bm{y}'')$, obtain $(\bm{u},\bm{v})\coloneqq \Pi(\bm{x},\bm{y})$.
        Finally, they output $\neg \bm{u}$.
        We conclude that $\Pi'$ succeeds with probability
        \[
\Pr[\Pi'(\bm{x}',\bm{y}')=\IP(\bm{x}',\bm{y}')]=\Pr[\bm{u}\ne \IP(x',y')]=p_{00}+p_{01}\ge 0.6.
        \]
        \item $p_{00}<0.3$: In this case, we have $p_{01}+p_{10}\ge 0.6$.
        Similar to the previous case, both parties simulate $\Pi$ on $\bm{x}\coloneqq (\bm{x}',\bm{x}'')$ and $\bm{y}\coloneqq(\bm{y}',\bm{y}'')$, obtain $(\bm{u},\bm{v})\coloneqq \Pi(\bm{x},\bm{y})$.
        Since $(\bm{x}'',\bm{y}'')$ is the common information between both parties, they can compute $\bm{v}'\coloneqq \IP(\bm{x}'',\bm{y}'')$ without extra communication.
        Finally, they output $\neg(\bm{u}\oplus\bm{v}\oplus\bm{v}')$.
        We conclude that $\Pi'$ succeeds with probability
        \[
        \Pr[\Pi(\bm{x}',\bm{y}')=\IP(\bm{x}',\bm{y}')]=\Pr[\bm{u}\oplus \bm{v}\oplus \IP(\bm{x}',\bm{y}')\oplus \IP(\bm{x}'',\bm{y}'')=1]=p_{01}+p_{10}\ge 0.6.
        \]
    \end{itemize}

\bigskip

\subsection*{Acknowledgements}
We thank Tom Watson and an anonymous STOC reviewer for pointing out a mistake in an earlier version of this paper.
This work was supported by the Swiss State Secretariat for Education, Research, and Innovation (SERI) under contract number
MB22.00026.

\bigskip

\DeclareUrlCommand{\Doi}{\urlstyle{sf}}
\renewcommand{\path}[1]{\small\Doi{#1}}
\renewcommand{\url}[1]{\href{#1}{\small\Doi{#1}}}
\bibliographystyle{alphaurl}
\bibliography{references}

 \end{document}